\definecolor{navyblue}{rgb}{0.0, 0.0, 0.5}
\newtheorem{thm}{Theorem}[section]
\newtheorem{prop}[thm]{Proposition}
\newtheorem{lem}[thm]{Lemma}
\newtheorem{cor}{Corollary}[section]
\newtheorem{conj}{Conjecture}
\theoremstyle{definition}
\newtheorem{rmk}{Remark}[section]
\newtheorem*{claim*}{Claim}
\newtheorem*{conj*}{Conjecture}
\newcommand{\C}{\mathbb{C}}
\newcommand{\E}{\mathbf{E}}
\newcommand{\oneb}{\mathbf{1}}
\renewcommand{\P}{\mathbf{P}}
\newcommand{\CN}{\mathcal{CN}}
\newcommand{\TV}{\mathrm{TV}}
\newcommand{\G}{\mathcal{G}}
\newcommand{\CG}{\mathcal{G}}
\newcommand{\ggtnk}{\G\G^T_{NK}}
\newcommand{\gsym}{\G_N^\mathrm{sym}}
\newcommand{\dsim}{\overset{d}{\sim}}
\newcommand{\q}{N}
\newcommand{\Uqq}{\mathbf{A}}
\newcommand{\Gqq}{\mathbf{G}}
\newcommand{\bpq}{U_{pq}} 
\newcommand{\dkl}{D_{\mathrm{KL}}}
\newcommand{\Tr}{\operatorname{Tr}}
\newcommand{\Wg}{\operatorname{Wg}}
\newcommand{\iz}{\mathbf{I}_0}
\newcommand{\Haf}{\operatorname{Haf}}
\newcommand{\sharpp}{{\normalfont\fontfamily{lmss}\selectfont \#P}}
\newcommand{\poly}{\operatorname{poly}}
\newcommand\numberthis{\stepcounter{equation}\tag{\theequation}}
\numberwithin{equation}{section}
\newcounter{tikznumber}
\newcommand{\tp}[1]{\includegraphics{gbs_hiding-figure\thetikznumber.pdf}\stepcounter{tikznumber}}
\begin{document}

\title{Proof of Hiding Conjecture in Gaussian Boson Sampling}

\author{Laura Shou}
\affiliation{Joint Quantum Institute and Department of Physics, University of Maryland, College Park, Maryland 20742, USA}  

\author{Sarah H. Miller}
\affiliation{Applied Research Laboratory for Intelligence and Security, University of Maryland, College Park, Maryland 20742, USA}

\author{Victor Galitski}
\affiliation{Joint Quantum Institute and Department of Physics, University of Maryland, College Park, Maryland 20742, USA}  

\begin{abstract}
Gaussian boson sampling (GBS) is a promising protocol for demonstrating quantum computational advantage. One of the key steps for proving classical hardness of GBS is the so-called ``hiding conjecture'', which asserts that one can ``hide'' a complex Gaussian matrix as a submatrix of the outer product of Haar unitary submatrices in total variation distance. In this paper, we  prove the hiding conjecture for input states with the maximal number of squeezed states, which is a setup that has recently been realized experimentally [Madsen et al., Nature {\bf 606}, 75 (2022)]. In this setting, the hiding conjecture states that a $o(\sqrt{M})\times o(\sqrt{M})$ submatrix of an $M\times M$ circular orthogonal ensemble (COE) random matrix can be well-approximated by a complex Gaussian matrix in total variation distance as $M\to\infty$. This is the first rigorous proof of the hiding property for GBS in the experimentally relevant regime, and puts the argument for hardness of classically simulating GBS with a maximal number of squeezed states on a comparable level to that of the conventional boson sampling of [Aaronson and Arkhipov, Theory Comput. {\bf 9}, 143 (2013)].
\end{abstract}
\maketitle

\section{Introduction and main results}

The promise of quantum advantage is a key motivation behind quantum computing. However, practically demonstrating quantum advantage with physical quantum computers remains a technological challenge.
In this regard, Gaussian boson sampling (GBS), based on the boson sampling protocol of \cite{aa}, has emerged as a promising method for experimentally realizing this advantage
\cite{LundGBS2014, RahimiKeshariGBS2015, HamiltonGBS2017, KruseGBS2019, qcagbs,rmp, exp2020, Zhong2021, madsen2022, deng2023gaussian, liu2025robust}.
Both boson sampling and Gaussian boson sampling involve preparing photons in an initial state---a Fock state of $N$ photons for conventional, or Fock, boson sampling, and a Gaussian state for GBS---then allowing them to interfere in an optical network of beamsplitters and phaseshifters described by a unitary matrix $U$, and then measuring the output distribution of the photons (Fig.~\ref{fig:gbs}).

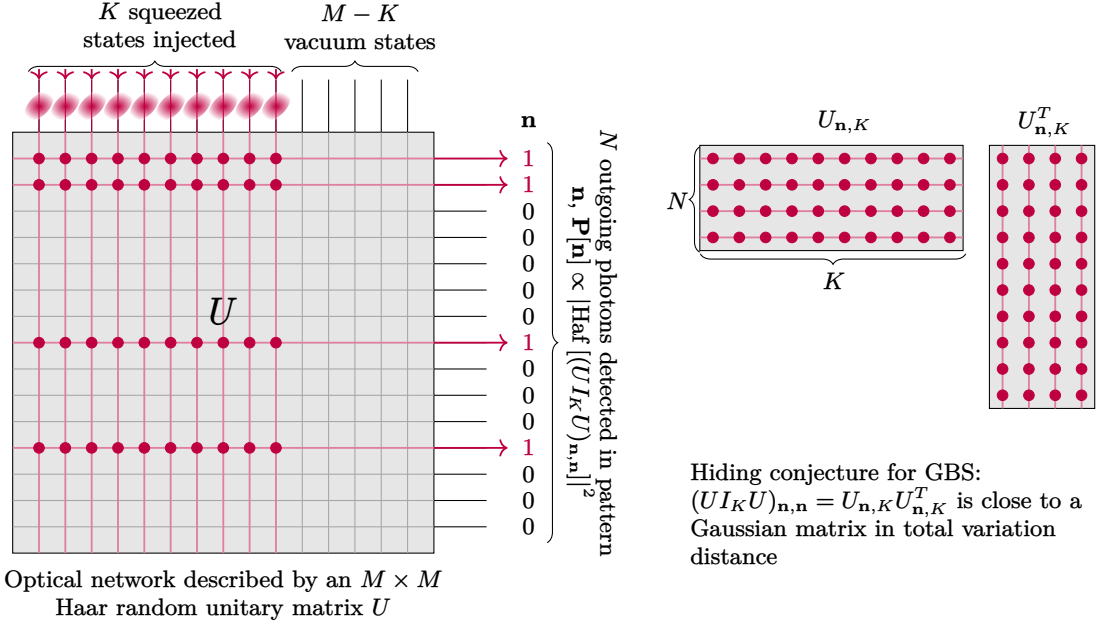
\begin{figure}[htb]
\tp{\begin{tikzpicture}[scale=1.4]
\def\cd{purple}
\def\cb{gray}
\begin{scope}
\def\rlw{.7}
\draw[fill=\cb!20] (-2,-2)--(2,-2)--(2,2)--(-2,2)--cycle;
\foreach \lh in {-7,...,7}{ 
\draw (\lh/4,2.5)--(\lh/4,2);
\draw (2,\lh/4)--(2.5,\lh/4);
}
\foreach \v in {-7,...,-5,-3,-2,-1,1,2,3,4,5}{ 
\draw[gray!70] (-2,\v/4)--(2,\v/4);
\node at (2.9,\v/4) {$0$};
}

\foreach \lh in {-7,...,2}{ 
\draw[inner color=\cd, rotate around={45:(\lh/4,2.25)},color=white] (\lh/4,2.25) ellipse (1.75mm and 1mm);
\draw[->,\cd, line width=\rlw] (\lh/4,2.6)--(\lh/4,2.5);
\draw[\cd] (\lh/4,2.5)--(\lh/4,2);
\draw[\cd!50, line width=\rlw] (\lh/4,2)--(\lh/4,-2);
}

\foreach \lh in {3,...,7}{ 
\draw[gray!70] (\lh/4,2)--(\lh/4,-2);
}

\foreach \v in {-4,0,6,7}{ 
\draw[\cd!50, line width=\rlw] (-2,\v/4)--(2,\v/4);
\foreach \lh in {-7,...,2}{
	\draw[color=\cd,fill=\cd] (\lh/4,\v/4) circle (.5mm);
}
\draw[\cd,->, line width=\rlw] (2,\v/4)--(2.7,\v/4);
\node[\cd] at (2.9,\v/4) {$1$};
}
\draw[decoration={brace,raise=3pt,aspect=.5,amplitude=4pt},decorate,xshift=1.5mm] (-2,2.5)--(.4,2.5);
\node at (-.6,3) {\parbox{2.3cm}{\centering $K$ squeezed \\states injected}};

\draw[decoration={brace,raise=3pt,aspect=.5,amplitude=4pt},decorate,xshift=1.5mm] (.5,2.5)--(1.7,2.5);
\node at (1.3,3) {\parbox{2.3cm}{\centering $M-K$ \\vacuum states}};

\draw[decoration={brace,raise=3pt,mirror,aspect=.5,amplitude=4pt},decorate,xshift=1cm] (2,-1.9)--(2,1.9);
\node[rotate=-90] at (3.5,0) {\parbox{6cm}{\centering $N$ outgoing photons detected in pattern $\mathbf n$, $\P[\mathbf n]\propto\left|\Haf\left[(UI_KU)_{\mathbf n,\mathbf n}\right]\right|^2$}};

\node at (2.9,2.1) {$\mathbf n$};
\node at (0,0.3) {\Large$U$};
\node at (0,-2.4) {\parbox{6.1cm}{\centering Optical network described by an $M\times M$ Haar random unitary matrix $U$}};
\end{scope}
\begin{scope}[xshift=6.4cm]
\def\rlw{.7}
\draw[fill=\cb!20,xshift=.125cm,yshift=-.125cm] (-2,1)--(.5,1)--(.5,2)--(-2,2)--cycle;
\draw[rotate=90,fill=\cb!20,xshift=1.375cm,yshift=-2.75cm-.125cm] (-2,1)--(.5,1)--(.5,2)--(-2,2)--cycle;
\foreach \v in {4,...,7}{
\draw[\cd!50, line width=\rlw] (-2+1/8,\v/4)--(.5+1/8,\v/4);
\draw[\cd!50, line width=\rlw] (\v/4,-.5-1/8)--(\v/4,2-1/8);
\foreach \lh in {-7,...,2}{
	\draw[color=\cd,fill=\cd] (\lh/4,\v/4) circle (.5mm);
	\draw[color=\cd,fill=\cd,yshift=1.25cm] (\v/4,\lh/4) circle (.5mm);
}
}
\node at (-.5,2.1) {$U_{\mathbf n,K}$};
\node at (1.4,2.1) {$U_{\mathbf n,K}^T$};
\draw[decoration={brace,raise=3pt,aspect=.5,amplitude=4pt},decorate,yshift=.1mm] (-1.8,.85)--(-1.8,1.85);
\node[left] at (-1.9,1.35) {$N$};
\draw[decoration={brace,raise=3pt,mirror,aspect=.5,amplitude=4pt},decorate,yshift=.7mm] (-1.87,.85)--(.65,.85);
\node[below] at (-.6,.75) {$K$};
\node at (0,-1.5) {\parbox{5.5cm}{\flushleft Hiding conjecture for GBS:\\ $(UI_KU)_{\mathbf n,\mathbf n}=U_{\mathbf n,K}U_{\mathbf n,K}^T$ is close to a Gaussian matrix in total variation distance}};
\end{scope}
\end{tikzpicture}}
\caption{A schematic of the Gaussian boson sampling (GBS) setup.  An initial Gaussian state with $K$ single-mode squeezed photon states is prepared. It then propagates through a linear-optical network described by an $M\times M$ unitary matrix $U$. The output modes are then measured in the photon number basis. We restrict to collision-free outputs (i.e., there is either 0 or 1 photons detected in each output channel). The Gaussian state does not have a definite photon number, hence the output has a variable number $N=\sum_{j=1}^M\mathbf n_j$ of outgoing photons, with the probability of a particular output distribution $\mathbf n$ given by \eqref{eqn:output}. This paper proves the hiding conjecture for GBS with $K=M$, which suffices for the GBS hardness argument under other standard assumptions (which are analogous to assumptions used in the hardness argument for boson sampling).}
\label{fig:gbs}
\end{figure}

Compared to other proposals for demonstrating quantum advantage, GBS has already been realized in large-scale experiments as demonstrated in \cite{exp2020, Zhong2021, madsen2022, deng2023gaussian, liu2025robust}.
The proposed advantage from GBS is rooted in the classical difficulty of calculating the \emph{hafnian} of a large matrix, which is defined for a symmetric $2n\times2n$ matrix $A$ as
\begin{align*}
\Haf(A)&=\sum_{\pi\in\mathcal P_2(2n)}\prod_{\{i,j\}\in\pi}A_{ij},
\end{align*}
where $\mathcal P_2(2n)$ denotes all pairings of $\{1,\ldots,2n\}$, i.e. all possible partitions of $\{1,\ldots,2n\}$ into subsets of size two. 
As an example, when $A$ is the adjacency matrix of a graph, the hafnian of $A$ gives the number of perfect matchings of the graph. 
The hafnian generalizes the permanent of a matrix, and is related to the Pfaffian of a matrix in the same way that the permanent is related to the determinant.
However, unlike the determinant and Pfaffian, which can be calculated efficiently, calculating permanents and hafnians is known to be \sharpp-hard \cite{Valiant1979,Aaronson2011}.

In Gaussian boson sampling, the hafnian appears naturally in the probabilities for the output distribution of the photons. 
The initial state for GBS is a Gaussian state with $1\le K\le M$ single-mode squeezed states \cite{LundGBS2014,RahimiKeshariGBS2015,HamiltonGBS2017,KruseGBS2019} and $M-K$ vacuum states.
After allowing the photons to interfere in the $M$-mode interferometer described by a unitary matrix $U$, one can measure the output distribution in the photon number basis and obtain a sample $\mathbf n\in\{0,1,\ldots\}^M$ describing the photon counts for each of the $M$ output modes. 
For investigating classical hardness of simulating GBS, one typically considers $K$ single-mode squeezed states all with the same squeezing parameter $r$, and only \emph{collision-free} outputs, meaning outputs $\mathbf n\in\{0,1\}^M$, which can be obtained by taking $r$ sufficiently small \cite{HamiltonGBS2017,KruseGBS2019,qcagbs}. Recent work \cite{bouland2023complexity} has also addressed the ``saturated'' regime in boson sampling with many collisions, which is particularly relevant for experimental realizations.

In the collision-free case with an initial state of $K$ squeezed modes followed by $M-K$ vacuum modes, the probability of observing a GBS output photon distribution $\mathbf n\in \{0,1\}^M$ is given by \cite{HamiltonGBS2017,KruseGBS2019,qcagbs}
\begin{align}\label{eqn:output}
\P[\mathbf n]&=\frac{\tanh(r)^N}{\cosh(r)^K}\left|\Haf\left[\left(UI_KU^T\right)_{\mathbf{n},\mathbf{n}}\right]\right|^2,
\end{align}
where $U$ is the $M\times M$ unitary matrix describing the optical network, $N=\sum_{j=1}^M\mathbf n_j$, $I_K=\mathbf{1}_K\oplus\mathbf{0}_{M-K}$ is a diagonal matrix, and $(UI_KU^T)_{\mathbf n,\mathbf n}$ denotes the submatrix of $UI_KU^T$ formed by taking all the rows and columns $j$ where $\mathbf n_j=1$.
Since the squeezed states always have an even number $N$ of photons, $(UI_KU^T)_{\mathbf n,\mathbf n}$ has even dimensions which is compatible with the definition of the hafnian.
The submatrix $(UI_KU^T)_{\mathbf n,\mathbf n}$ can also be written as the symmetric product $U_{\mathbf n,K}U_{\mathbf n,K}^T$, where $U_{\mathbf n,K}$ is the $N\times K$ submatrix of $U$ formed by taking all the rows where $\mathbf n_j=1$ and the first $K$ columns (Fig.~\ref{fig:gbs}).

A Gaussian boson sampling computer thus generates random instances $\mathbf n$ with probabilities proportional to the hafnian of submatrices of $UI_KU^T$.
To answer the question, `Can GBS be efficiently classically simulated?',
the natural problem to consider is using GBS outputs to \emph{approximate} the hafnian of a generic complex matrix \cite{aa,qcagbs,rmp}. It is known that approximating the hafnian to within a certain additive or multiplicative error is \sharpp-hard in the worst case \cite{aa,qcagbs,rmp}, and it is conjectured that this hardness also holds in the average case \cite{HamiltonGBS2017,KruseGBS2019,qcagbs,rmp}.
This average case hardness is needed to consider hardness of sampling from \emph{approximate} GBS, which allows for some error in the sampling distribution due to e.g. noise, photon loss, etc.
Then if GBS can approximate the hafnian of a random complex Gaussian matrix, and the above average-case hardness conjecture holds, this would provably demonstrate quantum advantage for GBS\footnote{unless the polynomial hierarchy collapses \cite{aa}} \cite{qcagbs,rmp}.
We note that assuming an average-case hardness conjecture like the above is typical---other random sampling schemes proposed to show quantum advantage \cite{rmp}, including the foundational boson sampling protocol of \cite{aa}, also require these types of complexity theoretic conjectures, which remain major open problems in the area \cite{rmp}.

However, in addition to the typical complexity theory hardness conjecture, GBS has another conjecture, called the \emph{hiding conjecture} \cite{qcagbs,rmp}, which plays a key role in the argument for classical hardness. 
Roughly speaking, if we want to approximate the hafnian of a random matrix $Z$---which is conjectured to be a classically hard problem---using a GBS computer or oracle, we first need to embed, or ``hide'', $Z$ inside a matrix of the form $UI_KU^T$, so that $|\Haf(Z)|^2$ will appear as a probability in the output distribution of GBS.
Due to the error allowed in approximate GBS, we need to be able to hide $Z$ in a strong sense so that a sampler could not identify where we hid $Z$ and adversarially corrupt the probabilities corresponding to that particular submatrix.
The hiding conjecture asserts that this is possible, in a strong sense, for complex Gaussian $Z$ and a larger matrix $UI_KU^T$ for $U$ a Haar-random unitary matrix.
This is a key step to show (assuming the complexity theory conjecture on hardness of approximating the hafnian of $Z$) the quantum advantage of Gaussian boson sampling. For further details, see \cite{aa,KruseGBS2019,qcagbs,rmp}.

As we explain in Section~\ref{subsec:fock}, the hiding conjecture for GBS in the experimentally relevant regime, where $K\propto M$ \cite{exp2020,Zhong2021,madsen2022,deng2023gaussian}, does not follow from the hiding property for Fock boson sampling from \cite{aa}. 
Due to this, the hardness argument for GBS experiments has so far lagged behind that of Fock boson sampling \cite{qcagbs}.
We also note that, partially due to the difficulty of the hiding conjecture for GBS, several alternate GBS schemes have been proposed, such as in \cite[{\S}VII]{KruseGBS2019} and the Bipartite Gaussian Boson Sampling of \cite{grier2022complexity}, which involve tuning the GBS device parameters in a specific way so as to circumvent the hiding conjecture.

To state the hiding conjecture more formally, let $\CN(0,\sigma^2)$ denote the complex Gaussian distribution with independent real and imaginary parts, each with mean zero and variance $\sigma^2/2$.
For the matrix $Z$ whose hafnian we want to approximate, we consider two related distributions:
\begin{itemize}
\item $\gsym$: The ensemble of $\q\times \q$ symmetric random matrices $\Gqq$ with $\CN(0,2)$ diagonal entries and $\CN(0,1)$ off-diagonal entries, with all entries independent modulo the symmetry requirement. 
\item $\ggtnk$: The ensemble of $N\times N$ matrices $GG^T$ where $G$ is an $N\times K$ matrix of i.i.d. $\CN(0,1/\sqrt{K})$ entries. The normalization of $G$ is chosen so that the entries of $GG^T$ have a nondegenerate limiting distribution as $K\to\infty$.
\end{itemize}
Note that $\ggtnk$ is \emph{not} a complex Wishart ensemble, as $G^T$ denotes the transpose, not the conjugate transpose.
Both ensembles $\gsym$ and $\ggtnk$ consist of symmetric complex matrices, and in certain regimes behave similarly (Corollary~\ref{cor:ggt}). We will reserve the notation $\Gqq$ for a matrix drawn from $\gsym$, and will use $GG^T$ to denote a matrix drawn from $\ggtnk$. 

Recall we are interested in submatrices $(UI_KU^T)_{\mathbf n,\mathbf n}=U_{\mathbf n,K}U_{\mathbf n,K}^T$, where $U_{\mathbf n,K}$ is the $N\times K$ submatrix of $U$ formed by taking all the rows where $\mathbf n_j=1$ and the first $K$ columns of a Haar unitary matrix $U$. Because the Haar measure is invariant under permutations of rows or columns, we may consider, without loss of generality, the distribution of the top left $N\times K$ submatrix of $U$, which we will denote by $U_{NK}$.
A precise form of the hiding conjecture for GBS, adapted from Conjecture 6.2 of \cite{qcagbs}, can then be stated as follows. 
While this formulation differs slightly from that of \cite{qcagbs}, we discuss afterwards how the changes are equivalent or sufficient.
\begin{conj}[hiding in GBS]\label{conj:hiding}
Let $U_{NK}$ be the top left $N\times K$ submatrix of an $M\times M$ Haar random unitary matrix, and let $Z=Z_{N,K}$ be a matrix with distribution given by either $\gsym$ or $\ggtnk$.
Then for $N\le K\le M$ and one of the distributions of $Z$, there exist polynomials $p,r$ such that for any $\delta>0$ and $M\ge p(N)/r(\delta)$, 
\begin{align}
d_\TV(MK^{-1/2}U_{NK}U_{NK}^T,Z)=O(\delta), \label{eqn:dtv}
\end{align}
where $d_\TV(\cdot,\cdot)$ denotes the total variation distance as defined in \eqref{eqn:tv}.
\end{conj}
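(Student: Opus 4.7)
The plan is to establish the $\TV$ bound with target $Z \sim \ggtnk$, uniformly in the full range $N \le K \le M$, by splitting into three regimes of $K$: small $K$ (polynomially smaller than $\sqrt{M}$), intermediate $K$, and large $K$ near the endpoint $K = M$. The small regime is handled by a Jiang-type Gaussian approximation of Haar unitary submatrices and the data processing inequality; the endpoint $K = M$ is handled by the paper's main theorem combined with a quantitative form of Corollary~\ref{cor:ggt} comparing $\gsym$ and $\ggtnk$; and the intermediate regime is handled by a CLT for $U_{NK}U_{NK}^T$ viewed as a sum of $K$ exchangeable rank-one symmetric matrices. Three applications of the triangle inequality then glue the ranges into a single uniform bound.

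\textbf{Small $K$.} When $NK \le M^{1/2 - c}$ for some $c > 0$, I would invoke a quantitative Jiang-type theorem for complex Haar unitaries of the form
$$d_\TV\!\bigl(\sqrt{M}\, U_{NK},\; G_{NK}\bigr) \;\le\; \varepsilon(N,K,M),$$
where $G_{NK}$ is $N \times K$ with iid $\CN(0,1)$ entries and $\varepsilon$ decays at a polynomial rate. The data processing inequality applied to the measurable map $X \mapsto K^{-1/2} X X^T$ then yields
$$d_\TV\!\bigl(MK^{-1/2}\, U_{NK}U_{NK}^T,\; K^{-1/2}\, G_{NK}G_{NK}^T\bigr) \;\le\; \varepsilon(N,K,M),$$
and the right-hand target has law exactly $\ggtnk$ after reparametrizing $G_{NK} = K^{1/4}\tilde G$ with $\tilde G$ an iid $\CN(0,1/\sqrt{K})$ matrix. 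This settles the small-$K$ range of the conjecture with $Z \sim \ggtnk$.

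\textbf{Large $K$ and triangle step.} For $K$ close to $M$ I would decompose $UI_KU^T = UU^T - U(I_M - I_K) U^T$. The $N \times N$ top-left block of the first term is a corner of the COE matrix $UU^T$, to which the paper's $K = M$ result applies, giving $d_\TV(\sqrt{M}(UU^T)_{NN},\, \gsym) \le \poly(N)/M^{a}$. When $M - K$ is polynomially small relative to $\sqrt{M}$, the correction term is an $N \times (M-K)$ Haar submatrix contribution, controlled by a second application of the Jiang-type theorem. A quantitative version of Corollary~\ref{cor:ggt} (i.e.\ $d_\TV(\gsym, \ggtnk) \le \poly(N)/K^{a'}$ for $K$ polynomially large in $N$) then converts the target from $\gsym$ to $\ggtnk$ via the triangle inequality.

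\textbf{Intermediate $K$ and main obstacle.} The hardest regime is the intermediate range $M^{1/2 + c_1} \le K \le M - M^{1/2 + c_2}$, where neither Jiang's approximation nor the COE-corner endpoint analysis applies directly. Here I would write
$$MK^{-1/2}\, U_{NK}U_{NK}^T \;=\; MK^{-1/2} \sum_{\ell=1}^K u_\ell u_\ell^T,$$
with $u_\ell \in \C^N$ the first $N$ entries of the $\ell$-th column of $U$, and apply a Stein-type CLT whose correlation inputs come from Weingarten-calculus bounds on mixed Haar moments; the exchangeability (but non-independence) of the $u_\ell$'s introduces correlation corrections which must be shown to vanish at a polynomial rate in $M$. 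The main obstacle is promoting such a bound from a smooth-test-function or Wasserstein distance, which is what Stein's method naturally delivers, to an honest total variation bound. This upgrade requires density-level control, most likely via a local CLT for the symmetric bilinear form $\sum_\ell u_\ell u_\ell^T$, and is the deepest technical step in extending the conjecture beyond the $K = M$ case treated in this paper.
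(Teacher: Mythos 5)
Your small-$K$ step is sound and essentially reproduces the paper's own sparse-regime argument (a Jiang-type TV approximation of $\sqrt{M}U_{NK}$ by an i.i.d.\ Gaussian matrix, followed by the data-processing inequality, as in Theorem~\ref{thm:sparse-hiding0}), and your endpoint ingredient at $K=M$ is the paper's Theorem~\ref{thm:hiding}. But the statement you set out to prove is Conjecture~\ref{conj:hiding} in its full range $N\le K\le M$, and your proposal does not close it. In the intermediate regime you only sketch a Stein's-method plan and concede that the upgrade from a smooth-test-function or Wasserstein bound to an honest total variation bound is unresolved; that upgrade is precisely the crux of the conjecture (the paper stresses repeatedly that weak or maximum-entrywise convergence for $U_{NK}U_{NK}^T$, e.g.\ via \cite{Jiang2009}, has long been known and is insufficient for hiding). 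The paper itself proves only the $K=M$ endpoint and the $NK=o(M)$ sparse case and explicitly states that its density-based method does not extend to $K<M$, so the regime you leave open is exactly the regime that makes this a conjecture.

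The large-$K$ step also fails as written. In the decomposition $(UI_KU^T)_{\mathbf n,\mathbf n}=(UU^T)_{NN}-U_{N,M-K}U_{N,M-K}^T$ the two terms are dependent functions of the same $U$; a Jiang-type TV bound on the \emph{marginal} law of the correction term gives no control of the law of the difference, and total variation distance is not stable under adding a small, correlated perturbation --- this is exactly the TV-versus-entrywise distinction the paper emphasizes via \cite{Jiang2006,JiangMa2019}, so the triangle-inequality glue you describe does not exist at the TV level without genuine density-level work. In addition, your conversion of the target from $\gsym$ to $\ggtnk$ invokes a ``quantitative version of Corollary~\ref{cor:ggt}'' of the form $d_\TV(\gsym,\ggtnk)\le\poly(N)/K^{a'}$; the paper proves no such bound (Corollary~\ref{cor:ggt} only provides a coupling with small entrywise difference, and the paper's footnote notes that TV closeness of $\gsym$ and $\ggtnk$ is merely expected), and you do not prove it either. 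So both the near-endpoint regime with $K<M$ and the intermediate regime remain open, and the proposal as a whole establishes only what the paper already establishes.
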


In this paper, we prove 
\begin{thm}\label{thm:hiding-conj}
Conjecture~\ref{conj:hiding} holds for $K=M$, with $Z$ drawn from $\gsym$.
\end{thm}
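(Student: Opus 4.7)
The plan is threefold: reduce the $\gsym$ target in the statement to the closely related $\ggtnk$ target via Corollary~\ref{cor:ggt}; exploit the fact that when $K=M$ the matrix $UU^T$ is exactly a circular orthogonal ensemble (COE) random matrix; and then compute the marginal density of the top-left $N\times N$ block of a COE and compare it pointwise to the explicit $\ggtnk$ density.

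By Corollary~\ref{cor:ggt} and the triangle inequality for $d_\TV$, it suffices to prove
\[
d_\TV\bigl(\sqrt{M}\,U_{NM}U_{NM}^T,\,GG^T\bigr)=O(\delta),
\]
where $GG^T\sim\ggtnk$ with $K=M$, i.e.\ $GG^T=M^{-1/2}\tilde G\tilde G^T$ for $\tilde G$ an $N\times M$ matrix of i.i.d.\ $\CN(0,1)$ entries (the factor $MK^{-1/2}=\sqrt M$ coming from the normalization in Conjecture~\ref{conj:hiding}).

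Next, since $K=M$ the matrix $UI_MU^T=UU^T$ is Haar-distributed on the symmetric homogeneous space $U(M)/O(M)$, that is, it is a COE random matrix. I would compute the marginal density of its top-left $N\times N$ block on the space of $N\times N$ symmetric complex matrices by integrating out the remaining $(M-N)(M+N+1)/2$-dimensional fiber of completions. By analogy with the classical Mehta--Edelman density for the top-left $N\times N$ block of a Haar unitary, which on its support is proportional to $\det(I_N-AA^*)^{M-2N}$, one expects a comparable determinantal expression for the COE submatrix, derivable via Weyl integration on $U(M)/O(M)$. One then compares this density pointwise to the $\ggtnk$ density, which after proper normalization is an explicit Gaussian $\propto\exp(-\tfrac12\Tr(SS^*))$-type density on symmetric $N\times N$ complex matrices. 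In the regime $N=o(\sqrt{M})$, Taylor-expanding $\log\det(I_N-\cdot)$ at the origin shows that the determinantal factor is close to a Gaussian factor, so the density ratio is uniformly close to $1$ on typical support. Combining this with a tail estimate yields the desired TV bound; alternatively one may estimate the KL divergence of the two densities and apply Pinsker's inequality.

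The main obstacle will be obtaining the COE submatrix density—and the associated normalization constants—sharply enough to yield convergence throughout the full $N=o(\sqrt{M})$ regime. The COE case is more delicate than the CUE case because the $O(M)$ quotient introduces extra Jacobian factors and modifies the invariant measure, making the fiber integration technically nontrivial. A careful accounting of lower-order corrections in the density expansion, rather than just leading-order moment matching, will be required to match the expected Jiang-style threshold; losing logarithmic factors here would restrict the proof to $N=o(\sqrt{M}/\log M)$ or worse rather than the full $o(\sqrt{M})$ regime.
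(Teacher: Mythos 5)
Your core strategy — identify $UU^T$ as a COE matrix when $K=M$, compute the marginal density of the $N\times N$ block (which indeed has the form $\propto\det(I_N - Z^\dagger Z)^{(M-2N-1)/2}$), compare it to a Gaussian density via a Taylor expansion of $\log\det$, bound KL divergence, and apply Pinsker — is exactly the route the paper takes, down to the recognized difficulties with the normalization constant and with pushing the regime to the full $N=o(\sqrt M)$. So the bulk of the plan is sound and matches the paper.

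However, the first of your three steps has a genuine gap, and it is in fact circular. You propose to swap the target from $\gsym$ to $\ggtnk$ by invoking Corollary~\ref{cor:ggt} and the triangle inequality for $d_\TV$. This fails on two counts. First, Corollary~\ref{cor:ggt} gives only convergence in probability of the entrywise maximum $\|\Gqq-GG^T\|_\infty$, not convergence in total variation distance; entrywise closeness of a coupling does not control $d_\TV$ (the paper even flags this distinction explicitly), so the triangle inequality for $d_\TV$ cannot be applied. Second, and more decisively, the paper's proof of Corollary~\ref{cor:ggt} \emph{uses} Theorem~\ref{thm:hiding-conj} (via the optimal-coupling characterization of $d_\TV$ applied to $d_\TV(\sqrt K W_{NN},\Gqq)$), so invoking the corollary here is circular. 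A related confusion appears later: you describe the $\ggtnk$ density as an explicit Gaussian $\propto e^{-\frac12\Tr(SS^*)}$, but that is the $\gsym$ density; the law of $GG^T$ is a much more complicated (Wishart-like, but with transpose rather than adjoint) distribution with no such simple form. The fix is simple: drop the reduction entirely and compare the COE submatrix density directly to the $\gsym$ density $g(Z)\propto e^{-\Tr(Z^\dagger Z)/2}$, which is the comparison you actually carry out in the remainder of your outline and is precisely what the paper does.

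You correctly anticipate that the unknown normalizing constant in the COE submatrix density is the delicate point for reaching $N=o(\sqrt M)$. The paper handles this without computing the constant: it introduces an auxiliary $\tilde f = \zeta f$ with a judiciously chosen explicit $\zeta$, shows the KL-like expectation with $\tilde f$ is $O(N^2/M)$ using singular-value moment estimates from Weingarten calculus, and then proves $|\log\zeta|=O(N^2/M)$ by a symmetric argument using the nonnegativity of $x\log x - x + 1$ together with tail bounds on the largest singular value. Your comment that lower-order corrections must be tracked carefully to avoid a $\log$-loss is accurate in spirit, though the actual bottleneck the paper must control is the normalization and the third-moment term, not a logarithmic factor.
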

This case of a maximal number of squeezed states $K=M$ has been recently experimentally realized in \cite{madsen2022}, making this result directly relevant for current GBS experiments regarding quantum advantage.
This is the first rigorous proof of the hiding property for GBS in the experimentally relevant regime, where in particular the hiding property cannot follow from hiding in Fock boson sampling.
We obtain Theorem~\ref{thm:hiding-conj} for the conjectured maximal size of $N=o(\sqrt{M})$. 

Note that we prove Conjecture~\ref{conj:hiding} with $Z\dsim\gsym$, rather than with $GG^T\dsim\ggtnk$ as the conjecture is typically stated such as in \cite[Conjecture 6.2]{qcagbs}.
By using a matrix $\Gqq\dsim\gsym$ instead of $GG^T\dsim\ggtnk$, the argument for classical hardness of simulating GBS relies on conjectured hardness of approximating $\Haf(\Gqq)$, rather than of $\Haf(GG^T)$. 
This may be a more natural question to consider, and, in any case we expect $\Haf(\Gqq)$ and $\Haf(GG^T)$ to be a similar hardness to approximate, since we have:
\begin{cor}\label{cor:ggt}
Let $N=N_K=o(\sqrt{K}/\log K)$. Then there are joint distributions of $(\Gqq,GG^T)$ such that $\Gqq\dsim\CG^\mathrm{sym}_N$, $GG^T\dsim\ggtnk$, and
\begin{align}
\|\Gqq-GG^T\|_\infty\equiv\max_{1\le i,j\le N}\left|\Gqq_{ij} - (GG^T)_{ij}\right|\to0,
\end{align}
in probability as $K\to\infty$.
\end{cor}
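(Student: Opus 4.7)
My plan is to construct a coupling of $\Gqq\dsim\gsym$ and $GG^T\dsim\ggtnk$ by entrywise Gaussian approximation of $GG^T$, then bound the maximum deviation by a union bound. The first observation is that each entry of $GG^T$ decomposes as a sum of $K$ i.i.d.\ complex random variables: $(GG^T)_{ij}=\sum_{k=1}^K G_{ik}G_{jk}$ for $i\ne j$ and $(GG^T)_{ii}=\sum_{k=1}^K G_{ik}^2$. Since $G_{ik}\sim\CN(0,1/\sqrt K)$, a short moment calculation shows that each summand has mean zero with variance $1/K$ off-diagonal and $2/K$ on the diagonal, and vanishing pseudo-covariance (the latter from $\E G_{ik}^4=0$ by the symmetry of the real and imaginary parts). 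Combined with the fact that distinct entries modulo the symmetry $W_{ij}=W_{ji}$ are pairwise uncorrelated (immediate from independence of the $G_{ik}$), this verifies that $GG^T$ has the same first two complex-Gaussian moments as $\gsym$.

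Next, since each summand is a product of sub-Gaussians, hence sub-exponential with all moments bounded, the one-dimensional Komlós--Major--Tusnády strong approximation (equivalently a Skorokhod embedding argument) produces, for each fixed $(i,j)$, a coupling of $(GG^T)_{ij}$ with a complex Gaussian $Z_{ij}$ of matching variance such that $|(GG^T)_{ij}-Z_{ij}|=O(\log K/\sqrt K)$ on an event of probability at least $1-K^{-c}$ for any constant $c>0$. The key step is then to assemble these marginal couplings into a joint coupling under which $(Z_{ij})_{i\le j}\dsim\gsym$. One approach is to embed the symmetric-matrix-valued random walk $W^{(k)}:=\sum_{j\le k}g_jg_j^T$ into a symmetric-matrix Brownian motion whose terminal value is $\gsym$-distributed, using a multivariate (Einmahl--Zaitsev-type) KMT strong approximation and bounding the coupling error coordinate-wise; a second approach is to Gaussianize the entries sequentially via Rosenblatt-type conditional quantile transforms with auxiliary independent randomness to realize the $\gsym$-distributed target.

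Applying a union bound over the $N(N+1)/2$ distinct entries with the per-entry error bound and using $N=o(\sqrt K/\log K)$ then yields $\|\Gqq-GG^T\|_\infty\to 0$ in probability. The main obstacle lies in the joint coupling step: off-the-shelf multivariate strong approximations have dimension-dependent coupling error scaling polynomially in $N$ (roughly $N^2\log K/\sqrt K$), whereas reaching the stated range $N=o(\sqrt K/\log K)$ requires the joint coupling error to remain essentially of the same order as the marginal one. Exploiting the uncorrelated, second-Wiener-chaos structure of the entries of $GG^T$---so that the joint Gaussianization error does not accumulate across coordinates---is the essential ingredient needed to bridge this gap.
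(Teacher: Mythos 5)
Your approach is genuinely different from the paper's, and it contains a gap that you yourself identify but do not close. The paper does \emph{not} attempt any direct Gaussian approximation of the chaos entries $(GG^T)_{ij}$. Instead it routes through the COE matrix: starting from an $N\times K$ Gaussian matrix $G$ (extended to a $K\times K$ matrix $Z$), Gram--Schmidt on the rows of $Z$ produces a Haar unitary $V$, hence a COE matrix $W=VV^T$ whose top-left $N\times N$ block is exactly a COE submatrix. A result of Jiang (2009) gives a tail bound showing that $\sqrt{K}\,W_{NN}$ is entrywise close to $GG^T$ under this coupling. Then the paper's own Theorem~\ref{thm:hiding} (with $M=K$) gives $d_\TV(\sqrt{K}\,W_{NN},\Gqq)=O(N/\sqrt{K})$, which via the optimal-coupling characterization of total variation distance produces a coupling of $\sqrt{K}\,W_{NN}$ with $\Gqq$ that agrees except on an event of probability $O(N/\sqrt{K})$. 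Gluing these two couplings by conditional sampling yields the result by the triangle inequality. This is why the paper remarks that the corollary follows ``surprisingly'' from the hiding theorem even though no $U_{NK}$ appears in its statement: the hiding theorem, proved for a different purpose, is precisely the tool that supplies the exact-distribution coupling to $\gsym$ that you are struggling to construct directly.

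The concrete gap in your argument is the joint coupling step, and it is a real obstruction, not a technicality. The entries of $GG^T$ are pairwise uncorrelated but far from independent --- $(GG^T)_{ij}$ and $(GG^T)_{i\ell}$ share the factor $G_{ik}$ for every $k$ --- so per-entry KMT/Skorokhod couplings of $(GG^T)_{ij}$ to independent Gaussians $Z_{ij}$ cannot be assembled independently while preserving the joint law of $GG^T$; doing so changes the joint distribution. Multivariate strong approximation theorems (Einmahl--Zaitsev type) indeed lose factors polynomial in the ambient dimension $N(N+1)/2$, which destroys the range $N=o(\sqrt{K}/\log K)$. Your closing sentence (``exploiting the second-Wiener-chaos structure so that the error does not accumulate across coordinates'') names the difficulty but does not prove anything; there is no known strong-approximation theorem of that form for dependent second-chaos vectors with error uniform in dimension. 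In effect you would need to prove a new quantitative chaos-to-Gaussian coupling theorem, which is at least as hard as the paper's route and is not done here. For a correct write-up you should adopt the paper's strategy: construct $\sqrt{K}W_{NN}$ from $G$ by Gram--Schmidt, invoke Jiang's entrywise bound, invoke Theorem~\ref{thm:hiding} with $M=K$ plus the optimal TV coupling, and conclude by the triangle inequality.
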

The convergence in Corollary~\ref{cor:ggt} implies the entries of $\Gqq$ and $GG^T$ are very similar in the regime $N=o(\sqrt{M}/\log M)$. 
Since the hafnian is a direct polynomial function of the matrix entries, this entrywise closeness suggests that the problems of estimating $\Haf(\Gqq)$ and estimating $\Haf(GG^T)$ should be essentially the same.\footnote{We expect that for sufficiently small $N$ compared to $K$, that $\Gqq$ and $GG^T$ are close in total variation distance as $N,K\to\infty$ as well. This occurs for example for real Wishart matrices and the Gaussian orthogonal ensemble (GOE) \cite{RaczRichey2019}. Note however that even total variation convergence does \emph{not} imply convergence of moments of the hafnian.
Since this may have implications for anticoncentration, we note that one can check $\E|\Haf(\Gqq)|^2$ and $\E|\Haf(GG^T)|^2$, for $N=o(\sqrt{K})$, still have the same asymptotic behavior $2^nn^ne^{-n}$ as $n=N/2\to\infty$ up to a constant factor, using the evaluation in \cite{Ehrenberg2025transition} for the latter moment.
}
Then we are free to prove with either $\Gqq$ or $GG^T$.
Perhaps surprisingly, the proof of Corollary~\ref{cor:ggt} will follow from Theorem~\ref{thm:hiding-conj}, even though there is no $U_{NK}$ in the corollary.

Additionally, from an abstract viewpoint, proving convergence with a simple distribution $\Gqq$ which has independent entries (modulo symmetry) is preferable to proving closeness with a complicated, $K$-dependent and correlated distribution $GG^T$. 
Moreover there can essentially be only ``one'' limit for $\sqrt{M}U_{NM}U_{NM}^T$, in the sense that if $Y_N$ and $Z=Z_N$ both satisfy \eqref{eqn:dtv}, then $d_\TV(Y_N,Z_N)\to0$ as well.
So if the hiding conjecture for $K=M$ holds with $Z$ drawn from $\ggtnk$, then Theorem~\ref{thm:hiding-conj} implies we must have $d_\TV(\Gqq,GG^T)\to0$, making the choice of the $Z$ distribution in Conjecture~\ref{conj:hiding} irrelevant.

The other change in Conjecture~\ref{conj:hiding} from \cite{qcagbs} is the allowance of a polynomial $r(\delta)$ instead of just $\delta$. But as noted in \cite{aa}, it only matters that we can take $M$ to be a polynomial in $N$ and $1/\delta$, so there is no loss in allowing for example $r(\delta)=\delta^2$ instead.

\subsection{Comparison with hiding in Fock boson sampling}\label{subsec:fock}
In this section, we explain why the hiding property \cite[Theorem 5.1]{aa} for Fock boson sampling is not sufficient for Conjecture~\ref{conj:hiding} or Theorem~\ref{thm:hiding-conj}.
The hiding theorem for Fock boson sampling \cite{aa}, which we restate precisely in Theorem~\ref{thm:fhiding}, states that a $o(M^{1/5})\times o(M^{1/5})$ submatrix of an $M\times M$ Haar unitary matrix $U$ is close, in total variation distance, to a matrix of i.i.d. standard complex Gaussian random variables. 
This can be extended up to the conjectured maximal size $o(\sqrt{M})\times o(\sqrt{M})$ submatrix as we discuss in Section~\ref{subsec:other} and Appendix~\ref{app:sparse}.
This Fock boson sampling hiding property then does imply Conjecture~\ref{conj:hiding} in a certain sparse regime \cite{qcagbs}, $N=K=o(\sqrt{M})$. Yet this is not the regime of interest for large-scale experimental realizations of GBS, which have $K=cM$ \cite{exp2020,Zhong2021,madsen2022,deng2023gaussian}. (If $K$ is taken too small, then it is hard to obtain a large enough photon number $N$. Moreover, large $K\gg N^2$ is favorable for the anticoncentration results of \cite{Ehrenberg2025transition,Ehrenberg2025second}, which also provide evidence for hardness of approximation.) 
By taking $K=cM$, the hiding property for GBS now involves a large rectangular $N\times K$ submatrix $U_{NK}$, which becomes too large to approximate by Gaussians in total variation distance (cf. Fig.~\ref{fig:submat} and Section~\ref{subsec:other}).
The experimental realizations of GBS \cite{exp2020,Zhong2021,madsen2022,deng2023gaussian}, with $K=cM$, therefore fall outside of the ``sparse regime.'' 
In fact, for $K=M$, the submatrix $U_{NK}$ itself is maximally \emph{far} from Gaussian in total variation distance due to the row normalization requirement: for any $N\ge1$ one has $d_\TV(\sqrt{M}U_{NM},G)=1$, for $G$ an $N\times M$ matrix of i.i.d. standard complex Gaussians. Therefore we see the hiding property for $U_{NK}U_{NK}^T$ proved in Theorem~\ref{thm:hiding-conj} cannot follow from a hiding property for $U_{NK}$.

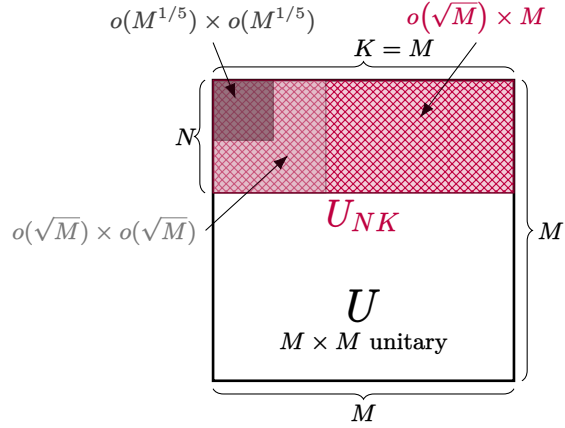
\begin{figure}[htb]
\tp{\begin{tikzpicture}
\def\cd{purple};
\def\cb{white};
\def\cs{darkgray};
\def\cmed{darkgray};

\draw[line width=1] (-2,2)--(2,2)--(2,-2)--(-2,-2)--cycle;
\node at (0,-1) {\LARGE $U$};
\node at (0,-1.5) {$M\times M$ unitary};

\draw[pattern=crosshatch,pattern color=\cd,line width=0] (-2,2)--(2,2)--(2,.5)--(-2,.5)--cycle;
\def\smalls{.8cm};
\def\med{1.5cm};
\draw[fill=\cmed!20,opacity=.5] (-2,2)--++(\med,0)--++(0,-\med)--++(-\med,0)--cycle;
\draw[fill=\cs,opacity=.55] (-2,2)--++(\smalls,0)--++(0,-\smalls)--++(-\smalls,0)--cycle;
\draw[color=\cd] (-2,2)--(2,2)--(2,.5)--(-2,.5)--cycle;

\begin{scope}[{Latex[length=2mm]}-]
\draw (-1.7,1.75)--(-2,2.5) node[above,\cs] {$o(M^{1/5})\times o(M^{1/5})$};
\draw[xshift=5mm] (.25,1.5)--(1,2.5) node[above, \cd] {$o\big(\sqrt{M}\big)\times M$};
\draw (-1,1)--(-2.1,0) node[left,\cmed!70] {$o(\sqrt{M})\times o(\sqrt{M})$};
\end{scope}

\node[\cd] at (0,.2) {\Large$U_{NK}$};
\draw[fill=\cd,opacity=.2] (-2,2)--(2,2)--(2,.5)--(-2,.5)--cycle;

\draw[decoration={brace,raise=3pt,aspect=.5,amplitude=4pt},decorate] (-2,.5)--(-2,2);
\node[left] at (-2.1,1.25) {$N$};
\draw[decoration={brace,raise=3pt,aspect=.5,amplitude=4pt},decorate] (-2,2)--(2,2);
\node[above right] at (-.25,2.2) {$K=M$};
\draw[decoration={brace,raise=3pt,mirror,aspect=.5,amplitude=4pt},decorate] (2,-2)--(2,2);
\node[right] at (2.2,0) {$M$};
\draw[decoration={brace,raise=3pt,mirror,aspect=.5,amplitude=4pt},decorate] (-2,-2)--(2,-2);
\node[below] at (0,-2.2) {$M$};
\end{tikzpicture}}
\caption{Comparison of the submatrix sizes considered for Fock boson sampling (gray) and Gaussian boson sampling in Theorem~\ref{thm:hiding-conj} (cross-hatched, purple). 
Hiding for Fock boson sampling involves submatrices of size $o(M^{1/5})\times o(M^{1/5})$ (or up to $o(\sqrt{M})\times o(\sqrt{M})$), which look Gaussian by \cite[Theorem 5.1]{aa}. However, hiding for Gaussian boson sampling in Theorem~\ref{thm:hiding-conj} involves the outer product of a much larger $N\times K=o(\sqrt{M})\times M$ rectangular submatrix $U_{NK}$, which itself \emph{cannot} be close to Gaussian in TV distance---due to the row normalization requirement, when $K=M$, then $d_\TV(\sqrt{M}U_{NK},G)=1$ for any $N\ge1$. Theorem~\ref{thm:hiding-conj} says that even though $U_{NK}$ itself is far from Gaussian, the outer product $U_{NK}U_{NK}^T$ nevertheless looks like a Gaussian matrix $\Gqq\dsim\gsym$.
}\label{fig:submat}
\end{figure}

\subsection{Precise formulation of main result}

We now give a more precise form of our main result Theorem~\ref{thm:hiding-conj}.
First recall the total variation (TV) distance between two probability measures $\mu$ and $\nu$ on a measure space $(E,\mathcal E)$ is defined as
\begin{align}\label{eqn:tv}
d_\TV(\mu,\nu)&\equiv\sup_{A\in\mathcal E}|\mu(A)-\nu(A)|.
\end{align}
When $\mu$ and $\nu$ have densities $f$ and $g$ respectively with respect to a measure $d\lambda$ on $E$, then
\begin{align}\label{eqn:tv2}
d_\TV(\mu,\nu)&=\frac{1}{2}\int_E|f(x)-g(x)|\,d\lambda(x).
\end{align}
For random variables $Z$ and $W$, we will write $d_\TV(Z,W)$ to mean the total variation distance between their distributions.

For Gaussian boson sampling, when $K=M$, the matrix $U_{NK}U_{NK}^T$ is distributed as a submatrix of a circular orthogonal ensemble \cite{Dyson1962} (COE) random matrix. 
This ensemble consists of symmetric unitary matrices, with a COE matrix being distributed as $UU^T$ where $U$ is a standard Haar-random unitary matrix. Therefore the hiding conjecture when $K=M$ reduces to proving that a submatrix of a COE matrix is close to a Gaussian matrix in total variation distance. We thus restate Theorem~\ref{thm:hiding-conj} more precisely as follows:

\begin{thm}[hiding in GBS with $K=M$]\label{thm:hiding}
Let $\Uqq$ be the upper left $\q\times\q$ submatrix of an $M\times M$ COE matrix $U$, and let $\Gqq\dsim\gsym$. 
Then for $\q=o(\sqrt{M})$, as $M\to\infty$,
\begin{align}
d_\TV(\sqrt{M}\Uqq,\Gqq)&\le O(\q/\sqrt{M}).
\end{align}
\end{thm}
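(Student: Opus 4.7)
The plan is to derive the explicit density of $\Uqq$ with respect to Lebesgue measure on complex symmetric matrices, expand its logarithm against the Gaussian target density $f_{\Gqq}$, and convert the resulting density proximity to a total variation bound via Pinsker's inequality.

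First I would establish that the density of the top-left $N\times N$ principal submatrix $\Uqq$ of an $M\times M$ COE matrix takes the Jacobi-type form
\begin{equation*}
f_{\Uqq}(C)\propto \det(I_N-C^*C)^{(M-2N-1)/2},\qquad \|C\|_{\mathrm{op}}<1,
\end{equation*}
the COE analogue of the classical $\det(I_N-B^*B)^{M-2N}$ density for the top-left block of a Haar unitary (which in turn can be verified against the $N=1$ case, where it recovers the beta-type density $(1-|C|^2)^{(M-3)/2}$). This can be derived via the polar parametrization $U_{NM}=(XX^*)^{-1/2}X$ for $X$ an $N\times M$ matrix of i.i.d.\ $\CN(0,1)$ entries, together with a Jacobian computation for the joint map $X\mapsto (XX^*,U_{NM}U_{NM}^T)$; alternatively one can integrate the Haar density over the $O(M)$ fiber of $U\mapsto UU^T$. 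Rescaling $Y=\sqrt{M}\,\Uqq$, the density of interest becomes $f_Y(Y)\propto \det(I_N-Y^*Y/M)^{(M-2N-1)/2}$ on $\|Y\|_{\mathrm{op}}<\sqrt{M}$, to be compared with $f_{\Gqq}(Y)\propto \exp(-\tfrac{1}{2}\Tr(Y^*Y))$.

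Next, I would work on the good event $\mathcal{E}=\{\|Y\|_{\mathrm{op}}\le c\sqrt{N}\}$, which carries mass $1-e^{-cN}$ under both measures by operator-norm concentration for symmetric complex Gaussian matrices. On $\mathcal{E}$, expanding via $\log\det(I-X)=-\sum_{k\ge 1}\Tr(X^k)/k$ and observing that the leading $-\tfrac{M-2N-1}{2M}\Tr(Y^*Y)$ cancels the Gaussian exponent up to an $O(N/M)$ piece yields
\begin{equation*}
\log\frac{f_Y(Y)}{f_{\Gqq}(Y)}=\log\frac{Z_{\Gqq}}{Z_Y}+\frac{2N+1}{2M}\Tr(Y^*Y)-\frac{1}{4M}\Tr((Y^*Y)^2)+(\text{higher order}),
\end{equation*}
where the omitted $k\ge 3$ terms scale as $\alpha M^{-k}\Tr((Y^*Y)^k)/k=O(N^{k+1}/M^{k-1})$ and are negligible under $N=o(\sqrt{M})$.

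To convert the density proximity to a TV bound, I would invoke Pinsker's inequality $d_{\TV}^2\le \dkl(f_{\Gqq}\|f_Y)/2$. The normalization $\E_{\Gqq}[f_Y/f_{\Gqq}]=1$ forces $\log(Z_{\Gqq}/Z_Y)=-\log\E_{\Gqq}[e^L]$, where $L$ denotes the non-constant part of the log-ratio; this identity absorbs the $O(N^3/M)$ mean of $L$ and reduces the KL divergence at leading order to $\tfrac12\Var_{\Gqq}(L)$. Computing variances of trace polynomials under $\gsym$ via Isserlis/Wick expansions gives $\Var_{\Gqq}(\Tr(Y^*Y))=O(N^2)$ and $\Var_{\Gqq}(\Tr((Y^*Y)^2))=O(N^4)$, so $\Var_{\Gqq}(L)=O(N^4/M^2)$ and $d_{\TV}=O(N^2/M)$, which is at least as good as the claimed $O(N/\sqrt{M})$ under $N=o(\sqrt{M})$. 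Contributions from the bad event $\mathcal{E}^c$ are negligible by the aforementioned concentration.

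The principal obstacle is the first step: the COE submatrix density is less standard than its Haar counterpart and its derivation requires a careful Jacobian analysis, both because of the symmetry constraint $\Uqq=\Uqq^T$ and because of the non-injectivity of the map $U_{NM}\mapsto U_{NM}U_{NM}^T$, whose fiber is the real orthogonal group $O(M)$ acting by right multiplication. A secondary subtlety lies in Step 3: the individually-large $O(N^3/M)$ mean shifts from the two leading trace terms must cancel correctly against the normalization $\log(Z_{\Gqq}/Z_Y)$, so that only the $O(N^2/M)$ fluctuation survives to govern the TV distance—this requires tracking the expansion to second order and verifying that the chi-squared identity collapses to the fluctuation-only bound.
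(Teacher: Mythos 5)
Your overall architecture is in the right family as the paper's (Jacobi-type COE submatrix density $\det(I_N - Z^\dagger Z)^{(M-2N-1)/2}$, expand the log-ratio, control the unknown normalization constant indirectly, finish with Pinsker). However, there is a genuine gap in the direction of the Kullback--Leibler divergence that makes the argument as stated break down.

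You apply Pinsker's inequality in the form $d_{\TV}^2\le\tfrac12\dkl(f_{\Gqq}\,\|\,f_Y)$, taking the expectation under the Gaussian law $\Gqq$. But this reverse KL divergence is identically $+\infty$: the Gaussian $\Gqq$ assigns positive probability to the event $\{\lambda_{\max}(Z^\dagger Z)\ge M\}$ on which the rescaled COE-submatrix density $f_Y$ vanishes, so $\int g\log(g/f_Y)$ contains a non-null region where the integrand is $+\infty$. The paper sidesteps this by working with the \emph{forward} divergence $\dkl(\sqrt{M}\Uqq\,\|\,\Gqq)=\int f\log(f/g)$, which is automatically finite because the COE-submatrix law is absolutely continuous with respect to the full-support Gaussian. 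Introducing the good event $\mathcal{E}=\{\|Y\|_{\mathrm{op}}\le c\sqrt{N}\}$ does not rescue the argument as written: Pinsker's inequality is a global statement between the two unconditional distributions, and an infinite quantity is not made finite by a remark that the bad event is rare. A repair would require either (i) switching to the forward KL and then computing $\E_Y[L]$ under the COE submatrix law via Weingarten calculus, as the paper does in Lemma~\ref{lem:submatrix-sv}; or (ii) conditioning \emph{both} measures on $\mathcal{E}$ first, carefully accounting for the restriction in both the densities and the normalization identity, and then adding back $d_{\TV}$ errors from $\mathcal{E}^c$, none of which is spelled out.

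The second consequence is that your cumulant trick $\log(Z_{\Gqq}/Z_Y)=-\log\E_{\Gqq}[e^L]$ cannot reduce the KL to $\tfrac12\Var_{\Gqq}(L)$ here. The log-ratio $L$ contains the indicator $\iz(\lambda_{\max}<M)$, so $L=-\infty$ on a non-null set under $\Gqq$; hence $\E_{\Gqq}[L]$ and $\Var_{\Gqq}(L)$ are not finite. The identity $\E_{\Gqq}[e^L]=Z_Y/Z_{\Gqq}$ itself is still valid and is a clean way to express the normalization, but its usefulness in a Taylor/cumulant expansion requires $L$ to be integrable. The paper instead proves $|1-\zeta|=O(N^2/M)$ for the undetermined constant using the elementary nonnegativity of $x\log x-x+1$ together with the eigenvalue moment and tail bounds (Proposition~\ref{prop:zeta}), which is a way to achieve the same goal while staying in the forward-KL framework. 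Your conjectured final bound $O(N^2/M)$ is actually stronger than the paper's $O(N/\sqrt{M})$ for $N=o(\sqrt M)$, but it rests on the broken variance reduction; even if one executed the forward-KL route correctly one would not obviously recover that stronger rate, since the dominant term then comes from the \emph{difference} $\E_Y[\Tr(Y^\dagger Y)^k]-\E_{\Gqq}[\Tr(Y^\dagger Y)^k]$ of COE versus Gaussian moments rather than a Gaussian variance.
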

The main difficulty is that the convergence is in the total variation distance. Closeness of $\sqrt{M}\Uqq$ to complex Gaussian in weaker senses, such as in distribution, has been long known \cite{FriedmanMello1985,CollinsStolz2008,Jiang2009}. In fact \cite{FriedmanMello1985} already noted the closeness of the density of $\Uqq$ to that of a Gaussian distribution for sufficiently large $M$.
However, these weaker forms of convergence are not sufficient for the hiding property and subsequent hardness argument \cite{aa,rmp}. Moreover, as shown in \cite{Jiang2006}, closeness of a submatrix in total variation distance can be very different than in other senses of convergence. In particular, for a square submatrix of a Haar orthogonal matrix, one can approximate only up to a size $o(\sqrt{M})\times o(\sqrt{M})$ submatrix in total variation distance, but one can approximate an entire $M\times o(\frac{M}{\log M})$ submatrix in the weak or maximum entrywise distance \cite[Table 1]{JiangMa2019}.

The proof of Theorem~\ref{thm:hiding} follows that of \cite[Theorem 1]{JiangMa2019}, which involves bounding the Kullback--Leibler (KL) divergence of the two distributions. One notable difference is that we do not determine the normalizing constant in the formula for the density of $\sqrt{M}\Uqq$ in Theorem~\ref{lem:submatrix-density}.
The normalizing constant appears complicated to calculate, and by skipping it we also skip some detailed asymptotic analysis, at the cost of adding Proposition~\ref{prop:zeta} to handle the unknown normalizing constant.
For this, we combine the method of \cite{aa}, which also did not determine the normalizing constant, but as a byproduct had to assume a smaller size submatrix, with KL divergence-like estimates to prove the TV distance convergence up to the expected maximal size $o(M^{1/2})$ of the square submatrix $\Uqq$.
Being able to skip calculating the normalizing constant without losing out on the submatrix size, may be useful for proving convergence for other submatrix distributions as well.

\begin{rmk}
After uploading the original version of this manuscript, we found the integral for the normalizing constant for the density in Theorem~\ref{lem:submatrix-density} is calculated in the 1963 book by Hua \cite[Theorem 2.3.1]{Hua1963}, as
\begin{align}
c_{M,N}'&=\frac{(M-2N)(M-2N+1)\cdots(M-N-1)}{2^N\pi^{N(N+1)/2}}\frac{\Gamma(M-N+1)\Gamma(M-N+2)\cdots\Gamma(M-1)}{\Gamma(M-2N+2)\Gamma(M-2N+4)\cdots\Gamma(M-2)}.
\end{align}
Therefore, Proposition~\ref{prop:zeta} could be replaced with an asymptotic expansion of $c_{M,N}'$ similar to \cite[Lemma 2.7]{JiangMa2019}.
However, we keep the proposition and proof with the undetermined normalizing constant since it may be useful for other random matrix ensembles, as it appears typical (and easier) to derive submatrix densities without the precise normalizing constant, as done in \cite{FriedmanMello1985,Eaton1983book,Beenakker1997random,Forrester-book,collins2003thesis}.
Additionally, we avoid some computational work involving the normalizing constant, in exchange for the different types of estimates in Proposition~\ref{prop:zeta}.
\end{rmk}

Returning to the argument for classical hardness, we also need a multiplicative estimate on the density functions to apply the instance generating method of \cite{aa}:
\begin{thm}[hiding v2]\label{thm:instance}
Let $\Uqq$ and $\Gqq$ be as in Theorem~\ref{thm:hiding}, and let $f$ be the density of $\sqrt{M}\Uqq$ and $g$ the density of $\Gqq$. Then for $\q=o(M^{1/3})$ and any $\q\times\q$ symmetric matrix $Z$,
\begin{align}
f(Z)&\le (1+O(\q^3/M))g(Z).
\end{align}
\end{thm}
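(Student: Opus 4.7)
The plan is to compare the two densities pointwise and then separately estimate the ratio of normalizing constants. From Theorem~\ref{lem:submatrix-density}, $\sqrt{M}\Uqq$ has a density of the form
\[
f(Z) = c_{M,N}\,\det(I_N - ZZ^*/M)^{\alpha}\,\mathbf{1}\{I_N \succ ZZ^*/M\},
\]
with an exponent $\alpha$ of order $M/2$ satisfying $M/2-\alpha = \Theta(N)$, while $g(Z) = c_G \exp(-\tfrac{1}{2}\|Z\|_F^2)$. Outside the support of $f$ the inequality is trivial, so I may assume $X := ZZ^*/M \prec I_N$ and write $f(Z)/g(Z) = (c_{M,N}/c_G)\,\rho(Z)$ with $\rho(Z) := \det(I_N - X)^{\alpha} \exp(\tfrac{M}{2}\Tr(X))$. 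The target estimate then reduces to bounding both factors above by $1+O(N^3/M)$.

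For the pointwise bound on $\rho$, I would use that $X\succeq 0$ and expand $\log\det(I_N - X) = -\sum_{k\ge 1}\Tr(X^k)/k$; every term is non-negative, so truncating after $k=2$ gives an upper bound
\[
\log\rho(Z) \le (M/2-\alpha)\,\Tr(X) \;-\; (\alpha/2)\,\Tr(X^2).
\]
Cauchy--Schwarz supplies $\Tr(X^2) \ge \Tr(X)^2/N$, so the right side is a concave quadratic in $s := \Tr(X)\ge 0$ maximised at $s^\star = N(M/2-\alpha)/\alpha = O(N^2/M)$, well inside the admissible range $[0,N)$. The maximum equals $(M/2-\alpha)^2 N/(2\alpha) = O(N^3/M)$ under $\alpha = \Theta(M)$ and $M/2 - \alpha = \Theta(N)$, so $\rho(Z) \le 1 + O(N^3/M)$ uniformly in $Z$.

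The control of the normalizing-constant ratio is the main obstacle. Integrating $\rho(Z)\le 1+O(N^3/M)$ against $g$ on the support of $f$, together with $\int f = \int g = 1$, yields only $c_{M,N}/c_G \ge 1-O(N^3/M)$, which is in the \emph{wrong} direction for our upper bound. A naive matching pointwise lower bound on $\rho$ fails near the boundary of the support, where $\det(I_N - X)^{\alpha}$ collapses to zero but $e^{-\|Z\|_F^2/2}$ does not. To close this gap I would invoke Proposition~\ref{prop:zeta}, which is designed precisely to bound the unknown normalizing constant from above, either by pairing the pointwise upper bound with a lower bound on a high-probability region or by extracting an $O(N^3/M)$ estimate from the $\Gamma$-function asymptotics of Hua's closed form mentioned in the Remark. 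Once $c_{M,N}/c_G \le 1+O(N^3/M)$ is in hand, multiplying the two estimates gives $f(Z)\le (1+O(N^3/M))^2 g(Z) = (1+O(N^3/M))g(Z)$, and the hypothesis $N = o(M^{1/3})$ is exactly what forces the error term $N^3/M$ to be $o(1)$.
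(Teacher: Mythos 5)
Your proposal is correct and follows essentially the same strategy as the paper: a pointwise bound on the ratio of the unnormalized densities, combined with Proposition~\ref{prop:zeta} to control the unknown normalizing constant (the paper packages the constant into $1/\zeta$ where $\tilde f = \zeta f$, rather than $c_{M,N}/c_G$, but this is cosmetic since $c_{M,N}/c_G = e^{-N^3/(2M)}/\zeta$). The one genuine variation is the optimization step: the paper maximizes $\lambda\mapsto(M-2N-1)\log(1-\lambda/M)+\lambda$ over each singular value separately, finding the maximizer $\lambda=2N+1$ and then doing an asymptotic expansion of $(1-(2N+1)/M)^{N(M-2N-1)}e^{2N^2+N}$, whereas you truncate the log-determinant series after the quadratic term, invoke $\Tr(X^2)\ge\Tr(X)^2/N$ by Cauchy--Schwarz, and maximize the resulting concave quadratic in $s=\Tr(X)$. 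Both yield the same $O(N^3/M)$ bound; your variant is perhaps slightly cleaner in that it avoids the explicit asymptotic expansion of the power of $(1-(2N+1)/M)$, while the paper's route makes the location of the maximum transparent. You also correctly flag that the naive $\int f=\int g=1$ argument only bounds the constant ratio from below, which is indeed the wrong direction, and that Proposition~\ref{prop:zeta} (or equivalently Hua's closed form) supplies the needed upper bound.
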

Conveniently, this follows from the proof of Theorem~\ref{thm:hiding} in the same way that the boson sampling analogue follows from the boson sampling hiding property \cite[Theorems 5.1, 5.2]{aa}.

\subsection{Other numbers of squeezed states}\label{subsec:other}

In this section we discuss the hiding conjecture for $K<M$. 
Previously, the only case where Conjecture~\ref{conj:hiding} was rigorously known to hold was in the sparse case $N=K=o(M^{1/5})$, where the hiding property for GBS follows from the hiding property for conventional boson sampling proved in \cite[Theorem 5.1]{aa}:
\begin{thm}[{\cite[Theorem 5.1]{aa}}]\label{thm:fhiding}
Let $U_{NN}$ be the top left $N\times N$ submatrix of an $M\times M$ Haar unitary matrix $U$, and let $G_N$ be an $N\times N$ matrix of i.i.d. standard complex Gaussians.
Let $M\ge\frac{N^5}{\delta}\log^2\frac{N}{\delta}$ for any $\delta>0$. Then $d_\TV(\sqrt{M}U_{NN},G_N)=O(\delta)$.
\end{thm}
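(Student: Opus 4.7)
The plan is to use the explicit density of a truncation of a Haar unitary, compare it pointwise to the standard complex Gaussian density on a high-probability region, and control the tail outside. A classical truncation computation (via Weyl integration combined with a QR change of variables, or the formulas in Forrester's book) gives that for $M \ge 2N$ the density of $W = \sqrt{M}\,U_{NN}$ with respect to Lebesgue measure on $\mathbb{C}^{N\times N}$ has the form
\[
 f(W) = c_{M,N}\det(I_N - WW^*/M)^{M-2N}
\]
on $\{W : I_N \succeq WW^*/M\}$, where $c_{M,N}$ is a Selberg-type normalizing constant expressible as a ratio of Gamma functions. The target is $g(W) = \pi^{-N^2}\exp(-\|W\|_F^2)$.

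The main step is a pointwise ratio comparison on the ``good region'' $R = \{W : \|W\|_{\mathrm{op}}^2 \le C N\log(N/\delta)\}$. On $R$ the matrix $A = WW^*/M$ satisfies $\|A\|_{\mathrm{op}} \ll 1$, so the Taylor series $\log\det(I-A) = -\Tr(A) - \tfrac12\Tr(A^2) - \cdots$ gives
\[
(M-2N)\log\det(I_N - A) = -\|W\|_F^2 + O\!\left(\frac{N\|W\|_F^2}{M} + \frac{\|W\|_F^4}{M}\right) = -\|W\|_F^2 + O\!\left(\frac{N^4\log^2(N/\delta)}{M}\right),
\]
using $\|W\|_F^2 \le N\|W\|_{\mathrm{op}}^2$. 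Combined with a Stirling expansion of $c_{M,N}\pi^{N^2} = 1 + O(\poly(N)/M)$, this shows $\log(f/g) = O(\poly(N)\log^2(N/\delta)/M)$ uniformly on $R$, hence $\int_R|f-g|\,dW \le \sup_R|f/g-1|\int_R g = O(\delta)$ under the hypothesis $M \ge N^5\log^2(N/\delta)/\delta$ (with room to spare, since this approach actually needs only $N^4$).

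For the complement $R^c$, standard concentration of the largest singular value of an $N\times N$ complex Gaussian yields $\int_{R^c}g\,dW = O(\delta)$; the matching bound $\int_{R^c}f\,dW = O(\delta)$ follows either from the density comparison on $\partial R$ combined with $\int f = 1$, or from the observation that singular values of a truncation of a unitary matrix are bounded by $1$, which confines $f$ to a compact set where the Gaussian tail is trivially small. Combining via \eqref{eqn:tv2} then delivers $d_\TV(f,g) = O(\delta)$.

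The main technical obstacle I foresee is the uniform asymptotic analysis of $c_{M,N}$: as a product of $O(N)$ Gamma factors, verifying $\log(c_{M,N}\pi^{N^2}) = O(\poly(N)/M)$ for $N$ as large as $o(\sqrt{M})$ requires careful Stirling bookkeeping. Aaronson and Arkhipov sidestep this in their original proof by generating the first $N$ columns of $U$ through Gram--Schmidt applied to i.i.d.\ complex Gaussian vectors in $\mathbb{C}^M$ and then bounding the successive projection corrections directly; this coupling avoids $c_{M,N}$ entirely at the cost of accumulating errors across $N$ projections, which is where the $N^5$ factor (rather than a sharper $N^4$) in the stated hypothesis appears to originate.
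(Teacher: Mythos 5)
Your proposal is a direct pointwise density-comparison argument and, modulo one genuine gap discussed below, it does deliver the stated bound. Note, however, that the paper does not itself prove Theorem~\ref{thm:fhiding}: it cites it from Aaronson--Arkhipov and instead proves the strengthened Theorem~\ref{thm:sparse-hiding} in Appendix~\ref{app:sparse} via the Kullback--Leibler/Pinsker route of Jiang--Ma. Both proofs start from the same truncated-unitary density (Proposition~\ref{prop:unitary-density}), but the paper bounds $\E[\log(f/g)]$ directly, plugging in exact singular-value moments (Lemma~\ref{lem:un-sv}) so that the leading terms cancel in \emph{expectation}, and then applies Pinsker. Your proposal bounds $\sup_R|f/g-1|$ on a high-probability region $R$, which must absorb the $O(N^2\log(N/\delta))$ size of $\|W\|_F^2$ into a worst-case estimate; that is what costs a polynomial factor in $N$ relative to the sharp $o(\sqrt{M})$ of Theorem~\ref{thm:sparse-hiding}. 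The slack in the stated hypothesis $M\ge N^5\delta^{-1}\log^2(N/\delta)$ comfortably covers this. Also, your closing attribution is off: Aaronson--Arkhipov's Theorem 5.1 proof is itself a density-ratio argument of exactly your flavor --- the paper's remark after Proposition~\ref{prop:unitary-density} notes they set $\zeta=\pi^{-q^2}/c_{M,q}$ and show $|\zeta-1|$ small without computing $c_{M,q}$ --- while the Gram--Schmidt coupling you describe is Jiang's technique for the weaker maximum-entrywise convergence, which is a separate thread in the paper.

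The genuine gap is your second proposed route for $\int_{R^c}f=O(\delta)$. After rescaling, the support of $f$ is $\{\|W\|_{\mathrm{op}}\le\sqrt{M}\}$; under your hypothesis $\sqrt{M}$ is vastly larger than the threshold $\sqrt{CN\log(N/\delta)}$ defining $R$, so the compactness of the support does not confine the mass of $f$ to $R$, and the Gaussian tail of $g$ is irrelevant. Your first route is the right one and should be made precise: since
\begin{align*}
\int_R|f-g|\;\le\;\sup_R\Bigl|\tfrac{f}{g}-1\Bigr|\int_R g=O(\delta)\quad\text{and}\quad\int_{R^c}g=O(\delta),
\end{align*}
one has $\int_{R^c}f=1-\int_R f=\bigl(1-\int_R g\bigr)-\bigl(\int_R f-\int_R g\bigr)=O(\delta)$, whence $d_\TV\le\tfrac12\bigl(\int_R|f-g|+\int_{R^c}f+\int_{R^c}g\bigr)=O(\delta)$. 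With that patch, and with the Stirling control of the normalization constant (which the paper supplies via Lemma~\ref{lem:kn}, or which can be read off Hua's formula), your proof is sound.
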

Note the hiding property for conventional boson sampling involves the square submatrix $U_{NN}$ of a Haar unitary $U$, while the hiding property for GBS involves the outer product $U_{NK}U_{NK}^T$ for a rectangular $N\times K$ submatrix $U_{NK}$.
In the sparse regime for GBS, when $N=K=o(M^{1/5})$, the hiding property for GBS follows from Theorem~\ref{thm:fhiding}, since $U_{NK}$ itself looks like a Gaussian $G$ in total variation distance, and so $U_{NK}U_{NK}^T$ will also be close to $GG^T$ in total variation distance. 
Actually, applying \cite{Jiang2009jacobi,JiangMa2019}, the hiding property in the sparse regime can be extended to its maximum extent, $K=O(M^{1-\varepsilon})$, for any $0<\varepsilon<1$, albeit with requiring $N=o(M^\varepsilon)$.
We write the details in Appendix~\ref{app:sparse}, which just consists of applying the proof method of \cite[Theorem 1(i)]{JiangMa2019} to the unitary submatrix density in \cite[Prop. 2.1]{Jiang2009jacobi}. As a by-product the method of \cite{Jiang2009jacobi,JiangMa2019} also improves Theorem~\ref{thm:fhiding} to the expected maximal size $N=o(\sqrt{M})$ for hiding in conventional boson sampling, with the required rate of convergence, as conjectured in \cite{aa} (see Theorem~\ref{thm:sparse-hiding}).
\begin{thm}[hiding in GBS for $K=o(M)$]\label{thm:sparse-hiding0}
Let $1\le N\le K<M$ and $NK=o(M)$. Then
\begin{align}
d_\TV(MU_{NK}U_{NK}^T,G_{NK}G_{NK}^T)&\le O\left(\sqrt{\frac{NK}{M}}\right).
\end{align}
In particular, if $K=O(M^{1-\varepsilon})$ and $N=O(M^{\varepsilon/2})$ for some $\varepsilon>0$, then the TV distance bound is $O(M^{-\varepsilon/4})$.
\end{thm}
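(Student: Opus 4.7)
The plan is to reduce Theorem~\ref{thm:sparse-hiding0} to a hiding statement for the rectangular submatrix $U_{NK}$ itself, and then pass to the outer product $U_{NK}U_{NK}^T$ via the data-processing inequality for total variation distance. This approach works only in the sparse regime $NK = o(M)$: unlike the $K = M$ case treated by Theorem~\ref{thm:hiding}, where row-normalization forces $\sqrt{M}\,U_{NM}$ to be maximally far from Gaussian in TV, in the sparse regime the rectangular submatrix $U_{NK}$ is itself close to an i.i.d.\ Gaussian matrix in TV.

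The first step is to invoke the rectangular generalization of the unitary-submatrix Gaussian approximation of \cite{Jiang2009jacobi,JiangMa2019}, as developed in Appendix~\ref{app:sparse} (the same machinery that gives Theorem~\ref{thm:sparse-hiding}). The method proceeds by computing the exact density of $\sqrt{M}\,U_{NK}$ via \cite[Prop.~2.1]{Jiang2009jacobi}, bounding its log-ratio against the standard complex Gaussian density on $N\times K$ complex matrices, and concluding through a Pinsker-type estimate. Applied to rectangular rather than square blocks, this yields the rectangular hiding bound
\begin{align*}
d_\TV\bigl(\sqrt{M}\,U_{NK},\,G_{NK}\bigr) \;\le\; O\bigl(\sqrt{NK/M}\bigr),
\end{align*}
valid whenever $NK = o(M)$, with $G_{NK}$ an $N\times K$ matrix of i.i.d.\ $\CN(0,1)$ entries.

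The conclusion of Theorem~\ref{thm:sparse-hiding0} is then one line. Since the map $A \mapsto AA^T$ from $N\times K$ complex matrices to $N\times N$ symmetric complex matrices is continuous, hence measurable, the data-processing inequality for TV distance gives
\begin{align*}
d_\TV\bigl(M\,U_{NK}U_{NK}^T,\,G_{NK}G_{NK}^T\bigr)
&= d_\TV\bigl((\sqrt{M}\,U_{NK})(\sqrt{M}\,U_{NK})^T,\,G_{NK}G_{NK}^T\bigr) \\
&\le d_\TV\bigl(\sqrt{M}\,U_{NK},\,G_{NK}\bigr) \;\le\; O\bigl(\sqrt{NK/M}\bigr).
\end{align*}
Substituting $K = O(M^{1-\varepsilon})$ and $N = O(M^{\varepsilon/2})$ gives $NK/M = O(M^{-\varepsilon/2})$, hence the claimed TV bound of $O(M^{-\varepsilon/4})$.

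The technical weight of the argument lies entirely in the first step, the rectangular hiding estimate. The square-case analysis of \cite{JiangMa2019} must be adapted: the effective dimension of the submatrix is now $NK$ rather than $N^2$, and one must track how the normalizing constants and Jacobian factors in the exact submatrix density depend on the aspect ratio $N/K$ in order to secure the sharp $\sqrt{NK/M}$ rate. Once the rectangular hiding bound is established, the reduction to $U_{NK}U_{NK}^T$ is immediate from a standard property of TV distance, and no additional analysis of the GBS-specific outer-product structure is required.
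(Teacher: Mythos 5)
Your proposal matches the paper's own proof in Appendix~\ref{app:sparse}: the paper first establishes the rectangular hiding bound $d_\TV(\sqrt{M}U_{NK},G_{NK})=O(\sqrt{NK/M})$ via the density formula of \cite[Prop.~2.1]{Jiang2009jacobi}, a KL-divergence computation in the style of \cite{JiangMa2019}, and Pinsker's inequality, then passes to $U_{NK}U_{NK}^T$ by the same data-processing inequality $d_\TV(h(X),h(Y))\le d_\TV(X,Y)$ you invoke. Your reduction and final parameter substitution are both correct and identical to the paper's argument.
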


However, as soon as we consider $K=cM$, even if $N$ is as small as $N=1$, then the $N\times K$ submatrix $U_{NK}$ should be \emph{too large} to be close to a Gaussian $G$ in TV distance---this is the case for submatrices of Haar orthogonal matrices \cite{JiangMa2019}, and the same proof method should carry over to unitary submatrices as well.
Assuming this, there is a transition for whether an $N\times K$ submatrix $U_{NK}$ looks Gaussian in TV distance as $M\to\infty$: for $NK=o(M)$ the TV distance goes to zero and $U_{NK}$ looks Gaussian, but for $NK=\Omega(M)$, the submatrix $U_{NK}$ does \emph{not} look Gaussian, i.e. the TV distance is bounded away from zero as $M\to\infty$ \cite{JiangMa2019}.
As previously noted, we can see that when $K=M$, in which case the rows of $U_{NK}$ are normalized, then $d_\TV(\sqrt{M}U_{NK},G)=1$, which is the maximum possible value for TV distance.

So for the experimental regime of $K=cM$, the hiding property for GBS involves $U_{NK}U_{NK}^T$ for $U_{NK}$ an $N\times K$ submatrix which does \emph{not} look Gaussian in TV distance. Nevertheless, Conjecture~\ref{conj:hiding} asserts the end product $U_{NK}U_{NK}^T$ may still look either like a symmetric Gaussian $\Gqq$, or like $GG^T$ for $G$ i.i.d. Gaussian.
Our main result Theorem~\ref{thm:hiding-conj} or \ref{thm:hiding} proves this for $K=M$.
Intuitively, the hiding property should be \emph{least likely} to hold when $K=M$, because then the individual $U_{NK}$ and $G$ are most different, and $U_{NK}$ has strong relations between entries due to the normalization and orthogonality requirements on its rows. 
Since the hiding property holds for $K=M$, then it seems intuitively plausible it should also hold for smaller $K<M$, where the entries of $U_{NK}$ looks more independent since they do not see the full normalization and orthogonality requirements.
Our proof for hiding in GBS for $K=M$ however makes use of the explicit form of the density for $U_{NK}U_{NK}^T$ (just like other related total variation distance proofs), and so it unfortunately does not imply hiding for $K<M$ despite the intuition that $K<M$ should be in some sense ``easier''.

Finally, we note that in \cite{HamiltonGBS2017,qcagbs}, the case $K=M$ is also discussed in relation to \cite{Jiang2009}, but the convergence in \cite{Jiang2009} is in terms of the weak or maximum entrywise distance. 
The proof using Gram--Schmidt orthonormalization in \cite{Jiang2009} also applies to $U_{NK}U_{NK}^T$ with $K<M$, so $\sqrt{M}U_{NK}U_{NK}^T$ also looks like $GG^T\dsim\ggtnk$ in the weaker maximum entrywise distance for $N=o(\sqrt{M}/\log M)$. But as previously discussed, this weaker distance can behave very differently from the total variation distance, and is not sufficient for the hiding conjecture or subsequent hardness application.

\subsection{Outline and notation}

The rest of the paper is organized as follows.
The main Theorem~\ref{thm:hiding} is proved through Sections~\ref{sec:coe} and \ref{sec:proof}, with some lemmas and their proofs deferred to Section~\ref{sec:lemmas}.
Section~\ref{sec:instance} gives the proofs of Corollary~\ref{cor:ggt} and Theorem~\ref{thm:instance}, and Appendix~\ref{app:sparse} gives the proof of Theorem~\ref{thm:sparse-hiding0}.

For notation, generic constants denoted by $C$ or $c$ may change from line to line. For random variables $X,Y$ we write $X\dsim Y$ to mean $X$ and $Y$ have the same distribution, and for $\mu$ a probability distribution we write $X\dsim\mu$ to mean $X$ has distribution given by $\mu$.

\section{COE preliminaries}\label{sec:coe}

We need a few properties of submatrices of COE matrices. 
The behavior of these submatrices has been of interest in quantum conductance and transport problems \cite{Beenakker1997random,Forrester-book}.
The key property we need is the explicit formula for the density of a submatrix of a COE matrix derived in \cite{FriedmanMello1985}. 
This is what makes proving convergence in total variation distance possible.
\begin{thm}[submatrix density, \cite{FriedmanMello1985}]\label{lem:submatrix-density}
Let $\Uqq$ be the upper left $\q\times\q$ submatrix  of an $M\times M$ COE matrix $U$. Then for $N\le M/2$, $\Uqq$ has a probability density function $f$ on the space of symmetric complex matrices given by
\begin{align}\label{eqn:submat-density}
f(Z)=c_{M,\q}'\det(I_\q-Z^\dagger Z)^{\frac{M-2\q-1}{2}}\iz(\lambda_\mathrm{max}(Z^\dagger Z)<1),
\end{align}
for a normalization constant $c_{M,\q}'$, and where $\iz(\cdot)$ denotes an indicator function and $\lambda_\mathrm{max}(Z^\dagger Z)$ the maximum eigenvalue of $Z^\dagger Z$.
\end{thm}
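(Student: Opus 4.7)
The plan is to follow the approach of Friedman--Mello, realizing the COE measure as a pushforward of Haar measure and then computing a Jacobian through a block decomposition.

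\textbf{Setup.} By the Autonne--Takagi factorization, every symmetric unitary can be written as $U = WW^T$ with $W \in U(M)$, and the COE measure is the pushforward of Haar measure on $U(M)$ under this map (well-defined because $W$ and $WO$ yield the same image for any $O \in O(M)$). Splitting $W$ into top $N$ rows $W_1$ and bottom $M-N$ rows $W_2$ immediately gives $\Uqq = W_1 W_1^T$.

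\textbf{Block decomposition.} I would write
\begin{align*}
U = \begin{pmatrix} Z & B \\ B^T & D \end{pmatrix}
\end{align*}
with $Z = Z^T$ the target submatrix and $D = D^T$. The unitarity $UU^\dagger = I$ gives the block constraints
\begin{align*}
ZZ^\dagger + BB^\dagger = I_N,\qquad ZB^* + BD^* = 0,\qquad B^T \bar B + DD^\dagger = I_{M-N}.
\end{align*}
The first identity forces $\lambda_{\max}(Z^\dagger Z) \le 1$, which produces the indicator $\iz$ in the claimed density. Given such a $Z$, parametrize $B = (I_N - ZZ^\dagger)^{1/2} V$ where $V$ is $N \times (M-N)$ with $VV^\dagger = I_N$; the middle identity then fixes $D$ on an $N$-dimensional subspace, leaving a residual $(M-2N)\times(M-2N)$ symmetric unitary block of freedom---which is exactly why the hypothesis $N \le M/2$ is needed.

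\textbf{Fiber volume yields the density.} Integrating Haar over the fiber above $Z$ is dimension-consistent: the COE has real dimension $M(M+1)/2$, $Z$ contributes $N(N+1)$, the complex Stiefel of $V$'s contributes $2N(M-N) - N^2$, and the residual COE contributes $(M-2N)(M-2N+1)/2$, which sum correctly. Since the Stiefel and residual-COE volumes are $Z$-independent, the entire $Z$-dependence comes from the Jacobian of $W_1 \mapsto W_1 W_1^T$ and the factor $(I - ZZ^\dagger)^{1/2}$ in the parametrization of $B$. Collecting these yields a constant times $\det(I - Z^\dagger Z)^{(M-2N-1)/2}$, as claimed. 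Following the convention emphasized in the excerpt, I would not attempt to track the overall constant $c'_{M,N}$ here and would instead defer that bookkeeping to the estimates used later (cf.\ Proposition~\ref{prop:zeta}).

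\textbf{Main obstacle.} The delicate point is pinning down the exponent $(M-2N-1)/2$. For the analogous Haar-unitary submatrix the exponent is the integer $M-2N$, reflecting a full complex Stiefel fiber; here the COE symmetry cuts the fiber down to a real submanifold of its complex counterpart, and the Jacobian therefore picks up only a half-power. A convenient cross-check, rather than grinding through the Jacobian directly, is to use the bi-invariance $Z \mapsto PZP^T$ for $P \in U(N)$---inherited from the COE invariance $U \mapsto (P\oplus I)U(P\oplus I)^T$---to argue that the density depends on $Z$ only through the eigenvalues of $Z^\dagger Z$. This reduces matters to the known Jacobi joint density of the singular values of $Z$, from which the exponent $(M-2N-1)/2$ can be read off and transferred back to the full density on symmetric $Z$.
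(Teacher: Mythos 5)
The paper itself does not prove this theorem; it is attributed to Friedman and Mello (1985), and the subsequent remark points to Hua (1963) for the normalizing constant. So there is no paper proof to compare against, and the right question is whether your proposal constitutes a proof. It does not: it is a plausible plan with a genuine gap exactly at the point that matters.

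Your setup and bookkeeping are sound. The Autonne--Takagi realization of the COE as the pushforward of Haar measure under $W\mapsto WW^T$, the block constraints from $U\bar U=I$ (noting $D^\dagger=D^*$ since $D$ is symmetric), the parametrization $B=(I_N-ZZ^\dagger)^{1/2}V$ on a complex Stiefel manifold, and the dimension count $N(N+1)+\bigl(2N(M-N)-N^2\bigr)+\tfrac{(M-2N)(M-2N+1)}{2}=\tfrac{M(M+1)}{2}$ are all correct. But then the proposal skips the only nontrivial step. You write that ``collecting these yields a constant times $\det(I-Z^\dagger Z)^{(M-2N-1)/2}$'' without carrying out the change-of-variables Jacobian from the COE measure (the bi-invariant measure on $U(M)/O(M)$) to the product coordinates $(Z,V,\text{residual})$, and without verifying that the fiber over a fixed $Z$ really does foliate as Stiefel $\times$ residual-COE with $Z$-independent volume. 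That Jacobian \emph{is} the proof; asserting the exponent is circular.

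The heuristic you offer in its place does not actually give the stated exponent. Halving the unitary-submatrix exponent $M-p-q=M-2N$ gives $(M-2N)/2$, not $(M-2N-1)/2$; the additional $-\tfrac12$ has a concrete source in the Jacobian (it is the same shift that appears in the real-orthogonal submatrix density $\det(I-Z^TZ)^{(M-p-q-1)/2}$, where it arises from the Gram-determinant factor), and a proof must produce it, not wave at ``real submanifold of the complex counterpart.'' Your fallback---reading the exponent off the Jacobi joint density of singular values and transferring back via $U(N)$-bi-invariance $Z\mapsto PZP^T$---is also not carried out and has a circularity risk: the Jacobi eigenvalue density for COE submatrices is itself usually derived from (or simultaneously with) the matrix density via precisely the Jacobian you are avoiding, and transferring back requires knowing the Jacobian of the Takagi/SVD decomposition of a complex symmetric matrix, which is yet another computation your sketch does not supply.

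In short: the scaffolding is right, but the load-bearing step---the Jacobian that produces $(M-2N-1)/2$---is missing, and the shortcut offered in its place neither yields the correct exponent by the stated heuristic nor avoids the underlying computation.
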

This gives the distribution of $\Uqq$ as $f(Z)\prod_{i\le j}dZ_{ij}$ on $\C^{\q(\q+1)/2}$.
Note that the density \eqref{eqn:submat-density} is essentially the same form as the density for the submatrix of a Haar-distributed real orthogonal matrix \cite{DEL1992,Eaton1989,Meckes-book}, although the integration space (and dimension) are different.
From \eqref{eqn:submat-density}, we can see the standard intuitive reason behind closeness of $\Uqq$ to Gaussian. Considering $\sqrt{M}\Uqq$, the density function is rescaled to be 
\begin{align}\label{eqn:approx}
f(Z)\propto \det\left(I_\q-\frac{Z^\dagger Z}{M}\right)^{\frac{M-2\q-1}{2}}=\prod_{j=1}^\q\left(1-\frac{\lambda_j(Z^\dagger Z)}{M}\right)^{\frac{M-2\q-1}{2}}\approx\prod_{j=1}^\q e^{-\lambda_j(Z^\dagger Z)/2}=e^{-\Tr(Z^\dagger Z)/2},
\end{align}
provided the eigenvalues $\lambda_j(Z^\dagger Z)$ of $Z^\dagger Z$ are sufficiently small for the exponential approximation. We will of course need much more precise estimates to show the TV distance \eqref{eqn:tv2} is small, which are slightly complicated by not knowing the normalization constant on the density.
We note that, despite the intuitive closeness of the density functions such as in \eqref{eqn:approx}, studying the TV distance in the related case of submatrices of Haar-distributed real orthogonal matrices still took many years to finally be fully resolved \cite{DiaconisFreedman1987,DEL1992,Jiang2006,JiangMa2019,Stewart2020}. Moreover, there are several different notions, besides total variation distance, in which the matrices can be close, and the behavior with respect to these other distances can greatly differ \cite{Meckes-book}, \cite[Table 1]{JiangMa2019}.
Therefore we will need much more careful analysis than \eqref{eqn:approx} to prove Theorem~\ref{thm:hiding}.

We also need the following eigenvalue estimates, which are the analogue of \cite[Lemmas 2.5, 2.6]{JiangMa2019} but for COE submatrices.
\begin{lem}[singular values]\label{lem:submatrix-sv}
Let $\Uqq$ be the upper left $\q\times\q$ submatrix  of an $M\times M$ COE matrix $U$, and let $\lambda_1,\ldots,\lambda_q$ be the eigenvalues of $\Uqq^\dagger\Uqq$.
Then for $\q=o(\sqrt{M})$ and $\q,M\to\infty$,
\begin{align}
\E \sum_{j=1}^\q \lambda_j &=\frac{\q^2+\q}{M}+O\left(\frac{\q^2}{M^2}\right),\label{eqn:lambda1}\\
\E \sum_{j=1}^\q \lambda_j^2&=\frac{2\q^3}{M^2}+O\left(\frac{\q^2}{M^2}\right),\label{eqn:lambda2}\\
\E \sum_{j=1}^\q \lambda_j^3&=O\left(\frac{\q^4}{M^3}\right).\label{eqn:lambda3}
\end{align}
\end{lem}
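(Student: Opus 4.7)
The plan is to reduce all three expectations to entry moments of the COE matrix $U$ and evaluate them via Weingarten calculus through the representation $U = VV^T$ with $V$ Haar on $U(M)$. The first moment is immediate: the entrywise second-moment formula $\E|U_{ij}|^2 = (1+\delta_{ij})/(M+1)$ for a COE matrix (a direct $S_2$-Weingarten calculation from $U_{ij}\bar U_{ij} = \sum_{a,b}V_{ia}V_{ja}\bar V_{ib}\bar V_{jb}$) gives
\begin{equation*}
\E\sum_j\lambda_j \;=\; \E\Tr(\Uqq^\dagger\Uqq) \;=\; \sum_{i,j\le N}\E|U_{ij}|^2 \;=\; \frac{N(N+1)}{M+1} \;=\; \frac{N^2+N}{M} + O\!\left(\frac{N^2}{M^2}\right),
\end{equation*}
which already matches \eqref{eqn:lambda1} with the precise error.

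For the higher moments I would exploit the block decomposition $U = \begin{pmatrix}\Uqq & B\\ B^T & D\end{pmatrix}$ (the lower-left block equals $B^T$ because $U^T=U$). The top-left block of $U^\dagger U = I_M$ reads $\Uqq^\dagger\Uqq + \bar BB^T = I_N$, so the eigenvalues $\lambda_j$ of $\Uqq^\dagger\Uqq$ satisfy $\lambda_j = 1 - \mu_j$ where $\mu_j$ are the eigenvalues of the $N\times N$ Hermitian matrix $\bar BB^T$. Hence
\begin{equation*}
\sum_j\lambda_j^k \;=\; \Tr\bigl((I_N - \bar BB^T)^k\bigr) \;=\; \sum_{i=0}^k(-1)^i\binom{k}{i}\Tr\bigl((\bar BB^T)^i\bigr),
\end{equation*}
reducing the task to evaluating $\E\Tr((\bar BB^T)^i)$ for $i=1,2,3$. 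Each of these is a $2i$-th order moment of entries of $U$ summed over $N^i(M-N)^i$ tuples in the off-diagonal block; expanding $U_{ab} = \sum_c V_{ac}V_{bc}$ and applying the $S_{2i}$-Weingarten formula, each becomes a sum over $(\sigma,\tau)\in S_{2i}\times S_{2i}$ weighted by $\Wg(\tau\sigma^{-1},M)$. Using the asymptotics $\Wg(\sigma,M) = O(M^{-2i-|\sigma|})$, together with the free-index multiplicities, one extracts the leading and subleading contributions of each trace, and the binomial cancellations then yield \eqref{eqn:lambda2} and \eqref{eqn:lambda3}.

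An equivalent and possibly cleaner route goes through the joint eigenvalue density of $\Uqq^\dagger\Uqq$. Starting from Theorem~\ref{lem:submatrix-density} and applying the Takagi decomposition $\Uqq = OSO^T$ with $O\in U(N)$ and $S = \operatorname{diag}(s_1,\ldots,s_N)$, the Jacobian factor $\prod_{i<j}|s_i^2-s_j^2|\prod_i 2s_i$ combined with $\lambda_j = s_j^2$ yields the $\beta=1$ Jacobi ensemble
\begin{equation*}
p(\lambda_1,\ldots,\lambda_N) \;\propto\; \prod_{i<j}|\lambda_i-\lambda_j|\prod_j(1-\lambda_j)^{(M-2N-1)/2}\,\oneb_{[0,1]^N}(\lambda),
\end{equation*}
whose low moments can be evaluated either by Selberg-type integrals or via the Laguerre/Marchenko--Pastur limit $(1-\lambda)^{M/2}\to e^{-M\lambda/2}$, valid for $N=o(\sqrt{M})$. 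This is the complex-symmetric analogue of the real orthogonal setup of \cite{JiangMa2019}, and the proof of Lemmas 2.5--2.6 there can be transposed with minor changes.

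The main obstacle is the cubic estimate \eqref{eqn:lambda3}. Each $\E\Tr((\bar BB^T)^i)$ for $i=1,2,3$ is dominated by the trivial piece $\approx N$ (since $\mu_j \to 1$), so the alternating binomial combination in $\sum_j\lambda_j^3$ must cancel three orders of the $1/M$-expansion before the $O(N^4/M^3)$ bound emerges. In the Weingarten picture this requires identifying and combining subleading $\sigma$-contributions up to order $M^{-3}$ with careful sign tracking; in the Jacobi picture, it requires controlling the third moment beyond the leading Wishart/Marchenko--Pastur asymptotic, where the Vandermonde repulsion and the Jacobi weight both contribute corrections. Either way, this delicate bookkeeping--mirroring that in \cite{JiangMa2019}--is where the technical work of the lemma concentrates.
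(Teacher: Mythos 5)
For $m=1$ your computation is correct and matches the paper exactly. The problem is the higher moments.

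Your Route A introduces a cancellation problem that the paper's proof does not have. The identity $\Uqq^\dagger\Uqq = I_N - \bar{B}B^T$ is correct, and so is the binomial expansion, but this is a step backwards: each $\E\Tr\big((\bar{B}B^T)^i\big)$ is of order $N$, so to obtain the leading behavior $\E\sum_j\lambda_j^3 = O(N^4/M^3)$ you must compute each of $\E\Tr(\bar{B}B^T)$, $\E\Tr\big((\bar{B}B^T)^2\big)$, $\E\Tr\big((\bar{B}B^T)^3\big)$ to four successive orders in $1/M$ and verify that the alternating binomial sum cancels the first three. You correctly flag this as a ``main obstacle,'' but it is not intrinsic to the lemma---it is an artifact of working with the complementary block. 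The paper simply does to $m=2,3$ what you already did for $m=1$: write $U = VV^T$, expand $\E\Tr\big((\Uqq^\dagger\Uqq)^m\big)$ directly, and apply the $S_{2m}$-Weingarten formula. Because all row indices $i_1,\ldots,i_{2m}$ are confined to $[N]$, the row-sum factor is $N^{\#\mathrm{cycles}(\sigma)}$ with $\#\mathrm{cycles} \le 2m$, the $j$-sum contributes at most $M^m$, and $\Wg(\tau\sigma^{-1}) = O(M^{-2m})$, so the generic term is already $O(N^{2m}/M^m)\cdot\text{(something small)}$ and the leading contribution $N^{m+1}/M^m$ is read off from the identity-coset permutations $\sigma=\tau$ with maximal cycle count---no cancellation between orders is needed, just isolation of the dominant $(\sigma,\tau)$. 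If you want to repair Route A you would have to carry the Weingarten expansion of each $\E\Tr\big((\bar{B}B^T)^i\big)$ to relative precision $N^3/M^3$, which is substantially more bookkeeping than the direct route.

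Route B is a genuinely different and in-principle viable path: the Takagi Jacobian $\prod_{i<j}|s_i^2-s_j^2|\prod_i 2s_i$ is correct, and the resulting $\beta=1$ Jacobi density $\propto \prod_{i<j}|\lambda_i-\lambda_j|\prod_j(1-\lambda_j)^{(M-2N-1)/2}$ on $[0,1]^N$ is the right law for the $\lambda_j$. But you stop at the point where the work begins: to get \eqref{eqn:lambda2}--\eqref{eqn:lambda3} you would need the first three power-sum moments of this $\beta=1$ Jacobi ensemble with the stated error bars, via Aomoto/Selberg-type recursions or a loop-equation expansion, none of which you sketch. So as written, neither route yields the lemma; the $m=1$ case is the only part of your argument that is complete.
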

Since $\E\sum_{j=1}^\q \lambda_i^m=\E\Tr((\Uqq^\dagger\Uqq)^m)$, we can calculate all the above quantities using Weingarten calculus \cite{Collins2003,CollinsSniady2006}, writing a COE matrix as $U\overset{d}{=}WW^T$ for $W$ a Haar-distributed unitary, and identifying leading order terms. 
We postpone the proof of Lemma~\ref{lem:submatrix-sv} to Section~\ref{sec:coe-moments}.

\section{Proof of Theorem~\ref{thm:hiding}}\label{sec:proof}

There is a long history of proofs showing closeness of a Haar orthogonal or unitary submatrix to i.i.d. Gaussians in total variation distance, including \cite{DiaconisFreedman1987,DEL1992,Jiang2006,Jiang2009jacobi,aa,JiangMa2019,Stewart2020}, and concluding with the result for maximal sizes of rectangular submatrices of Haar orthogonal matrices \cite{JiangMa2019,Stewart2020}.
For the COE case here, we adapt the proof method of \cite{JiangMa2019}, which is for Haar orthogonal matrices, and gives in that case the maximal submatrix size and a quantitative error bound. Additionally, the proof method is relatively short and provides a bound on the Kullback--Leibler divergence (or relative entropy).
By Pinsker's inequality \cite{Pinsker1964}, the total variation distance can be bounded as
\begin{align}\label{eqn:pinsker}
d_\TV(\sqrt{M}\Uqq,\Gqq)&\le \sqrt{\frac{1}{2}\dkl(\sqrt{M}\Uqq\;||\;\Gqq)},
\end{align}
where $\dkl(\sqrt{M}\Uqq\;||\;\Gqq)$ is the Kullback--Leibler (KL) divergence (relative entropy) of the distributions of $\sqrt{M}\Uqq$ and $\Gqq$, 
\begin{align}\label{eqn:kl}
\dkl(\sqrt{M}\Uqq\;||\;\Gqq)&\equiv\int f(Z)\log\frac{f(Z)}{g(Z)}\,dZ,
\end{align}
where $f$ and $g$ are the density functions for $\sqrt{M}\Uqq$ and $\Gqq$ respectively. 
The density $f$ of $\sqrt{M}\Uqq$ is given by a rescaling of Theorem~\ref{lem:submatrix-density} as
\begin{align}\label{eqn:msubmat-density}
f(Z)&=c_{M,\q}\det\Big(I_\q-\frac{Z^\dagger Z}{M}\Big)^{\frac{M-2\q-1}{2}}\iz(\lambda_\mathrm{max}Z^\dagger Z<M),
\end{align}
for a normalization constant $c_{M,\q}$,
while the density $g$ is
\begin{align}
g(Z)&=\prod_{j=1}^N\frac{1}{2\pi}e^{-|Z_{jj}|^2/2}\prod_{i<j}\frac{1}{\pi}e^{-|Z_{ij}|^2}=\frac{1}{2^N\pi^{N(N+1)/2}}e^{-\Tr(Z^\dagger Z)/2}.
\end{align}
The KL divergence \eqref{eqn:kl} is well-defined since the distribution of $\sqrt{M}\Uqq$ is absolutely continuous with respect to that of $\Gqq$.
Due to \eqref{eqn:pinsker} it then suffices to bound the KL divergence.

As explained in the Introduction, we skip the exact calculation of the normalizing constant $c_{M,N}$. In fact \cite{aa} does not evaluate their normalizing constant either, although in both of these cases it turns out the normalizing constants can be found in the matrix integrals in the book by Hua \cite{Hua1963}.
Instead, we have the additional estimates of Proposition~\ref{prop:zeta}. 
The intuitive reason for why we do not need the constant is that if two density functions $f$ and $g$ are ``close'' without certain normalizing constants, then those normalizing constants must also be close since $f,g$ are both density functions normalized to have integral one.

To avoid $c_{M,\q}$, we first define a more convenient function $\tilde f$ with an explicit constant, and apply the argument involving the Kullback--Leibler divergence from \cite{JiangMa2019} to bound the expression in \eqref{eqn:pinsker} but with $\tilde f$ replacing $f$. To this end, let $\zeta=\zeta_{M,\q}$ be defined as in \eqref{eqn:zeta} below, and define $\tilde f:=\zeta f$. From \eqref{eqn:kl}, we can write
\begin{align}\label{eqn:dkl-split}
\dkl(\sqrt{M}\Uqq\;||\;\Gqq)&=\int f(Z)\log\left(\frac{\frac{1}{\zeta}\tilde f(Z)}{g(Z)}\right)\,dZ
=\E\left[\log\frac{\tilde f(\sqrt{M}\Uqq)}{g(\sqrt{M}\Uqq)}\right]+\log\frac{1}{\zeta}.
\end{align}
We will first show the expectation term is small, and then handle the normalizing constant term $\log(1/\zeta)$ in Proposition~\ref{prop:zeta}.

\begin{prop}\label{prop:exp}
Let $\Uqq$ be the upper left $\q\times\q$ submatrix of an $M\times M$ COE matrix, which has density $f=f_{M,\q}$ given in \eqref{eqn:msubmat-density}.
Let $g=g_\q$ be the density function of $\Gqq\sim\gsym$, and let $\tilde f:=\zeta f$ for 
\begin{align}\label{eqn:zeta}
\zeta=\zeta_{M,\q}:=\frac{\exp\left(-\frac{\q^3}{2M}\right)}{2^\q\pi^{\q(\q+1)/2}c_{M,\q}}.
\end{align}
Then for $\q=o(\sqrt{M})$, as $M\to\infty$, there is a constant $C$ so that
\begin{align}
\E\left[\log\frac{\tilde f(\sqrt{M}\Uqq)}{g(\sqrt{M}\Uqq)}\right]\le C\q^2/M.
\end{align}
\end{prop}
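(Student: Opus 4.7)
The plan is to write out the log-ratio $\log(\tilde f/g)$ explicitly using the density formula \eqref{eqn:msubmat-density} and the Gaussian density $g$, reduce everything to traces of powers of $\Uqq^\dagger\Uqq$, and then invoke the moment estimates in Lemma~\ref{lem:submatrix-sv}. The role of $\zeta$ and in particular the factor $\exp(-N^3/(2M))$ in its definition is to absorb exactly the leading $O(N^3/M)$ contribution that appears when one expands $\log\det(I_N - \Uqq^\dagger\Uqq)$.

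Concretely, I would first substitute $Z=\sqrt{M}\Uqq$ into the formulas for $\tilde f$ and $g$, so that $Z^\dagger Z/M = \Uqq^\dagger\Uqq$ and $\Tr(Z^\dagger Z) = M\sum_j\lambda_j$ where $\lambda_1,\ldots,\lambda_N$ are the eigenvalues of $\Uqq^\dagger\Uqq$. This gives
\begin{align*}
\log\frac{\tilde f(\sqrt{M}\Uqq)}{g(\sqrt{M}\Uqq)} = -\frac{N^3}{2M} + \frac{M-2N-1}{2}\sum_{j=1}^N\log(1-\lambda_j) + \frac{M}{2}\sum_{j=1}^N\lambda_j.
\end{align*}
On the support of $f$, we have $\lambda_j\in[0,1)$ almost surely, so the simple pointwise bound $\log(1-x)\le -x - x^2/2$ applies and yields
\begin{align*}
\log\frac{\tilde f(\sqrt{M}\Uqq)}{g(\sqrt{M}\Uqq)} \le -\frac{N^3}{2M} + \frac{2N+1}{2}\sum_{j=1}^N\lambda_j - \frac{M-2N-1}{4}\sum_{j=1}^N\lambda_j^2.
\end{align*}
The advantage of this bound is that it avoids having to handle the Taylor remainder, where eigenvalues near $1$ could in principle cause trouble; we simply drop the (nonpositive) higher-order terms of the series for $\log(1-\lambda_j)$.

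Taking expectations and plugging in Lemma~\ref{lem:submatrix-sv}, the first sum contributes $\frac{(2N+1)(N^2+N)}{2M}+O(N^3/M^2) = \frac{N^3}{M}+O(N^2/M)$, while the second contributes $-\frac{M-2N-1}{4}\cdot\frac{2N^3}{M^2}+O(N^2/M) = -\frac{N^3}{2M}+O(N^4/M^2)+O(N^2/M)$. For $N=o(\sqrt{M})$, we have $N^4/M^2 = o(N^2/M)$, and the two $N^3/M$ contributions combine with the $-N^3/(2M)$ from $\zeta$ as $-\tfrac{N^3}{2M}+\tfrac{N^3}{M}-\tfrac{N^3}{2M}=0$, leaving precisely $O(N^2/M)$.

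The main obstacle, really the crux of the whole argument, is arranging the constants so that the three $N^3/M$ contributions cancel. This is what forces the particular choice of $\zeta$ in \eqref{eqn:zeta}, and it is why the moment estimates in Lemma~\ref{lem:submatrix-sv} must be computed to sub-leading precision (specifically, the $O(N^2/M^2)$ error in $\E\sum\lambda_j$ and the $O(N^2/M^2)$ error in $\E\sum\lambda_j^2$ are what produce the final $O(N^2/M)$ bound after multiplication by factors of size $N$ and $M$ respectively). The third moment bound $\E\sum\lambda_j^3 = O(N^4/M^3)$ is only needed if one wants to track the discarded tail of the logarithm series, which here we avoid entirely by using the one-sided inequality $\log(1-x)\le -x - x^2/2$; this simplification is what makes the argument short.
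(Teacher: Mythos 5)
Your argument is correct and follows essentially the same route as the paper's proof: expand $\log(\tilde f/g)$ via the density formula, upper-bound $\log(1-\lambda_j)$ by a polynomial in $\lambda_j$, and apply the moment estimates of Lemma~\ref{lem:submatrix-sv}, with $\zeta$ chosen so that the three $N^3/M$ contributions cancel. The only place you deviate is in the choice of polynomial upper bound: you use $\log(1-x)\le -x-x^2/2$ for $0\le x<1$ (valid since the discarded Taylor tail $-x^3/3-x^4/4-\cdots$ is nonpositive), whereas the paper uses $\log(1-x)\le -x-x^2/2-x^3/3$. Your version is a valid minor simplification that does let you dispense with the third-moment estimate \eqref{eqn:lambda3} within this proposition; note, however, that \eqref{eqn:lambda3} is still needed in the proof of Proposition~\ref{prop:zeta}, where the complementary \emph{lower} bound $\log(1-x)\ge -x-x^2/2-x^3$ on the restricted domain $x\le 1/2$ is used, so the overall moment requirements for the theorem are unchanged.
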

\begin{proof}
Note since the expectation we evaluate is no longer a KL divergence, it can be negative, but we only need an upper bound.
We follow the proof method of \cite[Theorem 1(i)]{JiangMa2019} with some simplifications since we consider square submatrices and have pre-chosen the constant $\zeta$ in \eqref{eqn:zeta} to be particularly convenient; in particular no asymptotic expansion of the normalization constant is needed.
The density for $\Gqq$ is $g(Z)=2^{-\q}\pi^{-\q(\q+1)/2}e^{-\Tr(Z^\dagger Z)/2}$.
In terms of the eigenvalues $\lambda_1,\ldots,\lambda_\q$ of the matrix $Z^\dagger Z$, we have for $\lambda_\mathrm{max}(Z^\dagger Z)<M$,
\begin{align*}
\log\frac{\tilde f(Z)}{g(Z)}&=
-\frac{\q^3}{2M}
+\log\left[\det\left(I_\q-\frac{Z^\dagger Z}{M}\right)^{\frac{M-2\q-1}{2}}\exp\left(\Tr(Z^\dagger Z)/2\right)\right]
\\
&=-\frac{\q^3}{2M}
+\frac{M-2\q-1}{2}\sum_{j=1}^\q\log\Big(1-\frac{\lambda_j(Z^\dagger Z)}{M}\Big)+\frac{1}{2}\sum_{j=1}^\q\lambda_j(Z^\dagger Z).
\numberthis\label{eqn:logratio}
\end{align*}
We can take the expectation value with respect to $\sqrt{M}\Uqq$, for which $\lambda_\mathrm{max}(M\Uqq^\dagger\Uqq)\le M$. Using the inequality $\log(1+x)\le x-\frac{x^2}{2}+\frac{x^3}{3}$ for any $x>-1$, and applying Lemma~\ref{lem:submatrix-sv} (recall we have rescaled $\lambda_j\mapsto M\lambda_j$), yields
\begin{align*}
\E \log\frac{\tilde f(\sqrt{M}\Uqq)}{g(\sqrt{M}\Uqq)}&\le -\frac{\q^3}{2M}
+\frac{(M-2\q-1)}{2}\E\sum_{j=1}^N\left[-\frac{\lambda_j}{M}-\frac{\lambda_j^2}{2M^2}-\frac{\lambda_j^3}{3M^3}\right]+\frac{\q^2+\q}{2}+O\left(\frac{\q^2}{M}\right)\\
&\le
-\frac{\q^3}{2M}
+\frac{(M-2\q-1)}{2}\left[-\frac{\q^2+\q}{M}-\frac{2\q^3}{2M^2}-O\Big(\frac{\q^4}{M^3}\Big)\right]+\frac{\q^2+\q}{2}+O\left(\frac{\q^2}{M}\right)\\
&=O\left(\frac{\q^2}{M}\right),\numberthis
\end{align*}
for $N=o(\sqrt{M})$.
\end{proof}

Next we handle the undetermined normalizing constant $c_{M,\q}$ by showing that the term $\log1/\zeta$ in \eqref{eqn:dkl-split} is small. 
\begin{prop}\label{prop:zeta}
Let $\zeta=\zeta_{M,\q}$ be defined as in \eqref{eqn:zeta}. Then for $N=o(\sqrt{M})$ as $M\to\infty$,
\begin{align}
|1-\zeta|&=O(\q^2/M),\quad
\text{and consequently }|\log\zeta|=O(\q^2/M).
\end{align}
\end{prop}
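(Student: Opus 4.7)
The plan is to prove the two-sided estimate $|\log\zeta|=O(N^2/M)$, from which $|1-\zeta|=O(N^2/M)$ follows by the elementary inequality $|1-e^x|\le|x|e^{|x|}$ and the fact that $|\log\zeta|=o(1)$ when $N=o(\sqrt{M})$.

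For the upper bound, use Proposition~\ref{prop:exp} together with the nonnegativity of the Kullback--Leibler divergence. Since $\tilde f=\zeta f$,
$$\dkl(\sqrt{M}\Uqq\,\|\,\Gqq)=\int f\log(f/g)\,dZ=\E\!\left[\log\frac{\tilde f(\sqrt{M}\Uqq)}{g(\sqrt{M}\Uqq)}\right]-\log\zeta,$$
and $\dkl\ge0$ together with Proposition~\ref{prop:exp} immediately gives $\log\zeta\le CN^2/M$.

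For the matching lower bound, express $\zeta$ as a Gaussian expectation. Since $\tilde f$ is supported on $A:=\{Z:\lambda_{\max}(Z^\dagger Z)<M\}$,
$$\zeta=\int\tilde f\,dZ=\E_\Gqq\!\left[\tfrac{\tilde f(\Gqq)}{g(\Gqq)}\iz(\Gqq\in A)\right]=\P(\Gqq\in A)\,\E_\Gqq\!\left[\tfrac{\tilde f(\Gqq)}{g(\Gqq)}\,\Big|\,\Gqq\in A\right],$$
and the conditional form of Jensen's inequality (applied to the concave function $\log$) gives
$$\log\zeta\ge\log\P(\Gqq\in A)+\E_\Gqq\!\left[\log\tfrac{\tilde f(\Gqq)}{g(\Gqq)}\,\Big|\,\Gqq\in A\right].$$
Since $\lambda_{\max}(\Gqq^\dagger\Gqq)\le\Tr(\Gqq^\dagger\Gqq)$ and $\Tr(\Gqq^\dagger\Gqq)$ is a sum of independent exponentials with total mean $N^2+N=o(M)$, a standard sub-exponential tail bound gives $\P(\Gqq\notin A)\le e^{-cM}$, making $\log\P(A)$ negligible. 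The Taylor expansion from the proof of Proposition~\ref{prop:exp} is then valid on $A$, and taking expectations over $\Gqq$ using the Wick/Isserlis identities $\E\Tr(\Gqq^\dagger\Gqq)=N^2+N$ and $\E\Tr(\Gqq^\dagger\Gqq)^2=2N^3+O(N^2)$ produces the key cancellation
$$-\frac{N^3}{2M}+\frac{(2N+1)(N^2+N)}{2M}-\frac{(M-2N-1)\cdot 2N^3}{4M^2}=-\frac{N^3}{2M}+\frac{N^3}{M}-\frac{N^3}{2M}+O(N^2/M)=O(N^2/M),$$
a cancellation engineered precisely by the factor $e^{-N^3/(2M)}$ in the definition of $\zeta$. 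Dividing by $\P(A)=1-o(1)$ yields $\log\zeta\ge-CN^2/M$.

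The main technical obstacle is verifying that the Taylor remainder and the tail contributions from $A^c$ are both absorbed into $O(N^2/M)$. The remainder is controlled by the moment bound $\E\Tr(\Gqq^\dagger\Gqq)^3=O(N^4)$ (a direct Wick/pairing count of the same flavor as the second-moment computation), so that $\E\,O(\sum_j\lambda_j^3/M^2)=O(N^4/M^2)=o(N^2/M)$ in the regime $N=o(\sqrt{M})$; the tail contributions are handled by Cauchy--Schwarz against the exponentially small $\P(A^c)$. Combining both sides yields $|\log\zeta|\le CN^2/M$, and the conversion $|1-\zeta|\le|\log\zeta|\,e^{|\log\zeta|}=O(N^2/M)$ completes the proposition.
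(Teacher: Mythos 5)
Your upper bound is correct and takes essentially the same path as the paper: the nonnegativity of $\dkl$ combined with Proposition~\ref{prop:exp} immediately gives $\log\zeta\le CN^2/M$, which is an equivalent (and slightly cleaner) packaging of the paper's use of $x\log x-x+1\ge0$ applied to $\int(\tfrac{\tilde f}{g}\log\tfrac{\tilde f}{g}-\tfrac{\tilde f}{g}+1)g\,dZ$.

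The structure of your lower bound (express $\zeta$ as a conditional Gaussian expectation and apply conditional Jensen to $\log$) is also a legitimate alternative to the paper's route (which again uses $x\log x-x+1\ge0$, now applied to $g/\tilde f$ integrated over a truncated set against $\tilde f$). But there is a real gap in the execution. You condition on $A=\{\lambda_{\max}(Z^\dagger Z)<M\}$ and then assert that ``the Taylor expansion from the proof of Proposition~\ref{prop:exp} is then valid on $A$.'' This is not so, for two reasons. First, the inequality used in Proposition~\ref{prop:exp} is $\log(1+x)\le x-x^2/2+x^3/3$, an \emph{upper} bound on $\log(1-\lambda_j/M)$; for a lower bound on $\log\zeta$ you need a \emph{lower} bound on $\log(1-\lambda_j/M)$, which is a different inequality (the paper uses $\log(1-x)\ge -x-x^2/2-x^3$). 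Second, that lower bound is only valid for $x$ bounded away from $1$ (the paper verifies it for $0\le x\le 1/2$), whereas on $A$ the ratio $\lambda_j/M$ can be arbitrarily close to $1$, where $\log(1-\lambda_j/M)\to-\infty$ but the cubic polynomial stays bounded. This is precisely why the paper truncates to $S=\{\lambda_{\max}\le M/2\}$ rather than $A$. Consequently the region you must control is $A\setminus S$ (where the logarithm is large and negative yet the Taylor bound fails), not $A^c$; on $A^c$ the integrand $\tilde f$ is identically zero, so Cauchy--Schwarz against $\P(A^c)$ addresses the wrong piece. You would need to argue that $\E_\Gqq[\,|\log\tfrac{\tilde f}{g}|\oneb_{A\setminus S}\,]$ is, say, exponentially small, which follows from the integrability of the logarithmic singularity against the Gaussian tail together with $\P(\lambda_{\max}>M/2)\le e^{-cM}$, but this step needs to be stated. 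Your polynomial cancellation computation (the terms $-N^3/(2M)+N^3/M-N^3/(2M)$) is correct, as is the use of $\E\Tr((\Gqq^\dagger\Gqq)^3)=O(N^4)$ to dispose of the remainder; with the truncation set fixed and the correct direction of the Taylor inequality supplied, the argument closes.
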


\begin{proof}
Recall that $\tilde f\equiv\zeta f$, where $f$ is the density function of $\sqrt{M}\Uqq$, and that $g$ denotes the density function of $\Gqq\sim\gsym$.
Because we want to compare to a KL divergence-like term, we use a different method than what was used in \cite{aa}.
We start by observing the function $x\log x-x+1$ is nonnegative for all $x\ge0$. Therefore
\begin{align}
0\le \int_{\C^{\q(\q+1)/2}}\left(\frac{\tilde f}{g}\log\frac{\tilde f}{g}-\frac{\tilde f}{g}+1\right)g\,dZ&=\zeta\E\left[\log\frac{\tilde f(\sqrt{M}\Uqq)}{g(\sqrt{M}\Uqq)}\right]-\zeta+1,
\end{align}
which with Proposition~\ref{prop:exp} implies
\begin{align}
\zeta&\le 1+O(\q^2/M).
\end{align}
To show a lower bound on $\zeta$, let $S:=\{\mathbf{Z}\text{ complex symmetric $\q\times\q$ matrix}: \lambda_\mathrm{max}(\mathbf{Z}^\dagger\mathbf{Z})\le M/2\}$, which is a (proper) subset of where $\tilde f>0$. Then we have
\begin{align}\label{eqn:zeta2}
0\le \int_{S}\left(\frac{g}{\tilde f}\log\frac{g}{\tilde f}-\frac{g}{\tilde f}+1\right)\tilde f\,dZ &=\E\left[\oneb_{S}(\Gqq)\log\frac{g(\Gqq)}{\tilde f(\Gqq)}\right]-\E\left[\oneb_{S}(\Gqq)\right]+\zeta\E[\oneb_S(\sqrt{M}\Uqq)].
\end{align}
We will show that the event $S$ occurs with high probability for both $\Gqq$ and $\sqrt{M}\Uqq$, so the right-most two expectations are both close to one. 
The remaining expectation will be similar to the one computed in Proposition~\ref{prop:exp}, but using the restriction $\oneb_S(\Gqq)$, so that $\lambda_j(\Gqq^\dagger\Gqq)< M/2$, to have a \emph{lower} bound on $\log(1-\lambda_j/M)$.
To show the above, we use some lemmas concerning the singular values of $\Gqq$ and $\Uqq$, which we state and prove in Section~\ref{subsec:g-sv}.
In particular, by Lemma~\ref{lem:g-maxsv}, letting $\lambda_1'\,\ldots,\lambda_N'$ denote the eigenvalues of $\Gqq^\dagger\Gqq$,
\begin{align}\label{eqn:maxsv}
\E\left[\oneb_{S}(\Gqq)\right]&=1-\P[\max(\lambda_1',\ldots,\lambda_\q')>M/2]\ge 1- e^{-cM(1-o(1))},
\end{align}
since $N=o(\sqrt{M})$. Similarly, by Lemma~\ref{lem:u-maxsv}, for $\lambda_1,\ldots,\lambda_\q$ the eigenvalues of $M\Uqq^\dagger\Uqq$,
\begin{align}\label{eqn:amaxsv}
\E\left[\oneb_S(\sqrt{M}\Uqq)\right]&=1-\P[\max(\lambda_1,\ldots,\lambda_\q)>M/2] \ge 1-O(\q^2/M).
\end{align}
Finally, using Lemma~\ref{lem:gsv} in place of Lemma~\ref{lem:submatrix-sv}, we can upper bound $\E\left[\oneb_{S}(\Gqq)\log\frac{g(\Gqq)}{\tilde f(\Gqq)}\right]$ similarly as in the proof of Proposition~\ref{prop:exp}. Instead of \eqref{eqn:logratio}, we have the negation with an indicator function,
\begin{multline}
\oneb_S(Z)\log\frac{g(Z)}{\tilde f(Z)}
=\oneb_S(Z)\bigg[\frac{\q^3}{2M}
-\frac{M-2\q-1}{2}\sum_{j=1}^\q\log\Big(1-\frac{\lambda_j(Z^\dagger Z)}{M}\Big)-\frac{1}{2}\sum_{j=1}^\q\lambda_j(Z^\dagger Z)\bigg].\numberthis\label{eqn:logratio2}
\end{multline}
Due to the negative sign on the $\log(1-\lambda_j/M)$ terms, we need a \emph{lower} bound on $\log(1-x)$, for which we will use
\begin{align}
\log(1-x)&\ge -x-\frac{x^2}{2}-x^3,\quad \text{for }0\le x\le 1/2,
\end{align}
which can be verified by differentiation.
Because the set $S$ is defined so $0\le \frac{\lambda_j(Z^\dagger Z)}{M}\le 1/2$, we can use the above inequality to get the lower bound
\begin{align}\label{eqn:log-lower}
\E\sum_{j=1}^\q\log\Big(1-\frac{\lambda_j'(\Gqq^\dagger\Gqq)}{M}\Big)\oneb_S(\Gqq)&\ge \E\left[\sum_{j=1}^\q\left(-\frac{\lambda_j'}{M}-\frac{(\lambda_j')^2}{2M^2}-\frac{(\lambda_j')^3}{M^3}\right)\oneb_{\lambda'_\mathrm{max}\le M/2}\right]. 
\end{align}
Using
\begin{align}\label{eqn:tailm}
\E\left[\sum_{j=1}^\q(\lambda_j')^m \mathbf{1}_{\max\lambda_j'>M/2}\right]=O(e^{-cM(1-o(1))}),\quad m=1,2,3,
\end{align}
which follows from H\"older's inequality and Lemmas~\ref{lem:gsv} and \ref{lem:g-maxsv}, we can also apply Lemma~\ref{lem:gsv} to \eqref{eqn:log-lower} to obtain
\begin{align}
\E\sum_{j=1}^\q\log\Big(1-\frac{\lambda_j'}{M}\Big)\oneb_S(\Gqq)&\ge -\frac{\q^2+\q}{M} - \frac{2\q^3+O(\q^2)}{2M^2}- O\left(\frac{\q^4}{M^3}\right)- O(e^{-cM(1-o(1))}).
\end{align}
Inserting this back in \eqref{eqn:logratio2} with the expectation, and applying \eqref{eqn:tailm} to the trace term as well, gives the upper bound
\begin{align*}
\E\left[\oneb_{S}(\Gqq)\log\frac{g(\Gqq)}{\tilde f(\Gqq)}\right]&
\le
\begin{multlined}[t]
\frac{\q^3}{2M}+ O(e^{-cM})
-\frac{M-2\q-1}{2}\left(-\frac{\q^2+\q}{M}-\frac{2\q^3+O(\q^2)}{2M^2}\right)\\
-\frac{\q^2+\q}{2} +O\left(\frac{\q^2}{M}\right)+O\left(\frac{\q^4}{M^2}\right)
\end{multlined}\\
&= O\left(\frac{\q^2}{M}\right).\numberthis
\end{align*}
Combining this with \eqref{eqn:maxsv} and \eqref{eqn:amaxsv} in \eqref{eqn:zeta2} implies
\begin{align}
\zeta \ge 1-O(\q^2/M).
\end{align}
Thus $|1-\zeta|=O(\q^2/M)$, which also implies $|\log\zeta|=O(\q^2/M)$.
\end{proof}

\begin{proof}[Proof of Theorem~\ref{thm:hiding}]
Combining Propositions~\ref{prop:exp} and \ref{prop:zeta}, we obtain from Pinsker's inequality \eqref{eqn:pinsker}, and \eqref{eqn:dkl-split} that
\begin{align}
d_\TV(\sqrt{M}\Uqq,\Gqq)&\le\sqrt{\frac{1}{2}\dkl(\sqrt{M}\Uqq\;||\;\Gqq)}=O(\q/\sqrt{M}),
\end{align}
as desired. 
\end{proof}

\section{Proof of lemmas}\label{sec:lemmas}
\subsection{Singular value bounds}\label{subsec:g-sv}
In this section we prove the lemmas used in the proof of Proposition~\ref{prop:zeta}.
\begin{lem}\label{lem:gsv}
Let $\Gqq\sim\gsym$, and let $\lambda_1',\ldots,\lambda_\q'$ be the eigenvalues of $\Gqq^\dagger\Gqq$. Then
\begin{align}
&\E \sum_{j=1}^\q \lambda_j' =\q^2+\q,\quad
\E \sum_{j=1}^\q (\lambda_j')^2=2\q^3+O(\q^2),\\
&\E \sum_{j=1}^\q (\lambda_j')^3=O(\q^4),\quad 
\E \sum_{j=1}^\q (\lambda_j')^4=O(\poly(\q)).\label{eqn:g-34}
\end{align}
\end{lem}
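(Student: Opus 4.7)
The plan is to derive each moment directly from $\E \sum_{j=1}^\q (\lambda_j')^m = \E \Tr((\Gqq^\dagger \Gqq)^m)$ via Wick's theorem for the complex Gaussian entries. Since $\Gqq$ is complex symmetric, $\Gqq^\dagger = \overline{\Gqq}$, and we expand
\begin{align*}
\Tr((\Gqq^\dagger \Gqq)^m) = \sum_{i_1,\ldots,i_{2m}} \overline{\Gqq_{i_1 i_2}}\, \Gqq_{i_2 i_3}\, \overline{\Gqq_{i_3 i_4}} \cdots \overline{\Gqq_{i_{2m-1} i_{2m}}}\, \Gqq_{i_{2m} i_1}.
\end{align*}
Because the entries are circularly symmetric, $\E[\Gqq_{ab}\Gqq_{cd}]=0$, so Isserlis/Wick reduces the expectation to a sum over the $m!$ pairings that match each $\overline{\Gqq}$ factor with a $\Gqq$ factor, using the rule $\E[\overline{\Gqq_{ab}}\Gqq_{cd}] = (1+\delta_{ab})\,\oneb[\{a,b\}=\{c,d\}]$.

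For $m=1$ this immediately gives $\sum_{a,b}\E|\Gqq_{ab}|^2 = \q(\q-1)+2\q = \q^2+\q$. For $m=2$ there are exactly two pairings; for each, the constraint $\{a,b\}=\{c,d\}$ collapses one summation index to a Kronecker $\delta$, and an elementary bookkeeping of the diagonal weights $1+\delta_{ab}$ shows that each of the two pairings contributes $\q^3+2\q^2+\q$, summing to $2\q^3+O(\q^2)$ as required.

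For $m=3,4$ only polynomial bounds are needed, so rather than enumerate $6$ or $24$ pairings I would take a shortcut via operator norm. Writing $\Gqq \dsim (A+A^T)/\sqrt{2}$ for $A$ an $\q\times\q$ complex Ginibre matrix with i.i.d.\ $\CN(0,1)$ entries (verified by comparing entry variances), one gets $\|\Gqq\|_{\mathrm{op}} \le \sqrt{2}\,\|A\|_{\mathrm{op}}$, and therefore
\begin{align*}
\E \sum_{j=1}^\q (\lambda_j')^m \;\le\; \q\,\E\|\Gqq\|_{\mathrm{op}}^{2m} \;\le\; \q\cdot 2^m\,\E\|A\|_{\mathrm{op}}^{2m} \;=\; O(\q^{m+1}),
\end{align*}
using the standard complex Ginibre moment bound $\E\|A\|_{\mathrm{op}}^{2m}=O(\q^m)$. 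This yields $O(\q^4)$ for $m=3$ and $O(\q^5)=O(\poly(\q))$ for $m=4$.

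The main obstacle is the $m=2$ case, where the leading coefficient $2\q^3$ must be exact rather than merely of order $\q^3$: one has to track both Wick pairings and the diagonal correction $1+\delta_{ab}$ carefully, making sure that the symmetry $\Gqq_{ab}=\Gqq_{ba}$ (which allows a single off-diagonal entry to contract in two ways with a given factor) is not double-counted, and that the diagonal surplus lands in the $O(\q^2)$ remainder and not in the leading term. As a sanity check one could alternatively split $\Gqq = G_R + i G_I$ into two independent real symmetric Gaussian matrices with explicit variances and invoke known GOE-type trace moments, but the direct Wick expansion is more self-contained and handles all $m$ uniformly.
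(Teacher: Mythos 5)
Your proof is correct, and for $m=1,2$ it takes essentially the same route as the paper: expand $\E\Tr((\Gqq^\dagger\Gqq)^m)$ via Wick's theorem for circularly symmetric complex Gaussians, using the covariance rule $\E[\overline{\Gqq_{ab}}\Gqq_{cd}]=(1+\delta_{ab})\oneb[\{a,b\}=\{c,d\}]$ and summing over pairings; your per-pairing count $\q(\q+1)^2$ for $m=2$ reproduces the paper's total $2\q^3+4\q^2+2\q$. Where you diverge is in the crude $m=3,4$ bounds: the paper continues with Wick for $m=3$ and invokes H\"older's inequality to get an $O(\q^8)$ bound for $m=4$, whereas you identify $\Gqq\dsim(A+A^T)/\sqrt{2}$ with $A$ an $\q\times\q$ complex Ginibre matrix (a correct match of the entry variances and symmetry), bound $\|\Gqq\|_{\mathrm{op}}\le\sqrt{2}\|A\|_{\mathrm{op}}$, and use $\sum_j(\lambda_j')^m\le\q\|\Gqq\|_{\mathrm{op}}^{2m}$ together with the standard Ginibre operator-norm moment estimate $\E\|A\|_{\mathrm{op}}^{2m}=O(\q^m)$. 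This buys cleaner (and for $m=4$, sharper) bounds $O(\q^{m+1})$ without enumerating pairings, at the modest cost of importing a well-known concentration fact for $\|A\|_{\mathrm{op}}$ rather than staying entirely within elementary Gaussian moment calculations. Both routes give what is actually needed downstream in the proof of Proposition~\ref{prop:zeta}.
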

\begin{proof}
We calculate these directly. For example,
\begin{align}\label{eqn:gsv-trace}
\E\sum_{j=1}^\q \lambda_i' = \E\Tr(\Gqq^\dagger\Gqq)&=\sum_{i_1,i_2=1}^\q\E|\Gqq_{i_1i_2}|^2
=2\q+(\q^2-\q)=\q^2+\q.
\end{align}
Using Isserlis's or Wick's formula for complex Gaussians (see e.g. \cite[Ch.2]{Speicher}), and recalling that $\Gqq$ has a different variance on the diagonal,
\begin{align*}
\E\sum_{j=1}^\q(\lambda_i')^2=\E\Tr(\Gqq^\dagger\Gqq\Gqq^\dagger\Gqq)&=\sum_{i,j,k,\ell=1}^\q\E[\Gqq_{ij}\bar\Gqq_{kj}\Gqq_{k\ell}\bar\Gqq_{i\ell}]\\
&=2\sum_{i,j,k,\ell=1}^\q\left(4\delta_{i=j=k=\ell}+2\delta_{j=k=\ell,\ne i}+2\delta_{i=j=\ell,\ne k}+\delta_{j=\ell,i\ne j,k\ne j}\right)\\
&=2\left[4\q+4\q(\q-1)+\q(\q-1)^2\right]=2\q^3+4\q^2+2\q.\numberthis
\end{align*}
Similarly, applying Isserlis's or Wick's formula yields \eqref{eqn:g-34}. Note we only need the leading order up to constants for the third moment, and for the fourth moment the $O(\q^8)$ from H\"older's inequality is sufficient.
\end{proof}

The next lemma is a more precise (though far from optimal) estimate for the corresponding quantity $g_\mathrm{tail}$ of \cite[Lemma 5.3]{aa}. For the case of \cite{aa}, note the Gaussian matrix there is not symmetric, and there $g_\mathrm{tail}$ concerns the largest eigenvalue of a complex Wishart distribution. 
\begin{lem}\label{lem:g-maxsv}
Let $\lambda_1',\ldots,\lambda_\q'$ be the eigenvalues of $\Gqq^\dagger\Gqq$, where $\Gqq\sim\gsym$. 
Then for $t>2\q^2$,
\begin{align}
\P[\max(\lambda_1',\ldots,\lambda_\q')>t]&\le e^{-c\left(\sqrt{t}-\sqrt{2}\q\right)^2}.
\end{align}
\end{lem}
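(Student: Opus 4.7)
The plan is to recognize $\max(\lambda_1',\ldots,\lambda_N')=\|\Gqq\|_{\mathrm{op}}^2$ and then use Gaussian concentration of measure for Lipschitz functions of a standard Gaussian vector. Concretely, I would parametrize $\Gqq$ by real standard Gaussians: write $\Gqq_{ii}=X_{ii}+iY_{ii}$ with $X_{ii},Y_{ii}\sim N(0,1)$ i.i.d. (since $\CN(0,2)$ has real and imaginary parts of variance $1$), and for $i<j$ write $\Gqq_{ij}=(X_{ij}+iY_{ij})/\sqrt{2}$ with $X_{ij},Y_{ij}\sim N(0,1)$ i.i.d. Assembling these into a single standard Gaussian vector $\mathbf g\in\mathbb R^{N(N+1)}$ gives a linear realization $\mathbf g\mapsto\Gqq(\mathbf g)$.

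The first thing to check is that this parametrization is an isometry from $\mathbb R^{N(N+1)}$ (Euclidean) to the space of complex symmetric matrices (Frobenius). A direct computation gives
\begin{align*}
\|\Gqq(\mathbf g)\|_F^2=\sum_i\bigl(X_{ii}^2+Y_{ii}^2\bigr)+2\sum_{i<j}\tfrac{1}{2}\bigl(X_{ij}^2+Y_{ij}^2\bigr)=\|\mathbf g\|^2,
\end{align*}
where the factor of $2$ for off-diagonal contributions to $\|\cdot\|_F^2$ is exactly canceled by the $1/\sqrt{2}$ scaling in the parametrization. Since $\bigl|\|A\|_{\mathrm{op}}-\|B\|_{\mathrm{op}}\bigr|\le\|A-B\|_{\mathrm{op}}\le\|A-B\|_F$, the function $F(\mathbf g):=\|\Gqq(\mathbf g)\|_{\mathrm{op}}$ is $1$-Lipschitz with respect to the Euclidean norm on $\mathbf g$.

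Next I would control the expectation via Jensen's inequality and the trace computation already carried out in Lemma~\ref{lem:gsv}:
\begin{align*}
\E\|\Gqq\|_{\mathrm{op}}\le\E\|\Gqq\|_F\le\sqrt{\E\|\Gqq\|_F^2}=\sqrt{\E\Tr(\Gqq^\dagger\Gqq)}=\sqrt{N^2+N}\le\sqrt{2}\,N.
\end{align*}
This bound is loose but suffices for the range $t>2N^2$ of interest. I would then invoke the standard Gaussian concentration inequality (Borell--Sudakov--Tsirelson): for any $1$-Lipschitz $F$ of a standard Gaussian vector and any $s>0$, $\P[F>\E F+s]\le e^{-s^2/2}$. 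Setting $s=\sqrt{t}-\sqrt{2}\,N>0$ (which is precisely why the hypothesis $t>2N^2$ enters),
\begin{align*}
\P\bigl[\max_j\lambda_j'>t\bigr]=\P\bigl[\|\Gqq\|_{\mathrm{op}}>\sqrt{t}\bigr]\le\P\bigl[\|\Gqq\|_{\mathrm{op}}>\E\|\Gqq\|_{\mathrm{op}}+s\bigr]\le e^{-(\sqrt{t}-\sqrt{2}N)^2/2},
\end{align*}
yielding the claimed bound with $c=1/2$.

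The only delicate step is the Lipschitz verification, where one must be careful about the non-uniform variances: the $\CN(0,2)$ diagonal and $\CN(0,1)$ off-diagonal entries must be paired with the $\sum_i|\Gqq_{ii}|^2+2\sum_{i<j}|\Gqq_{ij}|^2$ structure of $\|\cdot\|_F^2$ so that the chain of factors $\sqrt{2}$ cancels cleanly and the parametrization becomes an actual isometry. Once this is set up correctly, the remainder is a direct application of Gaussian concentration and requires no further random-matrix-theoretic input.
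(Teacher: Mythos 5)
Your proof is correct and follows essentially the same route as the paper: recognize $\sqrt{\max_j\lambda_j'}=\|\Gqq\|_{\mathrm{op}}$, invoke Gaussian concentration for the Lipschitz map $\mathbf g\mapsto\|\Gqq(\mathbf g)\|_{\mathrm{op}}$, and bound $\E\|\Gqq\|_{\mathrm{op}}\le\sqrt{\E\Tr(\Gqq^\dagger\Gqq)}=\sqrt{N^2+N}\le\sqrt{2}N$ via Jensen and the trace from Lemma~\ref{lem:gsv}. You are a bit more meticulous than the paper in verifying the Lipschitz constant, building an explicit isometric parametrization into $\mathbb R^{N(N+1)}$ so that the factor-of-two in the Frobenius norm for off-diagonal entries is absorbed by the $1/\sqrt{2}$ scaling of the $\CN(0,1)$ variables, yielding $1$-Lipschitz and hence the explicit constant $c=1/2$; the paper simply notes the map is $\sqrt{2}$-Lipschitz in a coarser coordinatization and absorbs the difference into an unspecified $c$.
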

\begin{proof}
Recall that the operator norm $\|\Gqq\|$ is equal to the largest singular value, which is $\sqrt{\max(\lambda_1',\ldots,\lambda_\q')}$.
Using subgaussian concentration of the operator norm $\|\Gqq\|$ since the norm is $\sqrt{2}$-Lipschitz for symmetric matrices (compare to Frobenius norm), we obtain
\begin{align}
\P[\|\Gqq\|>\E\|\Gqq\|+s]&\le e^{-cs^2}, \quad s>0;
\end{align}
see for example \cite[Theorem 2.26]{Wainwright}, noting that we can view a function of $n$ i.i.d. standard complex Gausian random variables as a function of $2n$ i.i.d. real Gaussian random variables. Since
\begin{align}
\E\|\Gqq\|&\le (\E\|\Gqq\|^2)^{1/2}=(\E\lambda'_\mathrm{max})^{1/2}\le \left(\E\sum_{j=1}^\q\lambda_j'\right)^{1/2}=(\q^2+\q)^{1/2}\le \sqrt{2}\q,
\end{align}
where we used Lemma~\ref{lem:gsv} for the bound on the expected trace, then letting ${t}=(\E\|\Gqq\|+s)^2$,
\begin{align*}
\P\left[\max(\lambda_1',\ldots,\lambda_\q')>t\right]=\P\left[\|\Gqq\|>\sqrt{t}\right]&\le e^{-c\left(\sqrt{t}-\sqrt{2}\q\right)^2}.
\end{align*}
\end{proof}

\begin{lem}\label{lem:u-maxsv}
Let $\lambda_1,\ldots\lambda_\q$ be the eigenvalues of $\Uqq^\dagger\Uqq$. Then for $\q=o(\sqrt{M})$,
\begin{align}
\P[\max(\lambda_1,\ldots,\lambda_\q)>1/2] &\le O(\q^2/M).
\end{align}
\end{lem}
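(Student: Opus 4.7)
The plan is to apply Markov's inequality to the trace $\sum_j \lambda_j = \Tr(\Uqq^\dagger \Uqq)$, using the first-moment estimate already established in Lemma~\ref{lem:submatrix-sv}. Since $\Uqq$ is a submatrix of a unitary matrix, the eigenvalues $\lambda_j$ of $\Uqq^\dagger\Uqq$ are nonnegative (in fact all lie in $[0,1]$), so
\begin{align*}
\max(\lambda_1,\ldots,\lambda_\q) \le \sum_{j=1}^\q \lambda_j,
\end{align*}
and hence the event $\{\max \lambda_j > 1/2\}$ is contained in $\{\sum_j \lambda_j > 1/2\}$.

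From Markov's inequality combined with \eqref{eqn:lambda1} of Lemma~\ref{lem:submatrix-sv},
\begin{align*}
\P\left[\max(\lambda_1,\ldots,\lambda_\q)>1/2\right] \le \P\left[\sum_{j=1}^\q \lambda_j > 1/2\right] \le 2\,\E\sum_{j=1}^\q \lambda_j = 2\left(\frac{\q^2+\q}{M} + O\!\left(\frac{\q^2}{M^2}\right)\right) = O(\q^2/M),
\end{align*}
which is the claimed estimate, valid in the regime $\q = o(\sqrt{M})$.

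I do not expect any genuine obstacle here: all of the work is already packaged into Lemma~\ref{lem:submatrix-sv}, and the bound $\max \lambda_j \le \Tr(\Uqq^\dagger \Uqq)$ is free from nonnegativity of the $\lambda_j$'s. One could in principle try to obtain a sharper tail bound by using the higher-moment estimates \eqref{eqn:lambda2}--\eqref{eqn:lambda3} together with Markov applied to $\sum_j \lambda_j^m$ for $m=2$ or $3$, but those would give $O(\q^3/M^2)$ or $O(\q^4/M^3)$, which are smaller than $O(\q^2/M)$ only for $N$ small, and in any case the weaker $O(N^2/M)$ bound is exactly what is needed to absorb the term $\E[\oneb_S(\sqrt{M}\Uqq)]$ against $O(\q^2/M)$ in the proof of Proposition~\ref{prop:zeta}. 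So the first-moment Markov bound is the right tool and the proof is a one-liner.
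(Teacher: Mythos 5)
Your proof is correct and is identical to the paper's: both bound $\max_j \lambda_j$ by $\sum_j \lambda_j$ using nonnegativity, then apply Markov's inequality to the trace together with the first-moment estimate \eqref{eqn:lambda1} from Lemma~\ref{lem:submatrix-sv}.
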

\begin{proof}
By Markov's inequality and Lemma~\ref{lem:submatrix-sv}, we have
\begin{align*}
\P\left[\max(\lambda_1,\ldots,\lambda_\q)>\frac{1}{2}\right]&\le\P\left[\sum_{j=1}^\q\lambda_j>\frac{1}{2}\right]\le  2\E\left[\sum_{j=1}^\q\lambda_j\right]=O\left(\frac{\q^2}{M}\right).
\end{align*}
\end{proof}

\subsection{Proof of Lemma~\ref{lem:submatrix-sv}}\label{sec:coe-moments}

In this section we prove Lemma~\ref{lem:submatrix-sv} on the moments of the singular values of $\Uqq$.
Recall $\Uqq$ is the upper left $N\times N$ submatrix of an $M\times M$ COE matrix $U\dsim VV^T$, where $V$ is a Haar random unitary matrix.
We are interested in $\E\sum_{j=1}^N\lambda_i^m$ for $m=1,2,3$.

For $m=1$, we can apply Weingarten calculus \cite{Collins2003,CollinsSniady2006} or the fourth moment calculation given in e.g. \cite[Lemma 4.7]{ChatterjeeMeckes2008}, giving
\begin{align*}
\E\sum_{j=1}^\q\lambda_i=\E\Tr(\Uqq^\dagger\Uqq)=\sum_{i_1,i_2=1}^\q \E|\Uqq_{i_1i_2}|^2
&=\sum_{i_1,i_2=1}^\q\sum_{j_1,j_2=1}^M\E[V_{i_1j_1}V^T_{j_1i_2}\bar V_{i_1j_2}\bar V^T_{j_2i_2}]\\
&=M(\q^2+\q)\left[\frac{1}{(M-1)(M+1)}-\frac{1}{(M-1)M(M+1)}\right]\\
&=\frac{\q^2+\q}{M+1},\numberthis 
\end{align*}
which is $\frac{\q^2+\q}{M}+O\left(\frac{\q^2}{M^2}\right)$. 

For $m=2,3$, we only need the leading order term, which will simplify calculations.
In order to compute expectation values, we use the Weingarten calculus \cite{Collins2003,CollinsSniady2006} with only the leading order asymptotics for the Weingarten function:
\begin{thm}[{\cite[Corollaries 2.4, 2.7]{CollinsSniady2006}}]\label{thm:weingarten}
Let $\mathcal U(d)$ be the space of $d\times d$ unitary matrices equipped with Haar measure, and let $S_n$ be the permutation group on $n$ elements. Let $n$ be a positive integer and $i=(i_1,\ldots,i_n)$, $i'=(i_1',\ldots,i_n')$, $j=(j_1,\ldots,j_n)$, $j'=(j_1',\ldots,j_n')$ be $n$-tuples of positive integers. Then
\begin{align}\label{eqn:weingarten}
\int_{\mathcal{U}(d)}U_{i_1j_1}\cdots U_{i_nj_n}\overline{U_{i_1'j_1'}}\cdots\overline{U_{i_n'j_n'}}\,dU
&= \sum_{\sigma,\tau\in S_n}\delta_{i_1i'_{\sigma(1)}}\cdots\delta_{i_ni'_{\sigma(n)}}\delta_{j_1j'_{\tau(1)}}\cdots\delta_{j_nj'_{\tau(n)}} \Wg(\tau\sigma^{-1}),
\end{align}
where $\Wg$ denotes the Weingarten function defined in \cite[Eq.~(9)]{CollinsSniady2006}.
The Weingarten function satisfies the asymptotics, for fixed $n$ as $d\to\infty$,
\begin{align}\label{eqn:weingarten-asymptotics}
\Wg(\operatorname{Id})&=\frac{1}{d^n}\left(1+O\left(\frac{1}{d^2}\right)\right),\quad
\Wg(\sigma)=O\left(\frac{1}{d^{n+|\sigma|}}\right),
\end{align}
where $|\sigma|$ denotes the minimum number of factors needed to write $\sigma$ as a product of transpositions, or equivalently, $|\sigma|=n-(\#\text{ of disjoint cycles})$.
\end{thm}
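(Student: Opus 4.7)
The plan is to prove the Weingarten formula \eqref{eqn:weingarten} via Schur--Weyl duality, and to extract the large-$d$ asymptotics \eqref{eqn:weingarten-asymptotics} by inverting a Gram matrix of permutation operators through a Neumann series.

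First I would reinterpret the left-hand side of \eqref{eqn:weingarten} as a matrix element of the operator
\[
T_n \;:=\; \int_{\mathcal U(d)} U^{\otimes n} \otimes \bar U^{\otimes n}\,dU
\]
acting on $(\C^d)^{\otimes n}\otimes(\C^d)^{\otimes n}$. Under the canonical identification $V\otimes V^*\cong\operatorname{End}(V)$ (with $\bar U$ playing the role of $(U^{-1})^T$ on $V^*$), $T_n$ becomes a linear operator on $\operatorname{End}((\C^d)^{\otimes n})$. By left- and right-invariance of Haar measure, $T_n$ is the orthogonal projection (in the Hilbert--Schmidt inner product) onto the subspace of $U(d)$-invariants under the diagonal action, and by Schur--Weyl duality this subspace is exactly the linear span of the permutation operators $\{\rho(\sigma)\}_{\sigma\in S_n}$, where $\rho(\sigma)$ permutes tensor factors.

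Next I would compute this projection in the (generally non-orthogonal) basis $\{\rho(\sigma)\}$. Their Gram matrix with respect to the Hilbert--Schmidt inner product is
\[
G(\sigma,\tau) \;=\; \Tr\bigl(\rho(\sigma)^\dagger\rho(\tau)\bigr)\;=\;\Tr(\rho(\sigma^{-1}\tau))\;=\;d^{n-|\sigma^{-1}\tau|},
\]
using $\Tr\rho(\pi) = d^{\#\mathrm{cycles}(\pi)}$. The Weingarten function is then \emph{defined} by $\Wg(\sigma^{-1}\tau) := G^{-1}(\sigma,\tau)$, which (since $\Wg$ is a class function) matches the definition in \cite{CollinsSniady2006}. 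Expanding $T_n$ in this basis as a sum $\sum_{\sigma,\tau} \Wg(\sigma^{-1}\tau)\,\rho(\sigma)\otimes\rho(\tau)^*$ and reading off matrix elements between standard basis tensors produces exactly the right-hand side of \eqref{eqn:weingarten}: the Kronecker deltas record the constraint that $\rho(\sigma)$ matches one set of indices to its primed counterpart, and $\rho(\tau)$ matches the other.

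For the asymptotics, I would exploit that $G/d^n$ is a perturbation of the identity at large $d$. Writing $G = d^n(I+A)$ with $A(\sigma,\tau) = d^{-|\sigma^{-1}\tau|}$ for $\sigma\ne\tau$ and $A(\sigma,\sigma)=0$, every nonzero entry of $A$ is $O(d^{-1})$, so for fixed $n$ and $d$ large the Neumann series $G^{-1} = d^{-n}\sum_{k\ge 0}(-A)^k$ converges. The identity entry yields $\Wg(\mathrm{Id}) = d^{-n}(1+O(d^{-2}))$, since any closed walk returning to $\mathrm{Id}$ in the Cayley graph of $S_n$ (generated by transpositions) has length at least $2$ and hence weight at most $d^{-2}$. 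For $\sigma\ne\mathrm{Id}$ the leading contribution comes from $-A(\mathrm{Id},\sigma)$, giving $\Wg(\sigma) = O(d^{-n-|\sigma|})$. The main technical obstacle is a uniform bound on the higher-order terms: any walk $\mathrm{Id}=\pi_0\to\cdots\to\pi_k=\sigma$ contributes weight $\prod_{i=1}^k d^{-|\pi_{i-1}^{-1}\pi_i|}\le d^{-|\sigma|}$ by iterated application of the triangle inequality $|\pi_1\pi_2|\le|\pi_1|+|\pi_2|$ for word-length on $S_n$, and since $|S_n|=n!$ is fixed as $d\to\infty$ the combinatorial prefactors can be absorbed into the $O(\cdot)$ notation.
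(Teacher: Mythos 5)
Your proposal is essentially correct, but note that the paper does not prove this statement at all: it is quoted directly from \cite[Corollaries 2.4, 2.7]{CollinsSniady2006}, so the comparison is with the literature rather than with an in-paper argument. Your route is the standard proof of the unitary Weingarten formula: identify $\int U^{\otimes n}\otimes\bar U^{\otimes n}\,dU$ with the orthogonal (Hilbert--Schmidt) projection onto the commutant of the diagonal $U(d)$-action, invoke Schur--Weyl duality to span that commutant by the permutation operators $\rho(\sigma)$, compute the Gram matrix $G(\sigma,\tau)=\Tr\rho(\sigma^{-1}\tau)=d^{\,n-|\sigma^{-1}\tau|}$, and define $\Wg$ as its inverse; reading off matrix elements gives \eqref{eqn:weingarten}, where the distinction between $\Wg(\tau\sigma^{-1})$ and $\Wg(\sigma^{-1}\tau)$ is immaterial since $\Wg$ is a class function and the two elements are conjugate. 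Two points deserve care. First, invertibility of $G$ requires the $\rho(\sigma)$ to be linearly independent, i.e. $d\ge n$; for $d<n$ Collins--\'Sniady define $\Wg$ through the pseudo-inverse and \eqref{eqn:weingarten} still holds, so your argument proves the statement only in the regime $d\ge n$ --- harmless for this paper, which uses $n\le 6$ and $d=M\to\infty$, but strictly weaker than the cited result. Second, in the Neumann series $G^{-1}=d^{-n}\sum_{k\ge0}(-A)^k$ you should make the tail bound explicit: walks of length $k$ number at most $(n!)^{k-1}$ and each carries weight at most $d^{-\max(k,|\sigma|)}$ by the triangle inequality for the word metric, so for fixed $n$ and $d>n!$ the series converges geometrically and sums to $O(d^{-|\sigma|})$, while closed walks at the identity have even total length at least $2$, giving $\Wg(\operatorname{Id})=d^{-n}(1+O(d^{-2}))$; with these estimates spelled out, your derivation of \eqref{eqn:weingarten-asymptotics} is a legitimate, self-contained alternative to the character-sum formula used in \cite{CollinsSniady2006}, which in exchange yields exact expressions and information uniform beyond the fixed-$n$ regime.
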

To apply the above theorem, write
\begin{align*}
\E\sum_{j=1}^\q\lambda_i^2=\E\Tr(\Uqq^\dagger\Uqq\Uqq^\dagger\Uqq)&=\sum_{i_1,\ldots,i_4=1}^\q\E[\bar\Uqq_{i_2i_1}\Uqq_{i_2i_3}\bar\Uqq_{i_4i_3}\Uqq_{i_4i_1}]\\
&=\sum_{i_1,\ldots,i_4=1}^\q\sum_{j_1,j_2,j_1',j_3'=1}^M\E[V_{i_1j_1}V_{i_2j_2}V_{i_3j_2}V_{i_4j_1}\bar V_{i_1j_1'}\bar V_{i_2j_1'}\bar V_{i_3j_3'}\bar V_{i_4j_3'}],
\end{align*}
where we use indices $j_1',j_3'$ to match the notation in Theorem~\ref{thm:weingarten}. To match more exactly, we could instead write the sum over all $j_1,j_2,j_3,j_4,j_1',j_2',j_3',j_4'$ by adding the Kronecker deltas $\delta_{j_1j_4}\delta_{j_2j_3}\delta_{j_1'j_2'}\delta_{j_3'j_4'}$. Applying Theorem~\ref{thm:weingarten}, the above expression becomes
\begin{align*}
\E\sum_{j=1}^\q\lambda_i^2&=\sum_{i_1,\ldots,i_4=1}^\q\sum_{j_1,j_2,j_1',j_3'=1}^M\sum_{\sigma,\tau\in S_4}\Wg(\tau\sigma^{-1})\delta_{i_1i_{\sigma(1)}}\delta_{i_2i_{\sigma(2)}}\delta_{i_3i_{\sigma(3)}}\delta_{i_4i_{\sigma(4)}}\delta_{j_1j_{\tau(1)}'}\delta_{j_2j'_{\tau(2)}}\delta_{j_2j'_{\tau(3)}}\delta_{j_1j'_{\tau(4)}},
\end{align*}
where for the indices $j'_{\tau(k)}$, we identify the skipped indices $j'_2,j'_4$ via $j'_2\equiv j'_1$ and $j'_4\equiv j'_3$, due to the repeated primed indices in the terms $\bar V_{i_kj_\ell'}$. 

Evaluating the sums over $i_1,\ldots,i_4$ yields
\begin{align}\label{eqn:wgdelta4}
\E\sum_{j=1}^\q\lambda_i^2&=\sum_{\sigma,\tau\in S_4}\q^{\#\text{cycles}(\sigma)}\Wg(\tau\sigma^{-1})\sum_{j_1,j_2,j_1',j_3'=1}^M\delta_{j_1j_{\tau(1)}'}\delta_{j_2j'_{\tau(2)}}\delta_{j_2j'_{\tau(3)}}\delta_{j_1j'_{\tau(4)}}.
\end{align}
We can evaluate
\begin{align}\label{eqn:delta4}
\sum_{j_1,j_2,j_1',j_3'=1}^M\delta_{j_1j_{\tau(1)}'}\delta_{j_2j'_{\tau(2)}}\delta_{j_2j'_{\tau(3)}}\delta_{j_1j'_{\tau(4)}}&=\begin{cases}
M^2,&\text{$\tau$ maps $\{1,4\}$ to $\{1,2\}$ or $\{3,4\}$ (write, $\tau\in T$)}\\
M,&\text{otherwise}
\end{cases},
\end{align}
according to Figure~\ref{fig:j4}: if $\tau$ maps $\{1,4\}$ to $\{1,2\}$ or $\{3,4\}$, which we will denote as $\tau\in T$, then the graph in Figure~\ref{fig:j4} has two cycles, but otherwise has only one cycle. Each cycle represents indices that all must take the same value, so we obtain a factor of $M$ for each distinct cycle after summing over the indices.
\begin{figure}[htb]
\tp{\begin{tikzpicture}[xscale=.5]
\foreach \x in {1,2,3,4}{
\node[below] at (\x,0.1) {$j_\x$};
\node[above] at (\x,-1.1) {$j_\x'$};
}
\draw[yshift=1mm] (1,0)--(1,.2)--(4,.2)--(4,0);
\draw[yshift=1mm] (2,0)--(2,.1)--(3,.1)--(3,0);
\draw[yshift=-1mm] (1,-1)--(1,-1.1)--(2,-1.1)--(2,-1);
\draw[yshift=-1mm,xshift=2cm] (1,-1)--(1,-1.1)--(2,-1.1)--(2,-1);
\node at (0,-.5) {$\tau$};
\def\height{.1}
\draw[yshift=-.45cm,fill=lightgray,color=lightgray] (.5,\height)--(4.5,\height)--(4.5,-\height)--(.5,-\height)--cycle;
\end{tikzpicture}}
\caption{Indices $j_k$ and $j'_\ell$ that must take identical values are shown connected with a bracket. In particular, $j_1= j_4$ and $j_2= j_3$, and $j_1'= j_2'$ and $j_3'= j_4'$. The permutation $\tau$ combined with the Kronecker $\delta$ functions connects the two rows of $(j_k)$'s and $(j_k')$'s. This forms a graph with the indices $(j_k),(j_k')$ as the vertices, and where indices in the same cycle must take the same value. 
}\label{fig:j4}
\end{figure}
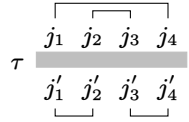

Now we start to consider leading order terms. Using \eqref{eqn:weingarten-asymptotics}, the Weingarten function $\Wg(\tau\sigma^{-1})$ obtains the largest order $1/M^4$ only for $\tau\sigma^{-1}=\operatorname{Id}$, and is $O(1/M^5)$ otherwise. Using \eqref{eqn:delta4}, the largest order terms with $\tau\sigma^{-1}=\operatorname{Id}$, or $\sigma=\tau$, occur when 
\begin{align}\label{eqn:tau4}
\tau\in\{(2\,4),(1\,3)\}\subset T,
\end{align}
which each lead to a term of order $\q^3/M^2$. To see that other choices of $\sigma=\tau$ will produce a lower order term, first consider $\tau\in T$. If $\tau$ has a cycle shape that is not $[2,1,1]$ or $[1,1,1,1]$, then the power $\q^{\#\text{cycles}(\sigma)}$ will be a lower order, $\q^2$ or smaller. The only cycles in $T$ with shape $[2,1,1]$ are the two in \eqref{eqn:tau4}, and the only cycle with shape $[1,1,1,1]$ is the identity which is not in $T$. Next, if $\tau\not\in T$, then \eqref{eqn:delta4} implies the highest order we could obtain is $\q^4/M^3$, which is $O(\q^2/M^2)$ for $\q=O(\sqrt{M})$. Finally, if $\sigma\ne\tau$, then $\Wg(\tau\sigma^{-1})=O(1/M^5)$, giving largest possible order $\q^4/M^3=O(\q^2/M^2)$. This proves \eqref{eqn:lambda2}.

In the case $m=3$, we similarly have
\begin{align*}
\E\sum_{j=1}^\q\lambda_i^3&=\E\Tr(\Uqq^\dagger\Uqq\Uqq^\dagger\Uqq\Uqq^\dagger\Uqq)\\
&=\sum_{i_1,\ldots,i_6=1}^\q\sum_{\substack{j_1,j_2,j_4,\\j_1',j_3',j_5'=1}}^M\E[V_{i_1j_1}V_{i_2j_2}V_{i_3j_2}V_{i_4j_4}V_{i_5j_4}V_{i_6j_1}\bar V_{i_1j_1'}\bar V_{i_2j_1'}\bar V_{i_3j_3'}\bar V_{i_4j_3'}\bar V_{i_5j_5'}\bar V_{i_6j_5'}]\\
&=\sum_{\sigma,\tau\in S_6}\q^{\#\text{cycles}(\sigma)}\Wg(\tau\sigma^{-1})\sum_{\substack{j_1,j_2,j_4,\\j_1',j_3',j_5'=1}}^M(\delta_{j_1j_{\tau(1)}'}\delta_{j_2j'_{\tau(2)}}\delta_{j_2j'_{\tau(3)}}\delta_{j_4j'_{\tau(4)}}\delta_{j_4j'_{\tau(5)}}\delta_{j_1j'_{\tau(6)}})\delta_{j_1'j_2'}\delta_{j_3'j_4'}\delta_{j_5'j_6'}\\
&=\sum_{\sigma,\tau\in S_6}\q^{\#\text{cycles}(\sigma)}\Wg(\tau\sigma^{-1})\begin{cases}
M^3,&\tau(\{1,6\}),\tau(\{2,3\}),\tau(\{4,5\})\in \text{ some }A_i\\
M^2\text{ or }M,&\text{otherwise}
\end{cases},\numberthis\label{eqn:wgdelta6}
\end{align*}
where $A_1=\{1,2\}$, $A_2=\{3,4\}$, and $A_3=\{5,6\}$,
using Figure~\ref{fig:j6} to obtain the last line.
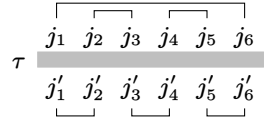
\begin{figure}[htb]
\tp{\begin{tikzpicture}[xscale=.5]
\foreach \x in {1,2,3,4,5,6}{
\node[below] at (\x,0.1) {$j_\x$};
\node[above] at (\x,-1.1) {$j_\x'$};
}
\foreach \x in {0,1,2}{
\draw[xshift={2*\x cm},yshift=-11mm,yscale=-1] (1,0)--(1,.1)--(2,.1)--(2,0);
}
\draw[yshift=-9mm,yscale=-1] (1,-1)--(1,-1.2)--(6,-1.2)--(6,-1);
\draw[yshift=-9mm,yscale=-1] (2,-1)--(2,-1.1)--(3,-1.1)--(3,-1);
\draw[xshift=2cm,yshift=-9mm,yscale=-1] (2,-1)--(2,-1.1)--(3,-1.1)--(3,-1);
\node at (0,-.5) {$\tau$};
\def\height{.1}
\draw[yshift=-.45cm,fill=lightgray,color=lightgray] (.5,\height)--(6.5,\height)--(6.5,-\height)--(.5,-\height)--cycle;
\end{tikzpicture}}
\caption{Analogue of Fig.~\ref{fig:j4} for $m=3$. Indices of $j$ and $j'$ that must take identical values are connected with a bracket. The permutation $\tau$ connected the two rows of $(j_k)$'s and $(j_k')$'s, so that indices in the same cycle of the resulting graph must take the same value. There is only one independent index for each cycle, leading to one factor of $M$ for each cycle after the summation.}\label{fig:j6}
\end{figure}
Reasoning as in the $m=2$ case, for the leading order we first consider $\sigma=\tau$ with $\tau$ contributing a factor $M^3$ in \eqref{eqn:wgdelta6}. There are no such $\tau$ with cycle shape $[1,1,1,1,1,1]$ or $[2,1,1,1,1]$, and the shapes with the next most cycles are $[2,2,1,1]$ and $[3,1,1,1]$, which will give a term $\q^4/M^3$.
For example,
\begin{align}
\tau\in\{(1\,3)(4\,6),(2\,6)(3\,5),(1\,5)(2\,4),(1\,5\,3),(2\,4\,6)\},
\end{align}
produce this term.
Any other shapes of $\tau$ or $\sigma$ will give at most order $\q^6/M^4$ or $\q^3/M^3$, which are lower order for $N=o(\sqrt{M})$. \qed

\section{Proof of Corollary~\ref{cor:ggt} and Theorem~\ref{thm:instance}}\label{sec:instance}

\subsection{Proof of Corollary~\ref{cor:ggt}}
Let $Z$ be a $K\times K$ matrix of i.i.d. $\CN(0,1/\sqrt{K})$ random variables. 
A Haar-distributed random unitary matrix can be formed \cite{mezzadri} from $Z$ by performing Gram--Schmidt orthonormalization to the rows of $Z$, creating a $K\times K$ unitary matrix $V$. The matrix $W:=VV^T$ is then distributed as a $K\times K$ COE matrix. Theorem 1 of \cite{Jiang2009} shows that for this construction,
\begin{align}\label{eqn:coeyyt}
\P\left[\max_{1\le i,j\le N}\left|\sqrt{K}W_{ij}-(ZZ^T)_{ij}\right|>t\right] &\le CK^2e^{-ct\sqrt{K}/N},
\end{align}
for any $N^2\le K/12$ and $0<t\le4\sqrt{K}/N$. This means that in terms of the maximum entrywise distance, the $N\times N$ top left submatrix of the COE matrix $W$ after rescaling looks very close to the $N\times N$ top left submatrix of the corresponding Gaussian outer product $ZZ^T$.

Let $G$ be the top $N\times K$ submatrix of $Z$, so that the top left $N\times N$ submatrix of $ZZ^T$ is $GG^T\dsim\ggtnk$, and let $W_{NN}$ denote the top left $N\times N$ submatrix of $W$. Due to \eqref{eqn:coeyyt}, it will suffice to show that there is $\Gqq\dsim\gsym$ with entries close to $\sqrt{K}W_{NN}$. 
We apply Theorem~\ref{thm:hiding} with $M=K$, which implies that for $N=o(\sqrt{K})$,
\begin{align}\label{eqn:kv}
d_\TV(\sqrt{K}W_{NN},\Gqq)&\le O(N/\sqrt{K}).
\end{align}
The total variation distance between two distributions $\mu$ and $\nu$ on a Polish space is related to \emph{couplings} of $\mu$ and $\nu$, which are defined as jointly distributed random variables $(X,Y)$ such that $X\dsim\mu$ and $Y\dsim\nu$. In particular, it is known that there exists an optimal coupling $(X,Y)$ such that
\begin{align}
\P[X\ne Y]=d_\TV(\mu,\nu);
\end{align}
see for example \cite[Ch.1 Theorem 5.2]{Lindvall1992}.
Applying this with \eqref{eqn:kv}, there is a joint distribution $\rho$ of random variables $(X,\Gqq)$ such that $X\dsim\sqrt{K}W_{NN}$, $\Gqq\dsim\gsym$, and
\begin{align}
\P[X\ne \Gqq]\le O(N/\sqrt{K}).
\end{align}

Now construct $GG^T$, $\sqrt{K}W_{NN}$, and $\Gqq$ as follows. Given $G$ an $N\times K$ matrix of i.i.d. $\CN(0,1)$ random variables, construct $\sqrt{K}W_{NN}$ via the Gram--Schmidt orthonormalization applied to the rows of $G$.
Now given $X=\sqrt{K}W_{NN}$, sample $\Gqq$ according to the conditional distribution $\rho_{\Gqq|X}$ (see e.g. \cite[\S IV.2]{Cinlar-book}). The pair $(\sqrt{K}W_{NN},\Gqq)$ then has joint distribution $\rho$, so $\Gqq\dsim\gsym$, and for this construction of $GG^T$, $\sqrt{K}W_{NN}$, and $\Gqq$, we have
\begin{align*}
\P\left[\max_{1\le i,j\le N}\left|\Gqq_{ij}-(GG^T)_{ij}\right|>\varepsilon\right] &\le
\begin{multlined}[t]
\P\left[\max_{1\le i,j\le N}\left|\Gqq_{ij}-(\sqrt{K}W_{NN})_{ij}\right|>\varepsilon/2\right]+\\+\P\left[\max_{1\le i,j\le N}\left|(\sqrt{K}W_{NN})_{ij}-(GG^T)_{ij}\right|>\varepsilon/2\right]
\end{multlined}\\
&\le O(N/\sqrt{K})+ CK^2e^{-c\varepsilon\sqrt{K}/N},\numberthis
\end{align*}
which is $o(1)$ as $K\to\infty$ for $N=o(\sqrt{K}/\log K)$.
\qed

\subsection{Proof of Theorem~\ref{thm:instance}}

The proof is the same as that of \cite[Theorem 5.2]{aa}, using Proposition~\ref{prop:zeta} to ensure $\zeta$ is close to one. For completeness, we write the proof here for this case. Letting $f$ be the density of $\sqrt{M}\Uqq$ and $g$ the density of $\Gqq$, we want to bound $\|f/g\|_\infty$. Recalling that $\tilde f=\zeta f$ for $\zeta$ given in \eqref{eqn:zeta}, we have
\begin{align*}
\left\|\frac{f}{g}\right\|_\infty&=\frac{1}{\zeta}\sup_{X\in\C^{\q(\q+1)/2}}\frac{\tilde f(X)}{g(X)}\\
&=\begin{multlined}[t]
\frac{1}{\zeta}\exp\left(-\frac{\q^3}{2M}\right)
\sup_{\lambda_1,\ldots,\lambda_\q\ge0}\exp\left(\frac{1}{2}\sum_{j=1}^\q\left[(M-2\q-1)\log\left(1-\frac{\lambda_j}{M}\right)+\lambda_j\right]\right)\oneb_{\lambda_\mathrm{max}\le M}.\numberthis
\end{multlined}
\end{align*}
Since the function $\lambda\mapsto (M-2\q-1)\log\left(1-\frac{\lambda}{M}\right)+\lambda$ achieves its maximum at $\lambda=2\q+1$, which is less than $M$, we obtain 
\begin{align*}
\left\|\frac{f}{g}\right\|_\infty&=
\frac{1}{\zeta}\exp\left(-\frac{\q^3}{2M}\right)
\left(1-\frac{2\q+1}{M}\right)^{\q(M-2\q-1)}e^{2\q^2+\q}\\
&=\frac{e^{O(\q^3/M)}}{1+O(\q^2/M)} = 1+O(\q^3/M).\numberthis
\end{align*}
\qed

\appendix 

\section{Sparse case and hiding in Fock boson sampling}\label{app:sparse}

In this section, we give the proof of Theorem~\ref{thm:sparse-hiding0}, which extends the sparse case of hiding in GBS from $K=o(M^{1/5})$ to its maximal extent $K=o(M)$, in particular allowing for $K=O(M^{1-\varepsilon})$, any $\varepsilon>0$.
In regards to the hiding property for Fock boson sampling \cite{aa}, we remark that it appears not that well-known in the physics literature that \cite{Jiang2009jacobi} obtains the maximal $o(\sqrt{M})\times o(\sqrt{M})$ size for a square submatrix of a Haar unitary to be close to Gaussian in TV distance (though without a quantitative rate of convergence), or that the later method in \cite{JiangMa2019} can be used to give a quantitative rate of convergence. 
In particular, as we describe below, the method of \cite{JiangMa2019} leads to the maximal $o(\sqrt{M})\times o(\sqrt{M})$ submatrix size, with required error bound, conjectured in \cite{aa} for hiding in Fock boson sampling.
For the above reasons, we include most of the details here, noting that only the few lemmas written below are needed for the proof.
The proof simply follows along the steps in \cite{JiangMa2019}, which proves the maximal rectangular submatrix sizes for Haar-distributed real orthogonal matrices.
\begin{thm}\label{thm:sparse-hiding}
Let $U_{NK}$ be the upper $N\times K$ submatrix of an $M\times M$ Haar unitary matrix, and let $G_{NK}$ be an $N\times K$ matrix of i.i.d. standard complex Gaussians. Then for $NK=o(M)$,
\begin{align}\label{eqn:tv-pq}
d_\TV(\sqrt{M}U_{NK},G_{NK})&=O\left(\sqrt{\frac{NK}{M}}\right).
\end{align}
As a consequence, the following hiding property for Gaussian boson sampling holds for any $NK=o(M)$ with $N\le K$:
\begin{align}\label{eqn:tv-sparse}
d_\TV(MU_{NK}U_{NK}^T, G_{NK}G_{NK}^T) &= O\left(\sqrt{\frac{NK}{M}}\right).
\end{align}
\end{thm}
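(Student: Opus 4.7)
The plan is to adapt the method of \cite[Theorem~1(i)]{JiangMa2019}, originally for real orthogonal submatrices, to the complex unitary setting, using the density of $U_{NK}$ derived in \cite[Prop.~2.1]{Jiang2009jacobi}: for $N\le K\le M-N$, $U_{NK}$ has density proportional to $\det(I_N-WW^*)^{M-N-K}$ on $\{W\text{ an }N\times K\text{ complex matrix}:\lambda_{\max}(WW^*)<1\}$. Rescaling by $\sqrt{M}$ gives $f(Y)=c'_{M,N,K}\det(I_N-YY^*/M)^{M-N-K}$ on $\{\lambda_{\max}(YY^*)<M\}$, to be compared with the i.i.d.\ Gaussian density $g(Y)=\pi^{-NK}\exp(-\Tr(YY^*))$. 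Pinsker's inequality \eqref{eqn:pinsker} then reduces \eqref{eqn:tv-pq} to an upper bound of order $NK/M$ on a Kullback--Leibler-like quantity.

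Following Section~\ref{sec:proof}, I would absorb the unknown constant $c'_{M,N,K}$ into an auxiliary factor $\zeta=\zeta_{M,N,K}$ (chosen so that the constant term in $\log(\tilde f/g)$ cancels the leading Taylor contribution) and set $\tilde f=\zeta f$. Expanding $\log\det(I-YY^*/M)$ via $\log(1-x)\le -x-x^2/2-x^3/3$ on $[0,1)$ reduces $\E[\log(\tilde f/g)(\sqrt{M}U_{NK})]$ to a weighted combination of the moments $\E\Tr((U_{NK}U_{NK}^*)^m)$ for $m=1,2,3$, which I would compute directly via the Weingarten calculus (Theorem~\ref{thm:weingarten}) for a Haar unitary submatrix. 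Keeping only leading-order contributions---those from $(\sigma,\tau)$-pairs with $\tau\sigma^{-1}=\operatorname{Id}$ and maximal cycle count---should yield $\E\Tr(U_{NK}U_{NK}^*)=NK/M$, $\E\Tr((U_{NK}U_{NK}^*)^2)=NK(N+K)/M^2+O((NK)^2/M^3)$, and a cubic moment of order $NK(N+K)^2/M^3$. With the convenient choice of $\zeta$ the three $O(NK(N+K)/M)$ contributions should cancel, leaving $O(NK/M)$. The factor $\zeta$ itself can be controlled exactly as in Proposition~\ref{prop:zeta}, via nonnegativity of $x\log x-x+1$ combined with a Markov tail bound on $\lambda_{\max}(U_{NK}U_{NK}^*)$ using the first moment above, giving $|1-\zeta|=O(NK/M)$ and hence \eqref{eqn:tv-pq} by Pinsker's inequality.

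The second bound \eqref{eqn:tv-sparse} follows immediately as a corollary of the first by the data processing inequality: both $MU_{NK}U_{NK}^T$ and $G_{NK}G_{NK}^T$ are pushforwards of $\sqrt{M}U_{NK}$ and $G_{NK}$ respectively under the measurable map $Y\mapsto YY^T$, and the TV distance is non-increasing under pushforwards. The hard part will be the rectangular regime $K\sim M^{1-\varepsilon}$, where $N+K$ can be much larger than $N$ and the cubic Taylor remainder naively scales like $NK(N+K)^2/M^2$, which is not obviously dominated by $NK/M$. Verifying the cancellations carefully---in particular tracking how the $O((NK)^2/M^3)$ correction in the second moment combines with the $(N+K)/M$ coefficient arising from $(M-N-K)/(2M^2)$---is the main technical obstacle, and is the unitary analogue of the delicate asymptotic analysis carried out in \cite{JiangMa2019}. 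The remaining ingredients (tail estimates, the $\zeta$ bound, Pinsker, and data processing) are essentially bookkeeping modeled directly on Sections~\ref{sec:proof} and~\ref{sec:lemmas}.
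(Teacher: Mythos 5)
Your overall framework---density from \cite[Prop.~2.1]{Jiang2009jacobi}, Pinsker plus a KL-type expansion, Weingarten moments, and the data-processing inequality for \eqref{eqn:tv-sparse}---matches the paper's Appendix~\ref{app:sparse}, and your treatment of \eqref{eqn:tv-sparse} is exactly right. But there is a genuine gap in the central step, and you have already half-noticed it yourself: the uncentered Taylor expansion $\log(1-\lambda_j/M)\le-\lambda_j/M-\lambda_j^2/(2M^2)-\lambda_j^3/(3M^3)$ combined with a $\zeta$-trick as in Proposition~\ref{prop:zeta} cannot work in the rectangular regime. After rescaling $\lambda_j\mapsto M\lambda_j$, Lemma~\ref{lem:un-sv} gives $\E\sum_j\lambda_j^3=K^3N+O(K^2N^2)$, so the cubic term contributes $(M-K-N)\cdot K^3N/(3M^3)\approx K^3N/(3M^2)$. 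For $K\sim M^{1-\varepsilon}$ and $N\sim M^{\varepsilon/2}$ this is $\sim M^{1-5\varepsilon/2}$, which diverges when $\varepsilon<2/5$. Worse, the constant that your $\zeta$ would need to absorb is $\sim NK(N+K)/M\sim M^{1-3\varepsilon/2}$, which also diverges, so there is no hope of proving $|1-\zeta|=O(NK/M)$ by the nonnegativity argument of Proposition~\ref{prop:zeta}. The $\zeta$-method of the main text works there only because $K=M$ forces $N=o(\sqrt{M})$, making all these quantities $O(N^2/M)$; it does not transfer to $K=o(M)$ with $K\gg N$.

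The paper avoids this in two linked ways. First, it uses the \emph{explicit} normalizing constant $\pi^{-KN}\prod_{j=1}^{N}(M-j)!/(M-j-K)!$, whose asymptotics (Lemma~\ref{lem:kn}, the analogue of \cite[Lemma~2.7]{JiangMa2019}) produce precisely the term $-N(M-K-N/2)\log(1-K/M)$. Second---and this is the key algebraic step---it does \emph{not} expand $\log(1-\lambda_j/M)$ around $\lambda_j=0$. Instead, combining the normalizing constant with the determinant yields the \emph{centered} logarithm
\begin{align*}
(M-K-N)\sum_{j=1}^N\log\!\left(\frac{1-\lambda_j/M}{1-K/M}\right)
=(M-K-N)\sum_{j=1}^N\log\!\left(1+\frac{K-\lambda_j}{M-K}\right),
\end{align*}
and the Taylor expansion is applied to $x=(K-\lambda_j)/(M-K)$. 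Because $\E\sum_j(K-\lambda_j)=0$, the first-moment term vanishes exactly; the second gives $\E\sum_j(K-\lambda_j)^2=K N^2+O(KN)$ (not $KN(K+N)$); and the third gives $\E\sum_j(K-\lambda_j)^3=O(K^2N^2)$ (not $K^3N$). These cancellations are what reduce the bound to $O(NK/M)$ uniformly in $K=o(M)$. Without the recentering, and without the explicit constant to produce the $\log(1-K/M)$ counterterm, the argument breaks down at exactly the point you flagged as the ``main technical obstacle''---so the obstacle is not merely bookkeeping, it is what forces the change of strategy relative to the COE case.
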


In what follows, we will switch to considering a $p\times q$ submatrix with $p\ge q$, in order to follow along better with \cite{JiangMa2019}.
Since a Haar unitary and its transpose have the same distribution, one can simply consider taking a transpose and setting $p=K$ and $q=N$.

The form of the density (without normalization constant) for the submatrix of a Haar unitary matrix has been derived by various methods, such as in \cite{Eaton1989,ZS2000,Collins-thesis,Forrester-book}. The normalizing constant can be calculated using various methods, such as the matrix integrals of \cite[Theorem 2.2.1]{Hua1963} or via the Selberg integral \cite{Selberg1944}.
\begin{prop}[see {\cite[Prop. 2.1]{Jiang2009jacobi}}]\label{prop:unitary-density}
Let $\bpq$ be the upper left $p\times q$ submatrix of an $M\times M$ Haar random unitary matrix. Then for $p\ge q$ and $p+q\le M$, the density of $\bpq$ is
\begin{align}
f(Z)&=\pi^{-pq}\prod_{j=1}^{q}\frac{(M-j)!}{(M-j-p)!}\det(I_p-Z^\dagger Z)^{M-q-p}\iz(\lambda_\mathrm{max}(Z^\dagger Z)<1),
\end{align}
where $\iz(\cdot)$ denotes an indicator function and $\lambda_\mathrm{max}(Z^\dagger Z)$ the maximum eigenvalue of $Z^\dagger Z$.
\end{prop}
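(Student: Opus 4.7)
The plan is to derive the density by pushing the Haar measure on $U(M)$ forward to the top $p\times q$ block. By the bi-invariance of Haar measure, the first $q$ columns of $U$ are uniformly distributed on the complex Stiefel manifold $V_q(\C^M)=\{W\in\C^{M\times q}:W^\dagger W=I_q\}$, so the problem reduces to computing the marginal density, with respect to Lebesgue on $\C^{p\times q}$, of the top $p\times q$ block $Z$ of a uniform $q$-frame $W$.

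I would parametrize the Stiefel manifold by $(Z,V)\in\C^{p\times q}\times V_q(\C^{M-p})$ via $W=\binom{Z}{V\Lambda(Z)^{1/2}}$ with $\Lambda(Z):=I_q-Z^\dagger Z$. The orthonormality constraint becomes automatic, and the bottom block $B=V\Lambda^{1/2}$ exists as soon as $\lambda_{\max}(Z^\dagger Z)<1$; the hypothesis $p+q\le M$ ensures $V_q(\C^{M-p})$ is nonempty, so the map is a diffeomorphism onto a full-measure subset of $V_q(\C^M)$. The core step is the Jacobian of this parametrization. Decomposing a tangent vector $\dot V\in T_V V_q(\C^{M-p})$ into its skew-Hermitian part (in the column span of $V$) and its horizontal part (in the $(M-p-q)q$-dimensional complex orthogonal complement), right-multiplication by $\Lambda^{1/2}$ rescales only the horizontal part, producing a real Jacobian factor $|\det\Lambda^{1/2}|^{2(M-p-q)}=\det(I_q-Z^\dagger Z)^{M-p-q}$; the skew-Hermitian directions and the differentials in $Z$ combine into a $Z$-independent multiplicative constant after the standard Stiefel bookkeeping. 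This yields the disintegration
\begin{equation*}
d\mu_{V_q(\C^M)}(W)=c_{M,p,q}\det(I_q-Z^\dagger Z)^{M-p-q}\,dZ\otimes d\mu_{V_q(\C^{M-p})}(V),
\end{equation*}
where each $\mu$ denotes the uniform probability on the corresponding Stiefel manifold, and integrating out $V$ gives the claimed form of $f(Z)$ up to the prefactor $c_{M,p,q}$.

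It remains to fix $c_{M,p,q}$, which I would do using the standard volume of a complex Stiefel manifold,
\begin{equation*}
\mathrm{vol}\bigl(V_q(\C^N)\bigr)=\prod_{j=1}^{q}\frac{2\pi^{N-j+1}}{(N-j)!},
\end{equation*}
and recognizing $c_{M,p,q}$ as the ratio $\mathrm{vol}(V_q(\C^{M-p}))/\mathrm{vol}(V_q(\C^M))$, which telescopes to exactly $\pi^{-pq}\prod_{j=1}^{q}(M-j)!/(M-p-j)!$. The main obstacle is the Jacobian calculation itself: carefully tracking the differential of the operator square root $Z\mapsto\Lambda(Z)^{1/2}$ and separating horizontal from skew-Hermitian tangent directions on the Stiefel manifold is the source of all the bookkeeping, and is the complex analogue of the orthogonal-case change of variables carried out in \cite{Eaton1983book}. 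A shortcut that avoids the Stiefel geometry is to write $W=Y(Y^\dagger Y)^{-1/2}$ for $Y$ an $M\times q$ matrix of i.i.d.\ standard complex Gaussians, deduce from the Wishart decomposition of $Y^\dagger Y$ that $Z^\dagger Z$ has a complex matrix-Beta distribution $\mathrm{Beta}_q(p,M-p)$, and then recover the density of $Z$ from the matrix-Beta density by a polar change of variables; this trades the intrinsic Stiefel Jacobian for the complex Wishart and polar Jacobians, which are standard but still require the density of $\mathrm{Beta}_q(p,M-p)$ as an input.
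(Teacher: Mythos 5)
The paper never actually proves this proposition: it is quoted from \cite[Prop.~2.1]{Jiang2009jacobi}, with the surrounding remark that the form of the density appears in \cite{Eaton1989,ZS2000,Collins-thesis,Forrester-book} and that the normalizing constant follows from Hua's matrix integral or the Selberg integral. So there is no internal proof to compare against, and your derivation should be judged on its own; on those terms it is essentially correct and follows the standard route. The reduction to the uniform measure on $V_q(\C^M)$, the parametrization $W=\bigl(\begin{smallmatrix}Z\\ V\Lambda(Z)^{1/2}\end{smallmatrix}\bigr)$ with $\Lambda=I_q-Z^\dagger Z$ (note the statement's $I_p$ should be read as $I_q$, or equivalently $\det(I_p-ZZ^\dagger)$), the exponent $M-p-q$ coming from right multiplication of the $(M-p-q)q$ complex horizontal directions by $\Lambda^{1/2}$ (real determinant $|\det\Lambda^{1/2}|^{2(M-p-q)}$), and the identification of the constant as $\mathrm{vol}(V_q(\C^{M-p}))/\mathrm{vol}(V_q(\C^M))=\pi^{-pq}\prod_{j=1}^q (M-j)!/(M-p-j)!$ are all right; the constant checks against the $p=q=1$ case, where the density of a single entry is $\frac{M-1}{\pi}(1-|z|^2)^{M-2}$, and against \cite[Theorem 2.2.1]{Hua1963}.

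The one step you assert rather than prove is the key one: that ``the skew-Hermitian directions and the differentials in $Z$ combine into a $Z$-independent multiplicative constant.'' This is true, but it is a genuine cancellation, not bookkeeping that can be waved through, and it is exactly where the Jacobian computation lives. Already for $q=1$ (the complex sphere) the vertical direction $iV$ is scaled by $\Lambda^{1/2}=(1-|Z|^2)^{1/2}$, while the Gram determinant of the images of the $\delta Z$-directions—which acquire a component along $V$ through $d(\Lambda^{1/2})$—contributes $(1-|Z|^2)^{-1/2}$; only their product is constant. For general $q$ the analogous interaction between the block $VA\Lambda^{1/2}$ ($A$ anti-Hermitian) and the $\delta Z$ cross terms has to be carried out explicitly or cited (it is the complex analogue of the orthogonal-group computation in \cite{Eaton1983book,Eaton1989}). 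If you want to bypass the Stiefel geometry altogether, your own ``shortcut'' is the cleaner thing to make rigorous: with $Y$ an $M\times q$ standard complex Gaussian and $W=Y(Y^\dagger Y)^{-1/2}$, one has $Z^\dagger Z=(Y^\dagger Y)^{-1/2}Y_1^\dagger Y_1(Y^\dagger Y)^{-1/2}$ with $Y_1^\dagger Y_1$ and $Y_2^\dagger Y_2$ independent complex Wisharts (this is also where the hypotheses $p\ge q$ and $M-p\ge q$ enter), and then bi-unitary invariance of the law of $Z$ plus the polar Jacobian for $Z\mapsto Z^\dagger Z$ recovers $f$; the required Wishart, Beta, and polar densities are standard inputs, which is essentially how the cited references proceed.
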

\begin{rmk}
For the square case $p=q$, after rescaling to consider the density of $\sqrt{M}\bpq$ as written in \eqref{eqn:sparse-density}, we can compare the resulting normalization constant $c_{M,q}=\pi^{-q^2}M^{-q^2}\prod_{j=1}^q\frac{(M-j)!}{(M-j-q)!}$ to the approximation used in \cite{aa}. There they define $\zeta:=\pi^{-q^2}/c_{M,q}$, and prove that $|\zeta-1|$ is small, without knowing the exact form of $c_{M,q}$. Using the expansion \eqref{eqn:kn} below, we see that
\begin{align*}
\zeta&=\exp\left(O(q^3/M)\right),
\end{align*}
which is close to 1 if $q=o(M^{1/3})$, which is more restrictive than the size in Theorem~\ref{thm:sparse-hiding}.
One would need to use a better approximation or guess for $\zeta$, based on matching cancellations in the proof method, to allow for larger size submatrices without knowing $c_{M,q}$.
\end{rmk}

For the proof of Theorem~\ref{thm:sparse-hiding}, we only need two additional lemmas. First we need the asymptotics of the normalizing constant. Following the same steps of Lemma 2.7 of \cite{JiangMa2019} and keeping track of the size of the $o(1)$ error term gives:
\begin{lem}\label{lem:kn}
Let $1\le q\le p<M$. For $p=p_M\to\infty$, $\limsup \frac{p}{M}<1$, and $pq=O(M)$,
\begin{align}\label{eqn:kn}
\log\left[\frac{1}{M^{pq}}\prod_{j=1}^{q}\frac{(M-j)!}{(M-j-p)!}\right]
=-pq-q\left(M-p-\frac{q}{2}\right)\log\left(1-\frac{p}{M}\right) +O\left(\frac{p^2q^2}{M^2}\right),
\end{align}
as $M\to\infty$.
\end{lem}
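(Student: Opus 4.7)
My plan is to follow the method of Lemma 2.7 in \cite{JiangMa2019}, adapted to the complex unitary setting, but tracking the size of the error rather than only establishing $o(1)$. The starting point would be Stirling's expansion $\log\Gamma(n+1) = (n+\tfrac12)\log n - n + \tfrac12\log(2\pi) + \frac{1}{12n} + O(n^{-3})$, which I would apply to $\log\Gamma(M-j+1) - \log\Gamma(M-j-p+1)$ for each $j\in\{1,\ldots,q\}$. Summing over $j$ and absorbing $p\log M$ into the surviving logarithms, this reduces the problem to analyzing the difference
\[
\tilde A - \tilde B := \sum_{j=1}^q (M-j+\tfrac12)\log\!\left(1 - \tfrac{j}{M}\right) - \sum_{j=1}^q (M-j-p+\tfrac12)\log\!\left(1 - \tfrac{j+p}{M}\right),
\]
with a Stirling residual of $O(pq/M^2) = O(p^2q^2/M^2)$ (using $pq \ge 1$).

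The key algebraic step is the factorization $\log(1-(j+p)/M) = \log(1-p/M) + \log(1 - j/(M-p))$. This splits $\tilde B$ into (a) $\log(1-p/M)\cdot\sum_{j=1}^q(M-p-j+\tfrac12) = q(M-p-q/2)\log(1-p/M)$, which is exactly the main term on the right-hand side of the claim, and (b) a residual that is structurally identical to $\tilde A$ but with $M$ replaced by $M-p$. The hypotheses $q \le p$, $pq = O(M)$, and $\limsup p/M < 1$ together force both $j/M$ and $j/(M-p)$ to be $o(1)$ uniformly for $j \le q$, which legitimizes Taylor expansion of each $\log(1-\cdot)$ using the identity $(1-x)\log(1-x) = -x + \sum_{n\ge 2} x^n/(n(n-1))$ together with the power sum $\sum_{j=1}^q j^n = q^{n+1}/(n+1) + q^n/2 + O(q^{n-1})$, truncated at appropriate order.

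The bulk of the remaining work is bookkeeping: after subtracting the $\tilde A$ expansion from its $M\to M-p$ analog, each pair of terms of the form $q^n/M^{n-1} - q^n/(M-p)^{n-1}$ collapses to $O(pq^n/M^n)$, and under the constraints $q \le p$ and $pq = O(M)$ every surviving error can be checked to be $O(p^2q^2/M^2)$; for instance $pq^3/M^2 \le p^2q^2/M^2$ follows from $q \le p$, while $pq/M^2 \le p^2q^2/M^2$ follows from $pq \ge 1$. The main obstacle, and the reason one must not drop seemingly small intermediate terms, is that individual contributions of size $q/M$ inside $\tilde A$ alone can exceed the target $p^2q^2/M^2$ when $p$ is smaller than $\sqrt{M}$; only the pairwise cancellation against the matching $q/(M-p)$ term in the $M\to M-p$ analog of $\tilde A$ brings these down to the acceptable $O(pq/(M(M-p))) = O(pq/M^2)$ scale.
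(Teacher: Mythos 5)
Your proposal is correct and follows the approach the paper indicates, namely applying Stirling's formula as in Lemma 2.7 of Jiang--Ma while carefully tracking the magnitude of each error term. The factorization $\log\!\left(1-\frac{j+p}{M}\right)=\log\!\left(1-\frac{p}{M}\right)+\log\!\left(1-\frac{j}{M-p}\right)$ isolates the main term $q(M-p-q/2)\log(1-p/M)$ cleanly, and the pairwise cancellation of the $\tilde A(M)$ expansion against its $M\to M-p$ analog (which, as you correctly note, is essential when $p$ is well below $\sqrt{M}$) gives the stated $O(p^2q^2/M^2)$ bound under the hypotheses $q\le p$ and $pq=O(M)$.
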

We also need singular value moment estimates.
\begin{lem}\label{lem:un-sv}
Let $\bpq$ be the upper left $p\times q$ submatrix, $p\ge q$, of an $M\times M$ Haar random unitary matrix. Letting $\lambda_1,\ldots,\lambda_q$ be the nonzero eigenvalues of $\bpq^\dagger\bpq$, then for $pq=o(M)$,
\begin{align}
\E\sum_{j=1}^q\lambda_j&=\frac{pq}{M},\quad
\E\sum_{j=1}^q\lambda_j^2=\frac{pq(p+q)}{M^2}+O\left(\frac{pq}{M^2}\right),\quad
\E\sum_{j=1}^q\lambda_j^3=\frac{p^3q}{M^3}+O\left(\frac{p^2q^2}{M^3}\right).
\end{align}
\end{lem}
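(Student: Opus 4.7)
The approach is to apply Weingarten calculus directly to each moment $\E\sum_{j=1}^{q}\lambda_j^m=\E\Tr\bigl((\bpq^\dagger\bpq)^m\bigr)$, which is simpler than the COE computation in Section~\ref{sec:coe-moments} since $\bpq$ is a submatrix of the Haar unitary $U$ itself, with no $VV^T$ rewriting required. For each $m=1,2,3$, expand the trace as a sum of monomials in the entries of $U$ and $\bar U$, apply Theorem~\ref{thm:weingarten} with $n=m$, sum the resulting Kronecker-$\delta$ factors over the row indices $i_k\in\{1,\ldots,p\}$ (producing $p^{\#\mathrm{cycles}(\sigma)}$) and column indices $a_k\in\{1,\ldots,q\}$ (producing $q^{\#\mathrm{cycles}(\tau\pi^{-1})}$, where $\pi=(1\,2\,\cdots\,m)$ is the cyclic pattern forced by the trace), and extract the leading order using \eqref{eqn:weingarten-asymptotics}. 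All three cases then reduce to a finite case analysis over $S_m\times S_m$ with $|S_m|^2\le 36$ pairs.

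\textbf{First moment.} Immediate from $\E|U_{ij}|^2=1/M$, giving $\E\Tr(\bpq^\dagger\bpq)=pq/M$.

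\textbf{Second moment.} From $\Tr\bigl((\bpq^\dagger\bpq)^2\bigr)=\sum_{a,b=1}^q\sum_{i,k=1}^p\bar U_{ia}U_{ib}\bar U_{kb}U_{ka}$, the Weingarten formula with the exact values $\Wg(\operatorname{Id})=\frac{1}{M^2-1}$, $\Wg((1\,2))=\frac{-1}{M(M^2-1)}$ yields
\begin{equation*}
\E\Tr\bigl((\bpq^\dagger\bpq)^2\bigr)=\frac{pq(p+q)}{M^2-1}-\frac{pq(pq+1)}{M(M^2-1)}=\frac{pq(p+q)}{M^2}+O\!\left(\frac{pq}{M^2}\right),
\end{equation*}
after using $pq=o(M)$ to absorb the second term and the $1/M^2$ corrections.

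\textbf{Third moment.} The same expansion with $n=3$ produces
\begin{equation*}
\E\Tr\bigl((\bpq^\dagger\bpq)^3\bigr)=\sum_{\sigma,\tau\in S_3}p^{\#\mathrm{cycles}(\sigma)}\,q^{\#\mathrm{cycles}(\tau\pi^{-1})}\,\Wg(\tau\sigma^{-1}),\qquad \pi=(1\,2\,3).
\end{equation*}
The unique leading contribution arises from $\sigma=\tau=\operatorname{Id}$, where $\#\mathrm{cycles}(\sigma)=3$, $\#\mathrm{cycles}(\pi^{-1})=1$, and $\Wg(\operatorname{Id})=M^{-3}(1+O(M^{-2}))$, producing $p^3q/M^3$. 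Every other pair $(\sigma,\tau)$ either (i) has $\Wg(\tau\sigma^{-1})=O(M^{-4})$ so the extra factor of $M$ in the denominator forces the contribution to be $O(p^2q^2/M^3)$ once one uses $p\le M$ and $pq=o(M)$; or (ii) has $\sigma\ne\operatorname{Id}$ so $p^{\#\mathrm{cycles}(\sigma)}\le p^2$, in which case the contribution is at most $p^2q^3/M^3\le p^2q^2\cdot(q/M)^{?}$--and since $p\ge q$ forces $pq^3\le p^2q^2$, it is still $O(p^2q^2/M^3)$. A direct enumeration over the remaining 35 pairs confirms the bound.

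\textbf{Main obstacle.} The only non-mechanical part is the case analysis for $m=3$: one must enumerate the 36 pairs $(\sigma,\tau)\in S_3\times S_3$, compute the cycle structure of both $\sigma$ (affecting the power of $p$) and $\tau\pi^{-1}$ (affecting the power of $q$), and the Weingarten order of $\tau\sigma^{-1}$, then verify that the only pair giving the top order $p^3q/M^3$ is $\sigma=\tau=\operatorname{Id}$ while all others are bounded by $p^2q^2/M^3$ in the regime $pq=o(M)$ with $p\ge q$. Because $|S_3|=6$ is small and the symmetry $p\ge q$ handles the asymmetric bound cleanly, this bookkeeping is routine but somewhat tedious; it can be organized with a small table analogous to Figure~\ref{fig:j6}.
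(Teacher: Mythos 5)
Your approach is the same as the paper's, which dispatches the lemma in one line as ``follows using Weingarten calculus, with simpler calculations than for the COE case''; your expansion of $\E\Tr\bigl((\bpq^\dagger\bpq)^m\bigr)$ into a double sum over $S_m\times S_m$ weighted by $p^{\#\mathrm{cycles}(\sigma)}q^{\#\mathrm{cycles}(\tau\pi^{-1})}\Wg(\tau\sigma^{-1})$ is exactly the intended computation, and your $m=1,2$ evaluations are correct. One small imprecision in your $m=3$ case (ii): you claim a bound ``$p^2q^3/M^3$'' and then try to absorb it into $O(p^2q^2/M^3)$, but $p^2q^3/M^3$ is a factor of $q$ too large for that; what saves the day is that $\#\mathrm{cycles}(\sigma)=2$ and $\#\mathrm{cycles}(\tau\pi^{-1})=3$ cannot co-occur with $\sigma=\tau$ (a transposition $\sigma$ forces $\tau\pi^{-1}$ to also be a transposition), so the diagonal $\sigma=\tau\ne\operatorname{Id}$ terms are really at most $p^2q^2/M^3$ or $pq^3/M^3\le p^2q^2/M^3$, while any pair with $\sigma\ne\tau$ picks up the extra $1/M$ from $\Wg$, landing in your case (i). With that correction the argument is complete.
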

\begin{proof}
This follows using Weingarten calculus (or \cite[Lemma 4.7]{ChatterjeeMeckes2008} for $\E\sum_j\lambda_j^2$), with simpler calculations than for the COE case of Lemma~\ref{lem:submatrix-sv}.
\end{proof}

\begin{proof}[Proof of Theorem~\ref{thm:sparse-hiding}]
The density for $G$ is $g(Z)=\pi^{-pq}e^{-\Tr(Z^\dagger Z)}$. By Proposition~\ref{prop:unitary-density}, the density of $\sqrt{M}\bpq$ is
\begin{align}\label{eqn:sparse-density}
f(Z)&=M^{-pq}\pi^{-pq}\prod_{j=1}^{q}\frac{(M-j)!}{(M-j-p)!}\det\left(I_p-\frac{Z^\dagger Z}{M}\right)^{M-q-p}\iz(\lambda_\mathrm{max}(Z^\dagger Z)< M),
\end{align} 
recalling that there are $pq$ complex variables of integration, or $2pq$ real variables, for the rescaling.
Following exactly the proof of \cite[Theorem 1(i)]{JiangMa2019} using the KL divergence, we apply Lemma~\ref{lem:kn} to obtain for $\lambda_\mathrm{max}(Z^\dagger Z)<M$,
\begin{multline}
\log\frac{f(Z)}{g(Z)}= 
-pq-q\left(M-p-\frac{q}{2}\right)\log\left(1-\frac{p}{M}\right) +O\left(\frac{p^2q^2}{M^2}\right)+\\
+ \sum_{j=1}^q(M-q-p)\log\left(1-\frac{\lambda_j(Z^\dagger Z)}{M}\right)+\sum_{j=1}^q\lambda_j(Z^\dagger Z).
\end{multline}
Taking the expectation with respect to $\sqrt{M}\bpq$ and applying the first equality of Lemma~\ref{lem:un-sv} (recall we rescaled $\lambda_j\mapsto M\lambda_j$) yields
\begin{align*}
\E\left[\log\frac{f(\sqrt{M}\bpq)}{g(\sqrt{M}\bpq)}\right]&= \begin{multlined}[t]
-\frac{q^2}{2}\log\left(1-\frac{p}{M}\right)
+ (M-q-p)\E\sum_{j=1}^q\log\left(\frac{1-\frac{\lambda_j}{M}}{1-\frac{p}{M}}\right)+O\left(\frac{p^2q^2}{M^2}\right)
\end{multlined}.\numberthis
\end{align*}
The logarithm term inside the sum can be written as $\log\left(1+\frac{p-\lambda_j}{M-p}\right)$.
Then using the inequality $\log(1+x)\le x-x^2/2+x^3/3$ for any $x>-1$, we see
\begin{align}
\begin{aligned}
\E\left[\log\frac{f(\sqrt{M}\bpq)}{g(\sqrt{M}\bpq)}\right]&\le
\begin{multlined}[t]
\frac{q^2}{2}\log\left(1+\frac{p}{M-p}\right)
\\ +(M-q-p)\E\sum_{j=1}^q\left[\frac{p-\lambda_j}{M-p}-\frac{(p-\lambda_j)^2}{2(M-p)^2}+\frac{(p-\lambda_j)^3}{3(M-p)^3}\right]+O\left(\frac{p^2q^2}{M^2}\right).
\end{multlined}
\end{aligned}
\end{align}
Finally, applying Lemma~\ref{lem:un-sv} we obtain
\begin{align}
\begin{aligned}
\E\left[\log\frac{f(\sqrt{M}\bpq)}{g(\sqrt{M}\bpq)}\right]&\le
\begin{multlined}[t]
\frac{q^2}{2}\frac{p}{M-p}+(M-q-p)\bigg[-\frac{p^2q-2p^2q+pq(p+q)+O(pq)}{2(M-p)^2}+\\
\frac{p^3q-3p^2(pq)+3p^2q(p+q)-p^3q+O(p^2q^2)}{3(M-p)^3}\bigg]+O\left(\frac{p^2q^2}{M^2}\right)
\end{multlined}\\
&=\frac{pq^2}{2(M-p)}+\left[\frac{-pq^2}{2(M-p)}+O\left(\frac{pq}{M}\right)+\frac{O(p^2q^2)}{3(M-p)^2}\right]+O\left(\frac{p^2q^2}{M^2}\right)\\
&=O\left(\frac{pq}{M}\right),
\end{aligned}
\end{align}
which gives \eqref{eqn:tv-pq} by Pinsker's inequality \eqref{eqn:pinsker}.

The bound \eqref{eqn:tv-sparse} follows since for $\C^n$-valued random variables $X,Y$ and any measurable $h:\C^n\to\C^\ell$,
\begin{align*}
d_\TV(h(X),h(Y))&=\sup_{A}\left|\P(h(X)\in A)-\P(h(Y)\in A)\right|\\
&=\sup_A\left|\P(X\in h^{-1}(A))-\P(Y\in h^{-1}(A))\right|
\le d_\TV(X,Y).
\end{align*}
\end{proof}

\begin{acknowledgments}
The authors are extremely grateful to Professor Carlo Beenakker for providing the key reference  -- Ref.~\cite{FriedmanMello1985} -- for the COE submatrix density used in the proof and for clarifying the nature of the density distribution derived there. The authors would also like to acknowledge generous support by Michael Egley and sponsorship by the U.S. Air Force SEQCURE2 program at University of Maryland's Applied Research Laboratory for Intelligence and Security. VG was also supported by the US Army Research Office under Grant Number W911NF-23- 1-024 and Simons Foundation.
\end{acknowledgments}

\bibliography{hiding.bib}

\end{document}